\documentclass[11pt]{article}
\usepackage[utf8]{inputenc}

\usepackage{mdframed}
\usepackage{algorithm,algorithm,multicol}
\usepackage{pgfplots}

\usepackage{color-edits}

\addauthor[Alkis]{ak}{niceRed}
\addauthor[P]{pa}{blue}

\definecolor{Gred}{RGB}{219, 50, 54}
\definecolor{Ggreen}{RGB}{60, 186, 84}
\definecolor{Gblue}{RGB}{72, 133, 237}
\definecolor{Gyellow}{RGB}{247, 178, 16}
\definecolor{ToCgreen}{RGB}{0, 128, 0}
\definecolor{myGold}{RGB}{231,141,20}
\definecolor{myBlue}{rgb}{0.19,0.41,.65}
\definecolor{myPurple}{RGB}{175,0,124}
\definecolor{niceRed}{RGB}{153,0,0}
\definecolor{niceRed}{RGB}{190,38,38}
\definecolor{blueGrotto}{HTML}{059DC0}
\definecolor{royalBlue}{HTML}{057DCD}
\definecolor{navyBlueP}{HTML}{0B579C}
\definecolor{limeGreen}{HTML}{81B622}
\definecolor{nicePink}{RGB}{247,83,148}

\usepackage{cmap}
\usepackage[T1]{fontenc}
\usepackage{bm}
\usepackage{amsmath}
\usepackage{amsfonts}
\usepackage{amssymb}
\usepackage{amsbsy}
\usepackage{amsthm}
\usepackage{bbm}

\usepackage{graphicx, ucs}

\usepackage{subcaption}
\usepackage{rotating}
\usepackage{float}
\usepackage{tikz}

\usepackage{algorithm, caption}
\usepackage[noend]{algpseudocode}
\usepackage{listings}

\usepackage{enumitem}

\usepackage[pagebackref]{hyperref}
\hypersetup{
	colorlinks = true,
	urlcolor = {myPurple},
	linkcolor = {royalBlue},
	citecolor = {nicePink}
}
\usepackage[nameinlink]{cleveref}

\crefname{definition}{Definition}{Definitions}

\usepackage{multirow}
\usepackage{array}

\usepackage{chngcntr}

\counterwithin*{equation}{section}
\usepackage{chngcntr}
\usepackage{soul}

\usepackage{nicefrac}

\usepackage{fancyhdr}
\fancyheadoffset{0pt}
\def\compactify{\itemsep=0pt \topsep=0pt \partopsep=0pt \parsep=0pt}
\let\latexusecounter=\usecounter

\definecolor{myC}{rgb}{0, 255, 255}
\definecolor{myY}{rgb}{204, 204, 0}
\definecolor{myM}{rgb}{255, 0, 255}
\definecolor{secinhead}{RGB}{249,196,95}
\definecolor{lgray}{gray}{0.8}

\usepackage{tikz}
\usetikzlibrary{arrows.meta, positioning}

\newtheorem{theorem}{Theorem} 
\newtheorem*{theorem*}{Theorem} 
\newtheorem*{proposition*}{Proposition} 
\newtheorem{lemma}[theorem]{Lemma}
\newtheorem{claim}[theorem]{Claim}
\newtheorem{proposition}[theorem]{Proposition}
\newtheorem{fact}[theorem]{Fact}

\newtheorem{definition}[theorem]{Definition}
\newtheorem{remark}[theorem]{Remark}

\newtheorem{observation}[theorem]{Observation}

\usepackage[most]{tcolorbox}

\newcommand{\poly}{\textnormal{poly}}

\newcommand{\reals}{\mathbb R}

\newcommand{\eps}{\epsilon}
\usepackage{xcolor}

\newcommand{\calY}{\mathcal{Y}}

\def\<{\langle}
\def\>{\rangle}

\makeatletter
\renewenvironment{abstract}{
	\if@twocolumn
	\section*{\abstractname}
	\else 
	\begin{center}
		{\bfseries \large\abstractname\vspace{\z@}}
	\end{center}
	\quotation
	\fi}
{\if@twocolumn\else\endquotation\fi}
\makeatother

\title{CLS Hardness of 2D Min-Max Optimization\\
with Product Constraints}

\author{Alkis Kalavasis\footnote{Work done when AK was supported by the FDS Postdoctoral Fellowship at Yale University.}\\ Harvard University 
\and
Panagiotis Patsilinakos\footnote{PP has received support
under the program ``Investissements d’Avenir'' launched
by the French Government, project PSL JRC 2024 - N 2024 - 488.}\\
Paris Dauphine-PSL University\\
LAMSADE
\and
Stratis Skoulakis\\
Aarhus University
\and 
Manolis Zampetakis\\
Yale University
}

\date{}

\begin{document}

\maketitle
\thispagestyle{empty}

\begin{abstract}
We study the complexity of computing approximate fixed points of gradient descent-ascent (GDA) in min–max optimization problems with one dimensional min and max players and product constraints. For black-box access to the objective, we prove that any deterministic algorithm requires $\Omega(1/\sqrt{\epsilon})$ queries to find an $\epsilon$-approximate GDA fixed point. In the white-box setting, we establish CLS-hardness by reducing from End-of-Potential-Line. These results show that even the simplest min-max setting with one dimensional min and max players is at least as hard as CLS and therefore does not admit polynomial-time algorithms unless $\mathsf{CLS}\subseteq \mathsf{P}$.
\end{abstract}

\section{Introduction}

In this paper we explore the complexity of the computational task of the fundamental problem of computing local equilibria in min-max optimization problems of the form
\begin{equation}
\min_{x \in X} \max_{y \in Y} f(x,y) \label{eq:costis1}
\end{equation}
where $X$ and $Y$ are convex and compact constraint sets, and $f$ is a smooth
objective function. Tracing all the way back to Von Neumann’s celebrated minimax theorem \cite{vonNeumann1928minimax}
and the foundations of Game Theory and Online Learning \cite{Blackwell1956AnAO,10.5555/1137817,Shalev-Shwartz_Ben-David_2014},
the min-max problem has also found multiple applications in Machine learning such as generative adversarial networks models 
\cite{goodfellow2020generative}, 
reinforcement learning \cite{DaskalakisFosterGolowich2020,BaiJin2020,LiWeiChiChen2020}, 
and adversarial robustness \cite{madry2017towards}.
Min-max optimization has also played a central role in the development of other domains such as convex optimization \cite{zbMATH03069631, Adler}.

The most natural solution concept is a pair of strategies $(x^\star,y^\star)$ such that both players (approximately) satisfy the associated first-order Nash equilibrium conditions \cite{lin2020gradient,daskalakis2021complexity,diakonikolas2021efficient}, defined as:
\begin{definition}[First-Order Nash]\label{def:FirstOrderNash}
A pair $(x^\star, y^\star) \in X \times Y$ is an $\epsilon$-first order Nash equilibrium if
\[
(x - x^\star)^\top \nabla_x f(x^\star, y^\star) \geq -\epsilon ~~\mathrm{and}~~
(y - y^\star)^\top \nabla_y f(x^\star, y^\star) \leq \epsilon~~\forall (x,y) \in X \times Y\,.
\]
\end{definition}
It is well-known that, when $f(x,y)$ is a convex-concave function,
i.e., $f$ is convex in $x$ for all $y$ and it is concave in $y$ for
all $x$, then Problem~\eqref{eq:costis1} is guaranteed to have a solution,
under compactness of the constraint set~\cite{vonNeumann1928minimax,rosen1965existence}, while
computing a solution is amenable to convex programming. In fact, if $f$ is
$L$-smooth, the problem can be solved via first-order methods, which are
iterative and only access $f$ through its gradient. When $f$ is non-convex non-concave, Daskalakis, Skoulakis, and Zampetakis \cite{daskalakis2021complexity} proved that there is always a solution (i.e., a first-order Nash equilibrium) to Problem~\eqref{eq:costis1}. They also showed that if the constraints are coupled, i.e., $(x, y) \in S$ where $S$ does not necessarily have the form $S = X \times Y$, then finding such first-order Nash solutions is $\mathsf{PPAD}$-hard. These type of constraints are called coupled constraints and are very crucial in the hardness proof of \cite{daskalakis2021complexity}. Recent papers \cite{bernasconi2024role,anagnostides2025complexity} show a simpler proof of the same result and they clarify and emphasize the importance of these coupled constraints in the hardness results.

It has been further shown in \cite{daskalakis2021complexity} that an equivalent definition of \Cref{def:FirstOrderNash} is based
on approximate fixed points of the gradient descent-ascent map ($\mathsf{GDA}$); namely, \Cref{def:FirstOrderNash} amounts to
bounding the gradient mappings
\[
\max \left\{ \| x^\star - \Pi_X(x^\star - \nabla_x f(x^\star, y^\star))\|, 
\| y^\star - \Pi_Y(y^\star + \nabla_y f(x^\star, y^\star))\| \right\} 
\leq \epsilon'\,,
\]
for some appropriate $\epsilon' = \poly(\epsilon)$, where $\|\cdot \|$ is the $L_2$-norm and $\Pi(\cdot)$
 is the standard projection operator. 
In this paper, we explore the case of computing approximate fixed points of $\mathsf{GDA}$ when $X$ and $Y$ are subsets of the real line $\mathbb{R}$. In this case, the following basic question is still open:
\begin{center}
  \emph{Is there an algorithm for finding approximate $\mathsf{GDA}$ fixed points\\ that is polynomial 
  in $\log(1/\epsilon)$ when $X, Y \subseteq \mathbb{R}$?}\footnote{Observe, that if at least one of the $X$ or $Y$ has dimension at least $2$, then this problem is harder than computing KKT points of a minimization problem, which is known to require $\Omega(\poly(1/\epsilon))$ time \cite{vavasis1993black,fearnley2021complexity}.}
\end{center}

In this work, we answer this question both in the case where the access that we are given to $f$ is \textit{black-box}, i.e., there is an oracle that returns $f(x)$ and $\nabla f(x)$ given $x$, and in the case where we have \textit{white-box} access to $f$, i.e., we are given a Turing Machine that computes $f$ and $\nabla f$. For the black-box model we show that $\Omega(1/\sqrt{\epsilon})$ queries are needed for deterministic algorithms and for the white-box model we show that finding approximate $\mathsf{GDA}$ fixed points is $\mathsf{CLS}$-hard. {This result implies that in the white-box setting, the problem is highly unlikely to admit a polynomial-time algorithm, given our current understanding of computational complexity.}

\paragraph{Complexity classes ($\mathsf{CLS}$).}  For our \emph{white}-box result, we show $\mathsf{CLS}$-hardness. To present $\mathsf{CLS}$ in context, we briefly review and place it in the landscape of total search complexity.

$\mathsf{TFNP}$ is the subclass of total problems in $\mathsf{NP}$ \cite{MEGIDDO1991317}. $\mathsf{TFNP}$ is semantically defined because it is characterized by a global property: every valid instance is guaranteed to have a solution. Thus, it is unlikely to contain ``complete'' problems.\footnote{I.e., there is no unifying computational model do define reductions on.} However, it contains central, syntactically defined classes, such as $\mathsf{PLS}$ and  $\mathsf{PPAD}$. 

$\mathsf{PLS}$ contains local search problems, with polynomial-time computable neighboring improving solution and cost functions. The totality of problems in $\mathsf{PLS}$ is guaranteed by a potential argument, i.e. the fact that any strictly improving local search process in a discrete solution space must eventually terminate at a local optimum (these are problems that can be reduced to finding sink nodes in directed acyclic graphs). It was introduced in \cite{JOHNSON198879} and has been used for example to characterize the complexity of finding pure Nash equilibria in potential games \cite{10.1145/1007352.1007445}. $\mathsf{PPAD}$ was introduced in \cite{PAPADIMITRIOU1994498} to capture a family of problems whose totality is guaranteed by a parity argument (problems in $\mathsf{PPAD}$ can be reduced to finding a source or sink of graphs with in and out degree at most 1). $\mathsf{PPAD}$ has been used to characterize the complexity of problems in continuous domains, such as finding Nash equibria in 2 player games \cite{10.1145/1516512.1516516,doi:10.1137/070699652}. 

$\mathsf{CLS}$ (\emph{Continuous Local Search)} was defined in \cite{10.5555/2133036.2133098} to capture an interesting subclass of problems in  $\mathsf{PPAD\cap PLS}$: optimization problems over a continuous domain, where a potential function is minimized and access to a continuous polynomial-time improvement function is provided.
The breakthrough result of \cite{fearnley2021complexity} showed that $\mathsf{CLS} = \mathsf{PPAD} \cap \mathsf{PLS}$.

\paragraph{Further related work.}
The nonconvex-concave version of Problem~\eqref{eq:costis1} has also been explored \cite{lin2020gradient,ZhangXiaoSunLuo2020,XuZhangXuLan2020}. Here the goal is essentially to find an $\eps$-stationary point of $\Phi(\cdot) = \max_{y \in \calY} f(\cdot, y)$ which is nonconvex. However for any $x$, the maximization problem $\Phi(x)$ can be solved efficiently since $f$ is concave in $y$ and $\calY$ is a bounded convex set. Recently, tractable solution concepts, alternative to Definition \ref{def:FirstOrderNash} have been explored \cite{cai2024tractable}. These solution concepts resemble the notions of correlated equilibrium in classical game theory and they effectively bypass the computational hardness of  \Cref{def:FirstOrderNash}. 

\paragraph{Recent progress in high dimensions.} 
An exciting work by \cite{bernasconi2026complexity} studies the \emph{white}-box variant of the problem for product constraints in high dimensions, showing $\mathsf{PPAD}$-hardness. Their reduction relies on a high-dimensional construction and does not extend to constant dimension (such as $d=2$ which is our focus).
Later on, \cite{bernasconi2026queries} establishes exponential query lower bounds over product hypercubes in growing dimension (where the number of variables/dimensions is allowed to grow with the input size).
\cite{bernasconi2026quadratic} and \cite{anagnostides2026teamzerosum} establish $\mathsf{PPAD}$-hardness for approximate first-order stationary points of quadratic min-max objectives; the latter also establish $\mathsf{PPAD}$-completeness for approximate Nash equilibria in team zero-sum games. 
Although these results restrict the objectives to quadratic
polynomials, their hardness constructions still use a number
of variables that grows with the input size.
Our results concern a broader class of smooth objectives,
but establish $\mathsf{CLS}$-hardness and an
$\Omega(1/\sqrt{\epsilon})$ deterministic query lower bound
already on $[0,1]^2$, with one scalar variable per player. 
Our hardness results concern general smooth objectives;
the objective constructed in our reduction is piecewise
polynomial (obtained by bicubic interpolation), rather than a single polynomial of fixed degree
over the entire domain. 

\paragraph{Follow-up works.} Our CLS hardness result has been applied (via personal communication) to the study of the computational complexity of equilibrium points of electrostatic potentials, i.e., locations in space
where the electrostatic force arising from a collection of charged particles vanishes \cite{ghosh2025computing}.

\section{Preliminaries}
\label{sec:preliminaries}
\noindent In this work, we revisit the problem of computing fixed points of gradient descent-ascent of a function $f$ in two dimensions. Let us first define the dynamics:

\begin{definition}
[Projected Gradient Descent-Ascent]
\label{def:projectGrad}
For a closed and convex $K \subseteq \mathbb{R}^{d_1} \times \mathbb{R}^{d_2}$ and some continuously differentiable function 
$f : K \to \mathbb{R}$, we define the \emph{Unsafe Projected Gradient Descent/Ascent Dynamics with learning rate 1} 
as the map $F_{\mathrm{GDA}} : K \to \mathbb{R}^{d_1} \times \mathbb{R}^{d_2}$ defined as follows
\[
F_{\mathrm{GDA}}(x,y) \triangleq 
\begin{bmatrix}
\Pi_{K(y)} \big( x - \nabla_x f(x,y) \big) \\
\Pi_{K(x)} \big( y + \nabla_y f(x,y) \big)
\end{bmatrix}
\triangleq
\begin{bmatrix}
F_{\mathrm{GDA}x}(x,y) \\
F_{\mathrm{GDA}y}(x,y)
\end{bmatrix}
\]
for all $(x,y) \in K$, where 
\[
K(y) = \{ x' \mid (x',y) \in K \}
\quad \text{and} \quad
K(x) = \{ y' \mid (x,y') \in K \}.
\]
\end{definition}

\begin{remark}
In this work, we are interested in the two-dimensional case and so $d_1 = d_2 = 1$.
\end{remark}
Note that $F_{\mathrm{GDA}}$ is called ``unsafe'' because the projection happens individually for 
$x - \nabla_x f(x,y)$ and $y + \nabla_y f(x,y)$, thus $F_{\mathrm{GDA}}(x,y)$ may not lie in $K$. 
We also define the ``safe'' version $F_{\mathrm{sGDA}}$, which projects the pair 
$(x - \nabla_x f(x,y), \, y + \nabla_y f(x,y))$ jointly onto $K$. 
As shown in \cite{daskalakis2021complexity}, computing fixed points of 
$F_{\mathrm{GDA}}$ and $F_{\mathrm{sGDA}}$ are computationally equivalent so we stick to 
$F_{\mathrm{GDA}}$ which makes the presentation  cleaner.

\paragraph{Lipschitzness and smoothness.} Consider a continuously differentiable function $f:\mathcal{P}\rightarrow \mathbb{R}$, where $\mathcal P\in[0,1]^d$ is some polytope. Then, we say that $f$ is $L$-\emph{Lipschitz} if $|f(x)-f(y)|\leq L\lVert x- y\rVert_2$, for all $x,y$ and $M$-\emph{smooth} if $\lVert\nabla f(x) - \nabla f(y) \rVert_2\leq M \lVert x-y\rVert_2$, for all $x,y$.
\newline

{

For the white-box problem below, rational numbers are represented as
reduced fractions, with numerators and denominators
encoded in binary.
On input a rational point $(x,y)\in[0,1]^2$, the Turing machine $C_f$
returns the exact rational values of $f(x,y)$ and $\nabla f(x,y)$ within
$p(|C_f|+b)$ steps, where $|C_f|$ is its binary description length, $b$ is
the total bit length of the input coordinates, and $p$ is a sufficiently
large polynomial fixed independently of the instance.
We consider the promise version: these outputs agree with an
$L$-Lipschitz and $M$-smooth function defined on the entire real square
$[0,1]^2$.

\begin{tcolorbox}
[colback=gray!20, colframe=black, boxrule=0.5pt, arc=2mm, title=$\mathsf{2D-GDAFixedPoint}$]
\textbf{Input:} Rational parameters $\alpha>0$ and $L,M\geq 0$, and a
Turing machine $C_f$ satisfying the above requirements and evaluating
$f:[0,1]^2\to\mathbb R$ and its gradient
$\nabla f:[0,1]^2\to\mathbb R^2$ at rational points.
\\ \textbf{Output:} A rational point
$(x^\star,y^\star)\in[0,1]^2\cap\mathbb Q^2$ such that
$\|(x^\star,y^\star)-F_{\mathrm{GDA}}(x^\star,y^\star)\|<\alpha$.
\end{tcolorbox}

}

Furthermore, we present here the definition of a classical problem of TFNP \cite{goos2022further,fearnley2020unique}, called End-Of-Potential-Line, which is in spirit similar to End-of-a-Line problem (which characterizes the class $\mathsf{PPAD}$)\footnote{Specifically, $\mathsf{EOPL}$ unifies the problems characterizing $\mathsf{PPAD}$ and $\mathsf{PLS}$: The canonical $\mathsf{PPAD}$-complete problem asks to find an end of a line or a new start, given an exponentially large graph consisting of lines and cycles. The canonical $\mathsf{PLS}$-problem asks to find a potential minimizing vertex in an exponentially large DAG, where the direction of each edge follows the increase of the potential. $\mathsf{EOPL}$ is defined on a graph that only contains lines (or single vertices), with increasing potential along the directed edges (thus it is also a DAG), and asks for either of the solutions accepted by each of the mentioned problems.}, with the additional property that a potential function is given that increases along directed edges. This will be the starting point of our hardness result reduction.

The standard circuit formulation of $\mathsf{EOPL}$ is non-promise.
We use the following equivalent normalized graph formulation, whose
properties are enforced by the local circuit checks described below.

\begin{tcolorbox}[colback=gray!20, colframe=black, boxrule=0.5pt, arc=2mm, title=\textbf{$\mathsf{EOPL}:$ End-of-Potential-Line}]

\textbf{Input:} Access to a directed graph $G = (V, E)$
that is acyclic, has in/out-degree at most 1, and has $|V| = 2^n$ vertices; that is, $G$ is a disjoint
union of directed paths. In addition, we are given the name (binary representation) of a
distinguished {non-isolated} source $v^\star\in V$. The graph is described by the following $\poly(n)$-sized circuits: for any node we
can compute its successor and predecessor, if any, and also an integer potential, which is
guaranteed to increase along any directed path and the potential at $v^\star$ is 0.

\textbf{Output:}
{Any non-isolated source or sink other than $v^\star$,
with in/out-degrees $(0,1)$ or $(1,0)$, respectively.}
\end{tcolorbox}

With some abuse of notation, we use the term $\mathsf{EOPL}$ interchangeably to refer both to the End-of-Potential-Line problem, as well as the complexity class containing all problems that can be reduced to it in polynomial time. The distinction will always be clear by context.

Specifically, if $b(v^*)$ is the binary index of the source vertex $v^*$, we can consider that the graph is described by three polynomial (in $n$) size Boolean  circuits; $\mathsf{Succ, Pred,}$ and $\mathsf{Potential}$. 
\begin{itemize}
\item $\mathsf{Succ, Pred}:\{0,1\}^n\rightarrow \{0,1\}^n$, s.t.: $\mathsf{Succ}(b(v^*))\neq b(v^*) = \mathsf{Pred}(b(v^*))$.  
\item $\mathsf{Potential}: \{0,1\}^n\rightarrow \{0,1, ...,  2^{m-1}\}$, s.t.: $\mathsf{Potential}(b(v^*))=0$.
\end{itemize}

We can consider that $\mathsf{Succ, Pred}$ define the \emph{successor} and \emph{predecessor} functions for each vertex, for an exponentially large (in the size of n) directed, acyclic and max in/out-degree at most 1 graph. $\mathsf{Potential}$ defines the potential function of each vertex, mapping it to an integer. That is, the circuit $\mathsf{Succ}(b(x))$ outputs the encoding of $x$'s (unique - if exists) successor vertex  $y$, i.e., there exists a directed edge $(x,y)$ in $G$ and $\mathsf{Potential}(b(y))>\mathsf{Potential}(b(x))$. If $x$ doesn't have a successor, it outputs $\mathsf{Succ}(b(x)) = b(x)$.

Similarly, $\mathsf{Pred}(b(y)) = b(x)$, if $y$ has a predecessor, $x$, i.e., there is a directed edge $(x,y)$ and $\mathsf{Potential}(b(y))>\mathsf{Potential}(b(x))$. $\mathsf{Pred}(b(y)) = b(y)$ otherwise.

{

To justify this formulation, take the standard EOPL circuits
$S,P,\Phi$ of \cite{fearnley2020unique}, with the locally checkable
conditions $P(s)=s\ne S(s)$ and $\Phi(s)=0$, where $s=b(v^\star)$.
If $P(S(s))\ne s$ or $\Phi(S(s))\leq\Phi(s)$, then $s$ is already
a solution of the standard problem. Otherwise, identifying vertices
with their binary encodings, retain exactly the edges
\[
  (u,v)\in E
  \quad\Longleftrightarrow\quad
  S(u)=v,\quad P(v)=u,\quad \Phi(u)<\Phi(v).
\]
Define $\mathsf{Succ}$ and $\mathsf{Pred}$ to return the corresponding
neighbor in this graph, or the input vertex itself when no such neighbor
exists, and set $\mathsf{Potential}=\Phi$.
These polynomial-size circuits enforce consistency, in/out-degree at
most one, and acyclicity, while preserving the distinguished source.
Every non-isolated sink $v$ is a standard EOPL solution. For any other
non-isolated source $v$, if $S(P(v))\ne v$, then $v$ is a standard
solution; otherwise $\Phi(v)\leq\Phi(P(v))$, so $P(v)$ witnesses a
failure of strict potential increase and is a standard solution.
Conversely, the consistent circuits of a graph in the box above form
a standard EOPL instance whose solutions are exactly the permitted
endpoints. Thus this graph formulation is polynomial-time equivalent
to standard EOPL, without requiring a global promise on the input
circuits.

}

The fundamental results of \cite{goos2022further} and \cite{fearnley2021complexity} show us that $\mathsf{EOPL}$ characterizes $\mathsf{CLS}$:
\begin{proposition*}
[See \cite{goos2022further,fearnley2021complexity}]
$\mathsf{EOPL} = \mathsf{PLS} \cap \mathsf{PPAD} = \mathsf{CLS}.$
\end{proposition*}

\section{Main Contributions}
Our first contribution is to  establish the  $\mathsf{CLS}$-hardness of $\mathsf{2D-GDAFixedPoint}$. This result combines the recently developed techniques for the complexity of KKT points \cite{fearnley2021complexity} together with the recent characterization of the class $\mathsf{EOPL}$ (i.e., \cite{goos2022further}).
\begin{theorem}
$\mathsf{2D-GDAFixedPoint}$ is $\mathsf{CLS}$-hard.
\end{theorem}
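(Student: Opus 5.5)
We reduce from $\mathsf{EOPL}$. Since the min and max players are each one-dimensional, we cannot embed the path in a high-dimensional cube as in earlier PPAD reductions. Instead we exploit the single extra degree of freedom: the potential. The plan is to build a smooth $f:[0,1]^2\to\mathbb R$ with three properties. First, approximate $\mathsf{GDA}$ fixed points exist only near grid cells encoding path endpoints other than $v^\star$. Second, $f$ is evaluated by a polynomial-size machine. Third, the Lipschitz and smoothness constants are at most exponential, with $\alpha$ inverse-exponential, so all of them have polynomial bit length.

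\textbf{Step 1 (encoding into the square).} Partition $[0,1]^2$ into a $2^{n+m}\times 2^{n+m}$ grid, or a coarser grid with gadgets. The $x$-coordinate indexes the vertex $v$ (and sub-positions within it). The $y$-coordinate encodes the potential value $\mathsf{Potential}(v)$. Each edge $(u,v)$ of $G$ becomes a monotone ``staircase'' corridor: it runs horizontally at height $\mathsf{Potential}(u)$ and then rises to height $\mathsf{Potential}(v)>\mathsf{Potential}(u)$. Because potentials strictly increase along edges, corridors of different paths can be routed without crossing by lexicographic ordering of (potential, vertex). Whether a given cell lies on a corridor, and in which direction the corridor flows, is decided locally by evaluating $\mathsf{Succ}$, $\mathsf{Pred}$ and $\mathsf{Potential}$ on $O(1)$ vertices. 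This locality is what makes $C_f$ polynomial-time.

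\textbf{Step 2 (prescribing the GDA field).} On each grid vertex we specify the values of $f$ and of $\nabla f$, that is, the pair $(-\partial_x f,\ \partial_y f)$, which is the GDA direction. Three regimes are needed:
\begin{itemize}
\item Off-corridor, the direction points, with magnitude bounded below by a constant times the grid width, toward the boundary of $[0,1]^2$ in a corner where both projections are active. At such points GDA does not fix, because at least one component is pushed strictly inward. This will be arranged by a background $f$ that decreases in $x$ and increases in $y$ toward the corner $(1,0)$, or similar, with the corner handled so that it is not a fixed point.
\item Along a corridor, the direction follows the corridor, so the motion is monotone in $y$ (potential increasing).
\item At corridor ends, the field degenerates, and this degeneration is the only way to create a zero of the GDA map.
\end{itemize}
The source $v^\star$ is attached to the boundary flow so that it does not create a zero. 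We then interpolate with bicubic (Hermite) interpolation over each cell, as in the KKT construction of \cite{fearnley2021complexity}. This yields a $C^1$ function whose gradient within each cell is a convex-like combination of the corner data. The result is piecewise polynomial, with Lipschitz and smoothness constants polynomial in the grid size.

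\textbf{Step 3 (correctness).} This is the main obstacle. In KKT reductions we need only $\|\nabla f\|$ bounded away from zero. For GDA, however, a point is a fixed point whenever the \emph{mixed} field $(-\partial_x f,\partial_y f)$ vanishes modulo the projections, and the interpolated $f$ must realize a field that is typically non-conservative for a single potential. The point of the saddle structure is exactly that $x$ descends and $y$ ascends, so rotational flow along corridors is possible. We must show that inside every cell where the corner directions are not endpoint configurations, the bicubic interpolant keeps the GDA direction in an open half-plane: for example, all four corners have $\partial_y f\ge c\delta$, or all have $-\partial_x f\ge c\delta$. Since bicubic interpolation of gradients is a combination of the corner data with nonnegative weights, plus correction terms controlled by the prescribed $f$-values, this half-plane condition persists.

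The delicate point is choosing the $f$-values themselves, not just the gradients, consistently across the grid while keeping them local. We handle this by setting $f$ at each grid point from a local formula involving $\mathsf{Potential}$ and the coordinates, so that the prescribed gradient matches finite differences in at least the dominant coordinate. Finally, any $\alpha$-approximate fixed point, with $\alpha$ a small inverse polynomial in the grid size, must lie in an endpoint cell. Reading off that cell's $x$-index gives a non-isolated source or sink other than $v^\star$, completing the polynomial-time reduction.
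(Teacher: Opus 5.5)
Two steps of your plan fail as stated. First, Step~1 claims that, because potentials increase along edges, the staircase corridors can be routed without crossings using the lexicographic order on (potential, label). This is false. Place each vertex $v$ on the diagonal at its rank $\rho(v)$ in that order; your literal layout with $x=\operatorname{id}(v)$ does not even keep the horizontal legs rightward. Now take two edges $u_1\to v_1$ and $u_2\to v_2$ of different paths whose ranks interleave as $\rho(u_1)<\rho(u_2)<\rho(v_1)<\rho(v_2)$. The horizontal leg of the second edge, at height $\rho(u_2)$, meets the vertical leg of the first, at abscissa $\rho(v_1)$. Such crossings cannot be avoided in the plane. You need a local crossing gadget that creates no new endpoint. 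The paper supplies this via Chen--Deng rerouting: a $3\times3$ block of tiles connecting the left entrance to the top exit and the bottom entrance to the right exit.

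Second, and more seriously, Step~3 does not rule out spurious solutions. The bicubic gradient is not a nonnegative combination of the corner gradients. Along a cell edge it is the Hermite derivative $p'(t)=6t(1-t)(p_1-p_0)+(3t-1)(t-1)m_0+t(3t-2)m_1$. Suppose the two bottom corners carry equal values and the same prescribed $f_x=m\neq0$, $f_y=0$. Then $f_y\equiv0$ on that edge, since $f_{xy}=0$ at the corners, while $f_x$ goes from $m$ at $t=0$ to $-m/2$ at $t=1/2$. So your corner half-plane condition holds, yet the interpolant has an exact critical point. What actually excludes interior solutions in the paper is the choice of three color regimes separated by at least $10$, against arrow length $\delta=1/2$. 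This is combined with the finite check that every small square outside solution sub-squares lies in one of the groups $G_1,\dots,G_4$ of \cite{fearnley2021complexity}, each defined by explicit inequalities on corner values. Your prescription of ``matching finite differences'' gives no such guarantee at turns and color boundaries.

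You also misplace the difficulty. Where no projection is active, the GDA residual is exactly $\|\nabla f\|$, so the non-conservativity of $(-f_x,f_y)$ is irrelevant and this case is literally the KKT case. The new issue is where projections are active, because there GDA and KKT differ: on the bottom edge, GDA keeps $y=0$ when $f_y\le0$, whereas KKT does so when $f_y\ge0$. Your proposal never treats this regime. A condition like $f_y\ge c\delta$ on a cell still allows fixed points on the top edge wherever $|f_x|$ is small. Moreover, a background flow aimed at a corner where both projections are active creates exactly such a fixed point, which you only promise to ``handle''. The paper resolves this with specific background tiles: GDA moves left, then down the left edge into the source tiles at $\langle0,0\rangle,\langle0,1\rangle$, which feed the path of $v^\star$. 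It then analyzes the eight types of boundary-layer small squares explicitly, showing in each that some coordinate moves by more than $\epsilon$ toward a boundary that is more than $\epsilon$ away.
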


For the proof, we refer to \Cref{sec:cls}. Let us briefly compare our technique with prior work. Our work diverges from the techniques of \cite{daskalakis2021complexity} since they crucially rely on non-product constraints and instead follows the approach of \cite{fearnley2021complexity} but with some important deviations.

{
The construction of \cite{fearnley2021complexity} concerns
approximate KKT points for minimization, whereas our construction
concerns approximate fixed points of GDA. For minimization of
$f$ over a box $K\subseteq\mathbb R^2$, a
\emph{Karush--Kuhn--Tucker (KKT) point} is a point
$z^\star\in K$ satisfying
\[
  z^\star=\Pi_K\bigl(z^\star-\nabla f(z^\star)\bigr).
\]
Exact KKT points and exact GDA fixed points coincide in the
interior, where both conditions are equivalent to
$\nabla f(z^\star)=0$, but they can differ at the boundary
because GDA uses ascent in the $y$-coordinate.

From this perspective, we highlight two important differences between our construction
and analysis and those of \cite{fearnley2021complexity}.
First, we introduce new \emph{path gadgets}, namely local
pieces of the function that encode paths in the EOPL graph,
adapted to gradient descent in $x$ and gradient ascent in $y$.
Second, although we reuse gradient estimates from
\cite{fearnley2021complexity}, these do not by themselves
exclude unwanted approximate GDA fixed points where projection
is active. We therefore provide a separate analysis near the
boundary, ensuring that every approximate GDA fixed point
of the required accuracy lies in a region encoding an EOPL
solution; see \Cref{sec:valid} and \Cref{app:cases}.
}

Our second main contribution focuses on the \textit{black-box} perspective and provides a lower bound on the number of required queries to the function and its gradient, in order to reach an approximate gradient descent-ascent fixed point. 

\begin{theorem}
[Query Lower Bound]
\label{thm:2}
For any $N > 145$ and any deterministic algorithm $A$ with access to function and gradient evaluation oracles, there is function $f \equiv f_A : [0,N]^2 \to \mathbb R$ that is $100 N$-smooth for which $A$ has to make, in the worst case, $\Omega(N)$ queries in order to find an $0.01$-approximate fixed point of GDA.
\end{theorem}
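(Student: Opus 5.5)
We prove \Cref{thm:2} with an adversary argument: $A$ is forced to follow a long monotone "corridor" in a grid discretization of $[0,N]^2$. Partition $[0,N]^2$ into unit cells and, as in the CLS construction but without circuits, embed a simple path of cells. The path starts at a corner cell $c_0$ and continues $c_1,c_2,\dots$, and its length $T=\Theta(N)$ is chosen adaptively. On every cell off the path we fix a "background" function whose GDA map has displacement at least $0.01$ everywhere. The simplest choice is a linear function such as $f(x,y)= -x\cdot \gamma + y\cdot\gamma$ with $\gamma$ a constant, tilted so that projections onto the boundary do not create fixed points. On path cells we use the path gadgets of \Cref{sec:cls}. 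These are local pieces, obtained by bicubic interpolation of prescribed values and gradients at the grid vertices, whose gradient field points along the corridor (descent in $x$, ascent in $y$). The only approximate GDA fixed point then sits in the final cell $c_T$. Because every gadget is determined by $O(1)$ values and derivatives at its cell's vertices and the interpolation is $C^1$ across cell boundaries, the glued $f$ is $O(1)$-smooth per cell. Choosing slopes of order $N$ where needed, for example to dominate boundary effects, still keeps it $100N$-smooth.

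The adversary runs $A$ and answers each query $(x,y)$ lazily. If the query lies in a cell not yet committed, the adversary commits it as background, unless this would disconnect the current path head from all remaining free space. It returns the value and gradient of the committed function. The path is extended only when $A$ queries a cell adjacent to the current head. Even then, the adversary steers the path away from the queried cell whenever possible, for instance by forcing a boustrophedon (snake) order in which each column is traversed in its entirety. With this order each query reveals $O(1)$ new path cells, so after $q$ queries the revealed path has length $O(q)$ while the whole snake has length $\Theta(N)$. I would simplify by restricting the path to a fixed snake through a strip of $\Theta(N)$ cells, so that the only hidden information is where the path ends. The adversary declares the endpoint only at a cell never queried. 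If $A$ stops after $o(N)$ queries and outputs a point, that point lies either in background or in a non-terminal path cell, and neither contains a $0.01$-approximate fixed point.

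The key verification, done cell by cell, is that $\|(x,y)-F_{\mathrm{GDA}}(x,y)\|\ge 0.01$ everywhere outside $c_T$. In the interior this amounts to $\|\nabla f\|\ge 0.01$. On $\partial[0,N]^2$ the projection must not absorb the gradient, so along each edge the component of $\nabla f$ tangent to the edge must be bounded below, or the normal component must point inward with respect to descent in $x$ and ascent in $y$. The constraint $N>145$ presumably enters here, to leave room for corner and turning gadgets.

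I expect the main obstacle to be designing the turning and boundary gadgets so that the interpolated gradient never nearly vanishes at cell interfaces. Bicubic interpolation between a "horizontal corridor" cell and a "vertical corridor" cell can create a saddle-like region with small gradient, and a GDA-stationary saddle is exactly a solution. I would try first to reuse the gradient lower bounds from \cite{fearnley2021complexity} for interior cells, and then handle corners and the domain boundary by a separate case analysis, as the paper does in \Cref{app:cases}. There I would choose the direction of the snake so that at every boundary cell the $x$-gradient pushes outward only when the $y$-component is large, and vice versa.
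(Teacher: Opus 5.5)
\textbf{The lower bound fails: a fixed route with a hidden endpoint can be found by binary search.} Your simplification to a fixed snake, where the only hidden parameter is the endpoint index $T$, reduces what $A$ must learn to one number. Suppose the route $s_0,s_1,\dots,s_L$ is fixed, and hence known to $A$. Then $O(1)$ queries in tile $s_j$ tell $A$ whether $j<T$ (path gadget), $j=T$ (sink) or $j>T$ (background). The three gadgets differ in color or arrow at some grid point of the tile, and a function/gradient query there detects this. So $A$ finds $T$ by binary search with $O(\log N)$ queries. Knowing the sink gadget exactly, it then computes a fixed point with no further queries.

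For the same reason, your invariant that ``after $q$ queries the revealed path has length $O(q)$'' is false. A path answer at $s_j$ implicitly reveals the whole prefix $s_0,\dots,s_j$, and a background answer there rules out every endpoint beyond $s_j$.

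To get $\Omega(N)$, the route itself must be chosen adaptively in two dimensions, so that each query eliminates only $O(1)$ turning options. This is what the paper's Vavasis-style adversary does. It maintains a prefix ending at a bend $\langle u_i,v_i\rangle$, the unqueried straight track $T_i$ leaving that bend, and sets $C_i,H_i$ of columns and rows in which turning is forbidden.
\begin{itemize}
\item A query strictly northeast of the head is answered with background. It only adds its column and row to $C_i,H_i$, so no later track can pass through that tile.
\item A query on $T_i$ makes the path bend at the first allowed row or column. The answer is the straight, bend, or background gadget accordingly.
\end{itemize}
This keeps $u_i,v_i\le i+1$. So after $q\le K-7$ queries the prefix can still be completed in two ways, with sinks at $\langle u,v+2\rangle$ and $\langle u,v+4\rangle$ on a north-going track. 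Both completions agree with all answers but have disjoint fixed-point sets. Your unsimplified ``steer away from the queried cell'' idea points in this direction, but you give neither a concrete rule nor an invariant of this kind.

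\textbf{Two secondary points also need repair.} First, a linear background cannot be ``tilted'' to avoid boundary fixed points. For $f=-\gamma x+\gamma y$ the GDA map is $(x,y)\mapsto(\Pi_{[0,N]}(x+\gamma),\Pi_{[0,N]}(y+\gamma))$, so $(N,N)$ is an exact fixed point. In general, every linear $f$ has a GDA fixed point at some corner of the domain. The paper handles this as follows: the background $x-y+4N+10$ flows to the origin, the special source tiles $\langle 0,0\rangle,\langle 0,1\rangle$ are placed exactly there, and the boundary analysis of \Cref{app:cases} rules out projection-induced fixed points. Second, a boustrophedon needs leftward and downward moves, while the available gadgets and value regimes support only monotone up/right paths. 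You would have to design and verify new gadgets; the paper's adaptive monotone staircase avoids this.
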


For the proof of this result, we refer to \Cref{sec:query}.
{For convenience, our constructions are described on $[0,N]^2$.
The following rescaling transfers both the query lower bound and the
CLS-hardness reduction to the domain $[0,1]^2$ used in the formal problem
definition, while keeping the learning rate in the GDA map equal to~$1$.}

\begin{remark}
\label{rmk:scaling}

{Let $N\geq 1$, and let $f:[0,N]^2\to\mathbb{R}$ be continuously
differentiable, $L$-Lipschitz, and $M$-smooth. Define
\[
  \widetilde f(z)=\frac{f(Nz)}{N^3},
  \qquad z\in[0,1]^2.
\]
Then
\[
  \nabla\widetilde f(z)=\frac{\nabla f(Nz)}{N^2}.
\]
Moreover, $\widetilde f$ is $\widetilde L$-Lipschitz and
$\widetilde M$-smooth, where
\[
  \widetilde L=\frac{L}{N^2},
  \qquad \widetilde M=\frac{M}{N}.
\]

To verify the effect of the transformation on approximate fixed points, let
$F_{\mathrm{GDA},h}$ denote the GDA map of \Cref{def:projectGrad} for objective
$h$ on its stated domain, with learning rate~$1$, and write
$R_f(w)=\|w-F_{\mathrm{GDA},f}(w)\|_2$ and
$R_{\widetilde f}(z)=\|z-F_{\mathrm{GDA},\widetilde f}(z)\|_2$.
Fix $z\in[0,1]^2$, let $w=Nz$, and set
\[
  v(w)=\bigl(-\nabla_x f(w),\,\nabla_y f(w)\bigr).
\]
For each coordinate $i\in\{1,2\}$, let $a_i=|v_i(w)|$ and let $d_i$
be the distance from $w_i$ to the boundary of $[0,N]$ in the direction
of $v_i(w)$. Thus, $d_i=N-w_i$ if $v_i(w)>0$, and $d_i=w_i$ if
$v_i(w)<0$; when $v_i(w)=0$, set $d_i=0$.
Projection onto an interval truncates the proposed displacement at that
boundary. Consequently,
\[
  \bigl|\bigl(F_{\mathrm{GDA},f}(w)-w\bigr)_i\bigr|
     =\min\{a_i,d_i\},
  \qquad
  \bigl|\bigl(F_{\mathrm{GDA},\widetilde f}(z)-z\bigr)_i\bigr|
     =\min\left\{\frac{a_i}{N^2},\frac{d_i}{N}\right\}.
\]
This implies
\[
  \frac{\min\{a_i,d_i\}}{N^2}
  \leq \min\left\{\frac{a_i}{N^2},\frac{d_i}{N}\right\}
  \leq \frac{\min\{a_i,d_i\}}{N}.
\]
Taking Euclidean norms gives
\[
  \frac{R_f(Nz)}{N^2}
  \leq R_{\widetilde f}(z)
  \leq \frac{R_f(Nz)}{N}.
\]
In particular, for every $\varepsilon_0>0$,
\[
  R_{\widetilde f}(z)<\frac{\varepsilon_0}{N^2}
  \quad\Longrightarrow\quad
  R_f(Nz)<\varepsilon_0.
\]
}
\end{remark}

{For the instances in \Cref{thm:2}, $M=100N$ and
$\varepsilon_0=0.01$. Hence $\widetilde f$ is $100$-smooth, and any
$(0.01/N^2)$-approximate GDA fixed point of $\widetilde f$ yields a
$0.01$-approximate GDA fixed point of $f$. Each function or gradient query
to $\widetilde f$ can be simulated by one corresponding query to $f$
at $Nz$, followed by division by $N^3$ or $N^2$, respectively.
The $\Omega(N)$ query lower bound therefore transfers to the unit square.
More explicitly, for any sufficiently small target accuracy $\varepsilon$,
choose $N=\lfloor\sqrt{0.01/\varepsilon}\rfloor$. Then
$\varepsilon\leq0.01/N^2$ and $N=\Theta(\varepsilon^{-1/2})$.
Thus, for every deterministic algorithm with function and gradient oracle
access, there is an $O(1)$-smooth function on $[0,1]^2$ for which finding
an $\varepsilon$-approximate GDA fixed point requires
$\Omega(\varepsilon^{-1/2})$ queries in the worst case.

For the remainder of the construction and its correctness analysis, we
work on $[0,N]^2$ for convenience. In \Cref{subsec:ReductionComplete} we explicitly apply
\Cref{rmk:scaling} to obtain the unit-square instance required by the formal
CLS-hardness reduction. We proceed with an outline of our technical tools.
}

\begin{remark}
[First-order Nash points and GDA fixed points]
{
An $\epsilon$-first-order Nash point is not necessarily a $\epsilon$-approximate GDA fixed point: passing from one notion to the other may require a different approximation accuracy. Every $\epsilon$-first-order Nash point $z$, satisfies:
\[
\|z - F_{GDA,f}(z)\|_2\leq \sqrt{2\epsilon}.
\]
Indeed, for either coordinate, let $a$ be the absolute value of the (corresponding partial) derivative and $d$ the distance to the boundary for the improving direction. \Cref{def:FirstOrderNash} gives $ad\leq \epsilon$, while the coordinate changes by $\min\{a,d\}\leq \sqrt{\epsilon}$.

Under the normalization $\widetilde f(z)=f(Nz)/N^3$, the expressions in \Cref{def:FirstOrderNash} are divided by $N^3$. Therefore, choosing $\epsilon_0=0.01$, any $\epsilon_0^2/(4N^3)$-first-order Nash point $z$ of $\widetilde f$ satisfies
\[
\|Nz-F_{GDA,f}(Nz)\|_2\leq \epsilon_0/\sqrt{2}<\epsilon_0.
\]

Thus, the same reduction proves CLS-hardness for first-order Nash points of \cref{def:FirstOrderNash} with constant smoothness. An $\Omega(N)$ query lower bound for the original construction on $[0,N]^2$ gives a lower bound of $\Omega(\epsilon^{-1/3})$ queries for Nash points under this normalization, while it gives $\Omega(\epsilon^{-1/2})$ for approximate GDA fixed points. Alternatively, dividing by $N^2$ instead of $N^3$ gives first-order Nash accuracy of $\epsilon_0^2/(4N^2)$ and thus a query lower bound of $\Omega(\epsilon^{-1/2})$, while for GDA it gives accuracy $\epsilon_0/N$ and query lower bound of $\Omega(\epsilon^{-1})$; however the smoothness becomes $O(N)$ instead of constant.\footnote{Notice that both normalizations preserve the CLS-hardness results, and only change the corresponding choice of approximation accuracy.} 
}
\end{remark}

\section{Proof Outline}

Our proof technique is inspired and follows the breakthrough result of \cite{fearnley2021complexity} combined with recent progress by \cite{goos2022further} on the complexity of $\mathsf{EOPL}$.
$\mathsf{EOPL}$ is the End-Of-Potential-Line problem, which was introduced by \cite{fearnley2020unique}.

The key technical ingredient is to construct a class of ``hard functions'' that can be used to encode End-Of-Potential-Line instances.  The reduction consists of two parts: a geometric part, that demonstrates the shape of the functions in the class, and a complexity theoretic part, that shows a way to embed $\mathsf{EOPL}$ as functions in our hard class. The main novelty of our work lies on the geometric part. The techniques of the complexity theoretic part require various details but mostly follow the blueprint of prior works \cite{fearnley2021complexity,hollender2023computational}. 

In this section, we will focus on a single monotone path. Such an instance is not sufficient to embed EOPL (since EOPL may contain multiple disjoint paths) but it gives a rough idea of our geometric construction. The embedding of EOPL is presented in the next section.

Our constraint set will be the square $[0,N]^2$ for some large $N$. We will first work on the discrete grid $G = \{0,1,...,N\}^2$, which is the discretization of $[0,N]^2$, define our collection of hard functions on $G$ and then extend each one of them to the continuous domain.

The rough idea is that the $N \times N$ grid $\{0,N\}^2$ will be divided into sub-squares, each sub-square having size $c \times c$, for some absolute constant $c$ ($c$ will be 6 in our constructions). We label the sub-squares, which we call \emph{tiles}, with the ordered pairs $\<u,v\>$ where $u,v = 0,1,...,N-1$.

The main ingredient is to understand how to construct functions in two dimensions that are associated to \emph{monotone paths}:
\begin{enumerate}
    \item Each function in the class will have Lipschitz constant $L$ and smoothness $M$ which satisfy $L, M \gtrsim N$.

    \item  Each function of the ``hard family'' will resemble a ``monotone'' path, starting from the bottom left sub-square and moving either up or right (to adjacent sub-squares) until reaching the \emph{sink} sub-square. The idea is that the algorithm will have to ``chase'' this path until it reaches its end. See \Cref{fig:pathAndTiles} for an illustration.

    \item The sink sub-square of this monotone path will be the only place that an approximate fixed point of $\mathsf{GDA}$ will exist (with constant accuracy).
\end{enumerate}

\begin{figure}[!ht]
    \centering
    \includegraphics[width=0.7\linewidth]{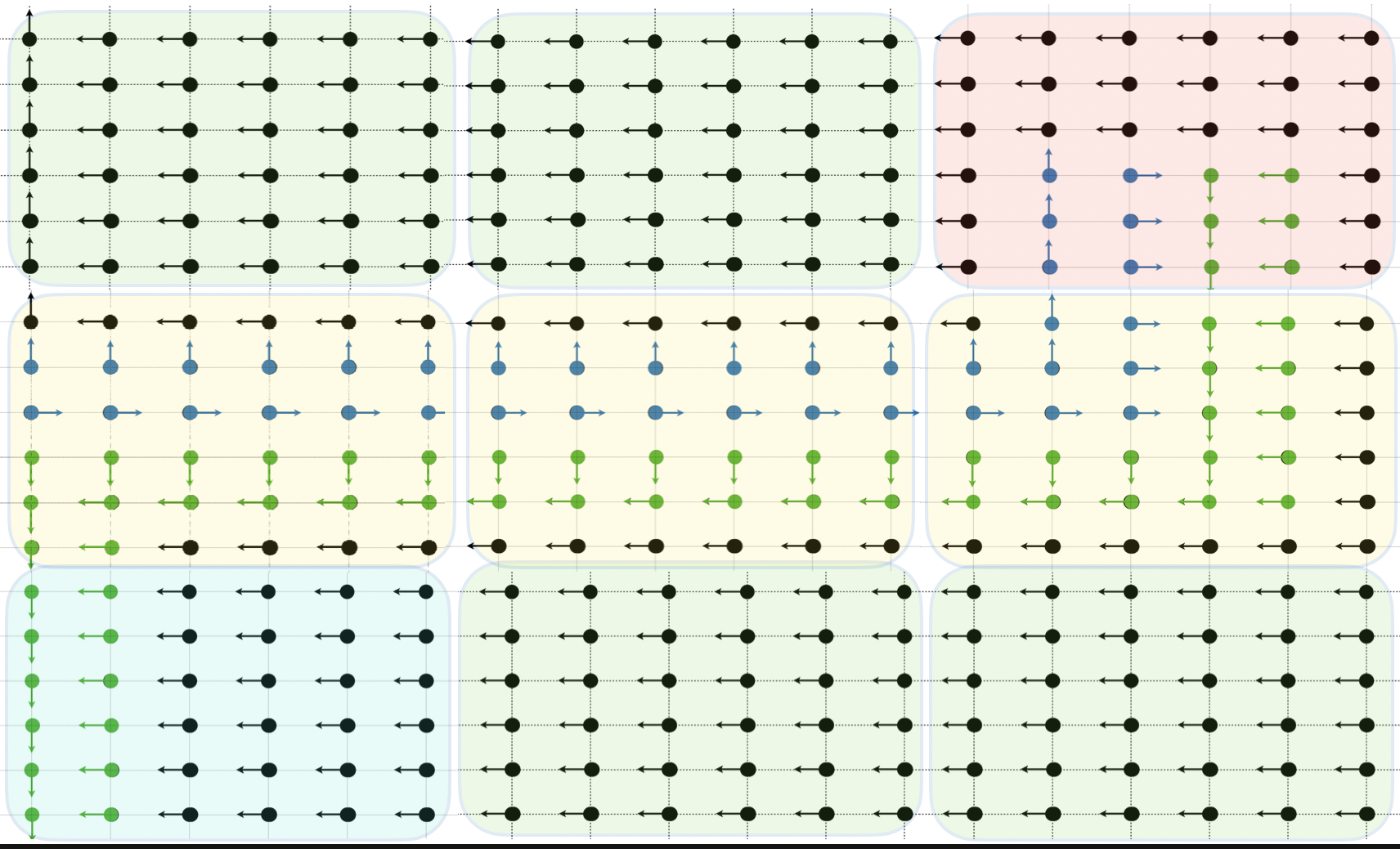}
    \caption{Discretization of square (here we have a 18 $\times$ 18 array of grid points and each tile is 6 $\times 6$) and a monotone path. The path starts from the bottom left tile and keeps moving on the yellow tiles, moving right and finally up until reaching the red tile which contains a sink. The green tiles complement the grid and are the ``background'' - they contain only black colored points that are not part of the monotone path (``environment'' points). The monotone path consists of the points that are colored green and blue and except for the source tile, it has width equal to 4 (2 blue and 2 green points). The \textit{min} player wants to ``move'' parallel to the $x$ axis towards the direction of the arrows (function decreases) and the \textit{max} player wants to ``move'' parallel to the $y$ axis, on the opposite direction pointed by the arrows (function increases).}
    \label{fig:pathAndTiles}
\end{figure}

 Given the above general plan, we have to explain how these monotone paths are constructed. The challenging part is to ensure that the only solutions, i.e., fixed points of $\mathsf{GDA}$, appear in the end of the monotone path (sink). 

\paragraph{Constructing Monotone Paths} Now, we will illustrate how the monotone path of \Cref{fig:pathAndTiles} (and hence an associated function) can be constructed. The family will then contain a collection of such paths. We can decompose the construction as follows. First, we will fix a monotone path, i.e., a sequence of adjacent sub-squares moving from the source either up or right, until reaching a sink. This monotone path will induce a function, as we will describe later. We will initially define the function on the points of the grid $G = \{0,1,...,N\}^2$, and then extend it to the continuous $[0,N]^2$ using bicubic interpolation \cite{fearnley2021complexity}. Our construction consists of the following steps.

\begin{enumerate}
    \item (Defining the function values) In the first step, we have to assign function values to each point of the grid $G$, i.e., we have to determine the values $f(x,y)$ for any $(x,y) \in G$. For that, we introduce 3 colors, named 0 (black), 1 (blue) and 2 (green), that map $(x,y)$ to a value: for each point $(x,y)$, we will assign a color (and hence a function value). The way that we color each point will be made clear in the next step. The details appear in \Cref{sec:function-values}.

    \item (Coloring the points) Now that we have introduced the three potential colors, we will color the grid in order to form the monotone path that we discussed before. (Recall that for the construction of a function in our class, we have fixed a monotone path.) This monotone path can be thought as a composition of small path gadgets. Since the path is moving only up and right, we will need four path gadgets: one for moving right, one for moving up, one for creating the up-right turn, and one for creating the right-up turn. Also, we will need a gadget for the source of the path and one for the sink. In total, we introduce 6 path gadgets. The path gadgets will be used to form the whole path in the sub-squares of the grid, i.e., each sub-square of $[0,N]^2$ will be filled with one of the path gadgets. Each gadget is first defined on the grid and contains a carefully chosen coloring (which gives the function values on the points of the grid) and gradient directions (indicating the function value decay on the points of the grid). The construction of these gadgets is one of the main contributions of our work and appears in \Cref{sec:path-gadgets}.

    \item (Extending the function)
    Since we have fixed the monotone path, defined as a sequence of sub-squares as discussed before, we use the appropriate path gadgets to define the function on the part of the grid that is traversed by the path. The remaining part of the grid is called environment and is simple to be created; roughly speaking, it contains only color 0 and the function value is decreasing in the direction of one of the boundaries (depending on whether the sub-square is above or below the monotone path). Given these, we have defined our function on the whole grid $G$. We now have to extend our function to $[0,N]^2$.
    The key idea is to use \emph{bicubic interpolation}, a classical technique for  interpolating data points on a two-dimensional regular grid. This part is based on the techniques of \cite{fearnley2021complexity}. We discuss on this extension in \Cref{sec:extension}.

    \item (Avoiding Spurious Equilibria) The most crucial aspect of our construction is to show that approximate fixed points of $\mathsf{GDA}$ only lie in the sink path gadget. This will follow from the way we constructed the path gadgets and the techniques introduced in \cite{fearnley2021complexity}.
    Crucially, the work of \cite{fearnley2021complexity} does not directly give us the desired property. The reason is that, {when the projection operators are not active}, then fixed points of $\mathsf{GDA}$ and $\mathsf{KKT}$ points coincide. {In that regime, the work of \cite{fearnley2021complexity} guarantees that there are no solutions outside of the sink-subsquares.} {However, this does not suffice for our purposes, since when at least one of the projection steps are activated, then it is unclear whether a fixed point exists outside of the sink. To this end, we have to give a new analysis for this case that relies explicitly } on the structure of our path gadgets.
    We prove our key property in \Cref{sec:valid}.
    It is helpful to remark that this property is, in parallel, with the seminal lower bounds of \cite{hirsch1989exponential,vavasis1993black}. For instance, Vavasis constructed (explicitly) a collection of functions with the property that any local minimum is in the sink sub-square. This was used to prove that local minimization required at least $\sqrt{M/\epsilon}$ queries in two dimensions. 
\end{enumerate}

\paragraph{The Key Technical Challenges} There are two key technical challenges that one needs to bypass in order to make the above plan rigorous. First, the functions constructed by \cite{fearnley2021complexity} are not sufficient for our purposes since we care about fixed points of GDA, which does not coincide with 2-dimensional KKT points. To this end, we have to construct our new path gadgets, which leads us to a new class of hard instances. The second issue is more important and comes from the fact that KKT points are not the same as GDA points. Hence, the techniques of \cite{fearnley2021complexity} do not directly imply that our hard instances do not have ``bad'' GDA solutions. To this end, we first observe that GDA fixed points and 2-dimensional KKT points coincide when none of the GDA projections are activated. We call these points \emph{good}. For good points, we can show using the tools developed in \cite{fearnley2021complexity} that the constructed instances do not have ``bad'' fixed points, i.e., fixed points that do not belong at a sink sub-square. However, for the points that are not good, we have to make a separate analysis. This requires some technical effort since we have to analyze all the possible gadgets that could have bad points and we have to show that in each one of them, bicubic interpolation does not introduce GDA fixed points (for details, we refer to \Cref{app:cases}).

\subsection{Query Lower Bound}
While the above collection of monotone paths (and hence functions) cannot be immediately used to prove the desired CLS hardness result, we can almost immediately use this class of hard functions, to show a query lower bounds for deterministic algorithms for computing approximate fixed points of gradient descent-ascent. For the statement and the details, we refer to \Cref{sec:query}.

\section[CLS hardness of 2D Min-Max with Product Constraints]{$\mathsf{CLS}$ hardness of 2D Min-Max with Product Constraints}
\label{sec:cls}

For this proof, we will use the  results of \cite{goos2022further} and \cite{fearnley2021complexity}, as mentioned in \Cref{sec:preliminaries}:
\begin{proposition}
[See \cite{goos2022further,fearnley2021complexity}]
$\mathsf{EOPL} = \mathsf{PLS} \cap \mathsf{PPAD} = \mathsf{CLS}.$
\end{proposition}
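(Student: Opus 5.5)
The plan is to assemble this statement from known containments rather than prove anything from scratch, since each inclusion is an established result. I would prove the chain of inclusions
\[
\mathsf{EOPL}\subseteq \mathsf{CLS}\subseteq \mathsf{PPAD}\cap\mathsf{PLS}\subseteq \mathsf{EOPL},
\]
which gives all the stated equalities at once. Throughout, $\mathsf{EOPL}$ means the standard circuit formulation of \cite{fearnley2020unique}. The normalized graph formulation used above was already shown to be polynomial-time equivalent to it, so the conclusion transfers to the version used in our reduction.

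\textbf{First inclusion, $\mathsf{EOPL}\subseteq\mathsf{CLS}$.} This is the direction shown in \cite{fearnley2020unique}. They embed an End-of-Potential-Line instance into a Continuous Local Search instance: the line is laid out in a continuous domain, and the potential is interpolated along it so that improving steps follow the line. Every approximate local optimum of the resulting instance then decodes to an end of a line or a potential violation.

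\textbf{Second inclusion, $\mathsf{CLS}\subseteq\mathsf{PPAD}\cap\mathsf{PLS}$.} This holds essentially by definition and is noted in \cite{10.5555/2133036.2133098}.
\begin{itemize}
\item Membership in $\mathsf{PLS}$: discretize the domain to polynomial precision and run local search on the potential. Lipschitzness guarantees that discrete local optima are approximate solutions.
\item Membership in $\mathsf{PPAD}$: an approximate fixed point of the continuous improvement map is a solution, and such fixed points are found in $\mathsf{PPAD}$ via a Brouwer-type argument.
\end{itemize}

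\textbf{Third inclusion, $\mathsf{PPAD}\cap\mathsf{PLS}\subseteq\mathsf{EOPL}$.} This is the deep step, and I would take it from the literature as a black box.
\begin{itemize}
\item \cite{fearnley2021complexity} show that the problem of finding a KKT point in $2$D, via gradient descent, is complete for $\mathsf{PPAD}\cap\mathsf{PLS}$ and lies in $\mathsf{CLS}$. Hence $\mathsf{PPAD}\cap\mathsf{PLS}=\mathsf{CLS}$.
\item \cite{goos2022further} show that $\mathsf{EOPL}=\mathsf{PLS}\cap\mathsf{PPAD}$. They do this by reducing the canonical $\mathsf{PPAD}\cap\mathsf{PLS}$-complete problem, which is either-solution End-of-Line plus Iter, to $\mathsf{EOPL}$, which closes the cycle.
\end{itemize}

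The only real obstacle is this last inclusion. Proving it from scratch would require the full Fearnley--Goldberg--Hollender--Savani machinery, or the Göös et al.\ reduction. Since both results are stated earlier in the paper as the cited proposition, I would simply invoke them. The care needed here is bookkeeping: checking that the notions of reduction coincide across the cited works (polynomial-time many-one reductions between total search problems), and that the promise-free circuit formulation matches.
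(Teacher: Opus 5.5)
Your proposal is correct and matches the paper, which gives no argument of its own and simply combines $\mathsf{EOPL}=\mathsf{PLS}\cap\mathsf{PPAD}$ from \cite{goos2022further} with $\mathsf{CLS}=\mathsf{PPAD}\cap\mathsf{PLS}$ from \cite{fearnley2021complexity}. Your chain of inclusions also spells out the standard easy directions, $\mathsf{EOPL}\subseteq\mathsf{CLS}$ (via End-of-Metered-Line) and $\mathsf{CLS}\subseteq\mathsf{PPAD}\cap\mathsf{PLS}$ from \cite{10.5555/2133036.2133098}; these are redundant once the two equalities are invoked, but harmless.
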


This shows a collapse of the class EOPL with CLS. This allows us to use EOPL as the starting point of our CLS-hardness proof. The proof consists of two steps:
\begin{enumerate}[label=\Roman*]
    \item As a first step, we will see how to embed instances of EOPL as \emph{monotone paths} to a two-dimensional grid of size $[0,N] \times [0,N]$. These paths need to satisfy some important properties as we will see below, such as they should be monotone, in the sense that they only move up and right and should not cross. 

    \item Given such monotone paths, we will see how to associate with each EOPL instance  a \emph{function} defined on $[0,N] \times [0,N]$ whose approximate GDA fixed points (for some level of accuracy) coincide exactly with the set of solutions of the EOPL instance (i.e., its sinks and the sources which do not coincide with the given one). The values of this function will decrease as we move towards each monotone path.
\end{enumerate}

This means that any algorithm that could efficiently find GDA fixed points for these functions, would also be able to solve EOPL. To give some intuition for readers familiar with PPAD and PLS, our constructed functions have two key properties; (i) they can be equivalently seen as paths over a square grid (\textsc{PPAD}-component) and (ii) as the monotone path moves up and right, the function values drop and so we can think that one minimizes a potential function ($\textsc{PLS}$-component). 

Hence, the main result of this section is the following:  
\begin{theorem}
\label{thm:mainCla}
    $\mathsf{EOPL} \le_P \mathsf{2d}-\mathsf{GDAFixedPoint} $. 
\end{theorem}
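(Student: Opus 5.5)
The reduction follows the two-step plan above, in the blueprint of \cite{fearnley2021complexity,hollender2023computational}. Given an $\mathsf{EOPL}$ instance $(\mathsf{Succ},\mathsf{Pred},\mathsf{Potential})$ on $\{0,1\}^n$ with potentials in $\{0,\dots,2^m-1\}$, we first embed it as a family of vertex-disjoint \emph{monotone paths} on a tile grid of side $N=\poly(2^n,2^m)$. Each vertex $v$ with potential $p$ is assigned a tile whose column is indexed by $v$ and whose row is indexed by $p$; more precisely, a pair of coordinates chosen so that every edge $(u,v)$ with $\mathsf{Potential}(u)<\mathsf{Potential}(v)$ becomes a path that only moves up and right. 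The standard routing is: move right from the tile of $u$ along a row reserved for $u$, then up along a column reserved for $v$. To avoid crossings, each edge gets private rows and columns, obtained by interleaving indices so that distinct edges use disjoint corridors. Where an unavoidable crossing would occur, we exploit the fact that the crossing structure is locally computable and use the ``bridge'' trick of \cite{hollender2023computational}. The distinguished source $v^\star$, of potential $0$, is placed at the bottom-left source tile. The key property to verify is \emph{local computability}. Given a tile $\langle u,v\rangle$, a polynomial-size circuit built from $\mathsf{Succ},\mathsf{Pred},\mathsf{Potential}$ must decide which of the six path gadgets (right, up, the two turns, source, sink) or which environment type (above or below a path) occupies it. Moreover, sink and source gadgets must appear \emph{only} at tiles of $\mathsf{EOPL}$ endpoints other than $v^\star$.

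Next, we assemble the function. Each tile is filled with the corresponding gadget coloring from \Cref{sec:path-gadgets}, which fixes values and gradients on grid points. We then extend to $[0,N]^2$ by bicubic interpolation as in \Cref{sec:extension}. The resulting Turing machine $C_f$ evaluates a query point by locating its tile, computing the gadget types of that tile and its neighbors via the circuits, and evaluating the bicubic polynomial exactly over the rationals. This runs in time polynomial in $n$, $m$, the circuit sizes, and the input bit length. We take $L,M=O(N)$ from the interpolation bounds and set $\alpha=\varepsilon_0/N^2$ with $\varepsilon_0=0.01$. Finally, we apply \Cref{rmk:scaling} to move the instance to $[0,1]^2$, so that any $\alpha$-approximate fixed point for $\widetilde f$ is an $\varepsilon_0$-approximate fixed point of $f$ on $[0,N]^2$.

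Correctness requires that every $\varepsilon_0$-approximate GDA fixed point of $f$ lie in a sink or non-distinguished source tile. Then rounding the point to its tile and decoding the tile coordinates back to a vertex, which is polynomial time, yields an $\mathsf{EOPL}$ solution. Following the proof outline, this splits into two cases. \emph{Good points}, where no projection is active, are GDA fixed points only if the gradient is nearly zero, i.e.\ approximate KKT points. Here we reuse the gradient estimates of \cite{fearnley2021complexity}: in every non-sink gadget and in the environment, each interpolated cell has a coordinate whose partial derivative is bounded away from $0$ by a constant. \emph{Bad points}, where a projection is active, can only occur at the boundary of $[0,N]^2$. There we argue gadget by gadget, using \Cref{sec:valid} and \Cref{app:cases}, that the active coordinate's gradient never pushes the iterate into the wall while the other coordinate is stationary. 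The environment is designed to decrease in $x$ away from the left and right walls in the appropriate direction, and to increase in $y$ away from the top and bottom walls.

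I expect the main obstacle to be the boundary case analysis combined with the embedding, more than the local-computability bookkeeping. The routing must be arranged so that no gadget touching the boundary of $[0,N]^2$ is a turn or sink. In particular, the source tile of $v^\star$ at the corner must not create a spurious projected fixed point. My first approach would be to pad the grid with a margin of environment tiles on all four sides, so that the boundary sees only environment gadgets, whose projected behavior is uniform and easy to check.
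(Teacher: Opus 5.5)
Your outline matches the paper's: monotone up/right embedding, gadget tiles, bicubic interpolation, good points via the groups of \cite{fearnley2021complexity}, bad points via a boundary analysis, and rescaling with $\alpha=0.01/N^2$. But the boundary plan you propose for the step you yourself flag as the crux does not work.

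With product constraints, $F_{\mathrm{GDA}}$ is a continuous self-map of $[0,N]^2$, so by Brouwer even the pure background has a fixed point. The background flow must therefore terminate exactly at the start of the distinguished path. The paper achieves this with the background $x-y+4N+10$, which sends the min player left and the max player down, so its only fixed point is the origin. That is why the special source gadgets occupy tiles $\langle0,0\rangle,\langle0,1\rangle$, touching both the left and the bottom wall. They are then routed along tile row $1$ and up tile column $c^\star$ to the block of $v^\star$ (\Cref{rem:distinguished-source}).

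Padding that corner with a uniform environment reinstates the background's corner fixed point. An environment that instead pushes inward from all four walls, as you describe, just moves the Brouwer fixed point into the interior. That point is spurious unless it is exactly the start of the distinguished path, which is not the corner where you put it, and unless gadgets funnel the environment into it from every side. You do not design any such gadgets.

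The wall is also not ``easy'' even for the plain background. With left-pointing arrows the prescribed $f_y$ is $0$ at grid points, so every grid point on the left wall would be an exact projected fixed point. This is what the separate left-layer background tile (\Cref{fig:back2}) and the left-wall case analysis of \Cref{app:cases} handle; that analysis covers that tile and the source gadgets. These squares cannot be padded away.

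Two further points in the embedding need repair. First, indexing columns by the vertex label does not make edges go right, since nothing forces $\mathrm{id}(u)<\mathrm{id}(v)$ for $u\to v$. Indexing rows by potential alone also does not give each vertex the private row your routing uses. You need one total order compatible with the edges, used on both axes. The paper puts $v$ on the diagonal block $\langle\rho(v),\rho(v)\rangle$ with $\rho(v)=2+2^n\Phi(v)+\mathrm{id}(v)$, so $u\to v$ runs right along row $\rho(u)$ and up column $\rho(v)$.

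Second, private rows and columns do not prevent a horizontal segment from crossing a vertical one. The planar fix, which the paper uses, is not a bridge but the local rerouting of \cite{chen2009complexity}. This is the $3\times3$ block joining the left entrance to the top exit and the bottom entrance to the right exit. It changes which paths are concatenated, but it creates no new endpoint and keeps every path monotone, which is all that $\mathsf{EOPL}$ needs.
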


\begin{proof}

{
Given an instance $I^{\mathrm{EOPL}}$ with $2^n$ nodes and circuits
$\mathsf{Succ, Pred}$ and $\mathsf{Potential}$, we first construct
an auxiliary function $f:[0,N]^2\to\mathbb{R}$, where $N$ is the
integer side length of the constructed grid and {$\log N=O(n+m)$, where $m$ is the number of bits encoding the potential value},
such that every $0.01$-approximate GDA
fixed point of $f$ can be efficiently converted into a solution of
$I^{\mathrm{EOPL}}$. The construction provides polynomial-time
evaluators for $f$ and its gradient. The actual instance of
$\textsc{2D-GDAFixedPoint}$ produced by the reduction is the normalized
function $\widetilde f(z)=f(Nz)/N^3$ on $[0,1]^2$, with accuracy
$\alpha=0.01/N^2$, as justified by \Cref{rmk:scaling} and detailed in \Cref{subsec:ReductionComplete}.
}

This result combined with the fact that $\mathsf{EOPL} = \mathsf{CLS}$ implies our main result for the white-box model. For convenience, the main steps of the proof are presented in separate subsections.

\subsection{Step I: Embedding EOPL in 2D Grid}

For the first part of the reduction, we use the EOPL instance from
\Cref{sec:preliminaries}. Let us recall the instance: its directed graph $G=(V,E)$ is a disjoint union of
directed paths on $2^n$ vertices, described by the circuits
$\mathsf{Succ}$, $\mathsf{Pred}$ and $\mathsf{Potential}$. The potential
increases along every edge of the path (hence the path is monotone), and $v^\star$ is the distinguished source.
Let $m$ be the number of output bits of $\mathsf{Potential}$, and write
$\Phi(v)=\mathsf{Potential}(v)$ and
$\operatorname{id}(v)\in\{0,\ldots,2^n-1\}$ for the integer encoded by the
label of a vertex $v$.

Our goal is to embed the paths of the instance in a two-dimensional grid and then use
the gadgets of the upcoming \Cref{sec:grids} to define a function on $[0,N]^2$.
We note that
the embedding must be {locally computable}. We start with some terminology.

\begin{definition}[Grid points, tiles and coarse blocks]
The grid points are the integer-coordinate points of $[0,N]^2$.
A unit square is $[a,a+1]\times[b,b+1]$ for integers
$0\leq a,b<N$. A tile contains a $6\times6$ array of grid points and
carries one of our basic gadgets. A coarse block groups $3\times3$ tiles,
and hence contains an $18\times18$ array of grid points.
Unit squares also occur between neighboring tiles and coarse blocks.
\end{definition}

\paragraph{Grid and numerical choices.}
We use $B$ coarse blocks along each axis, indexed by
$\{0,\ldots,B-1\}$, where
\[
 B=2^{n+m}+4.
\]
Our goal is to position the input vertices (which come with a label with $n$-bits and a potential value with $m$-bits) into the \emph{diagonal} of the grid. Hence,
the $2^{n+m}$ term provides one possible diagonal position for each
pair consisting of an $m$-bit potential and an $n$-bit vertex label.
The extra four positions will be used of some padding operation, as we will explain below.
Then the number of tiles $T = 3B$ comes from the definition of a tile and the number of grid points is $N+1 = 18B$. In particular, $N$ denotes the side length of
the domain.

\paragraph{Placing the vertices and padding.}
We can then place vertex $v$ in the diagonal coarse block
$\langle\rho(v),\rho(v)\rangle$, where the \emph{rank} $\rho$ is defined as
\[
 \rho(v)=2+2^n\Phi(v)+\operatorname{id}(v).
\]
Before adding the $2$, the rank of $v$ is the concatenation of the potential
bits followed by the vertex-label bits. Note that if $u\to v$, then $\rho(u)<\rho(v)$ and a vertex's position is obtained
with one evaluation of $\mathsf{Potential}$.

We artificially added the term 2 and we added the term 4 in the definition $B$ in order to ensure that the first two and
last two coarse rows and columns contain no vertices. This is what we refer to as
padding. In this way, ordinary vertices and their connecting paths stay away from
the physical boundary. We will use this extra no vertex space to place a special source node in the final construction.

\paragraph{Connecting the edges.}
To embed an edge $u\to v$ in the 2D grid, we start from the encoding 
$\langle\rho(u),\rho(u)\rangle$ and first move horizontally to
$\langle\rho(v),\rho(u)\rangle$, and then vertically upward to
$\langle\rho(v),\rho(v)\rangle$. These paths are monotone, in the sense that
they traverse only right and up. Note that the order is important: outgoing edges
leave to the right, and incoming edges arrive from below, matching
the entrances and exits of our gadgets. 
\Cref{fig:path-creation} illustrates the idea of ordering vertices by
potential and label. Crossings are handled later by some standard local re-routing, as
described in \Cref{sec:pathCrossings}.

\begin{figure}[!ht]
    \centering
{\includegraphics[width=0.7\linewidth]{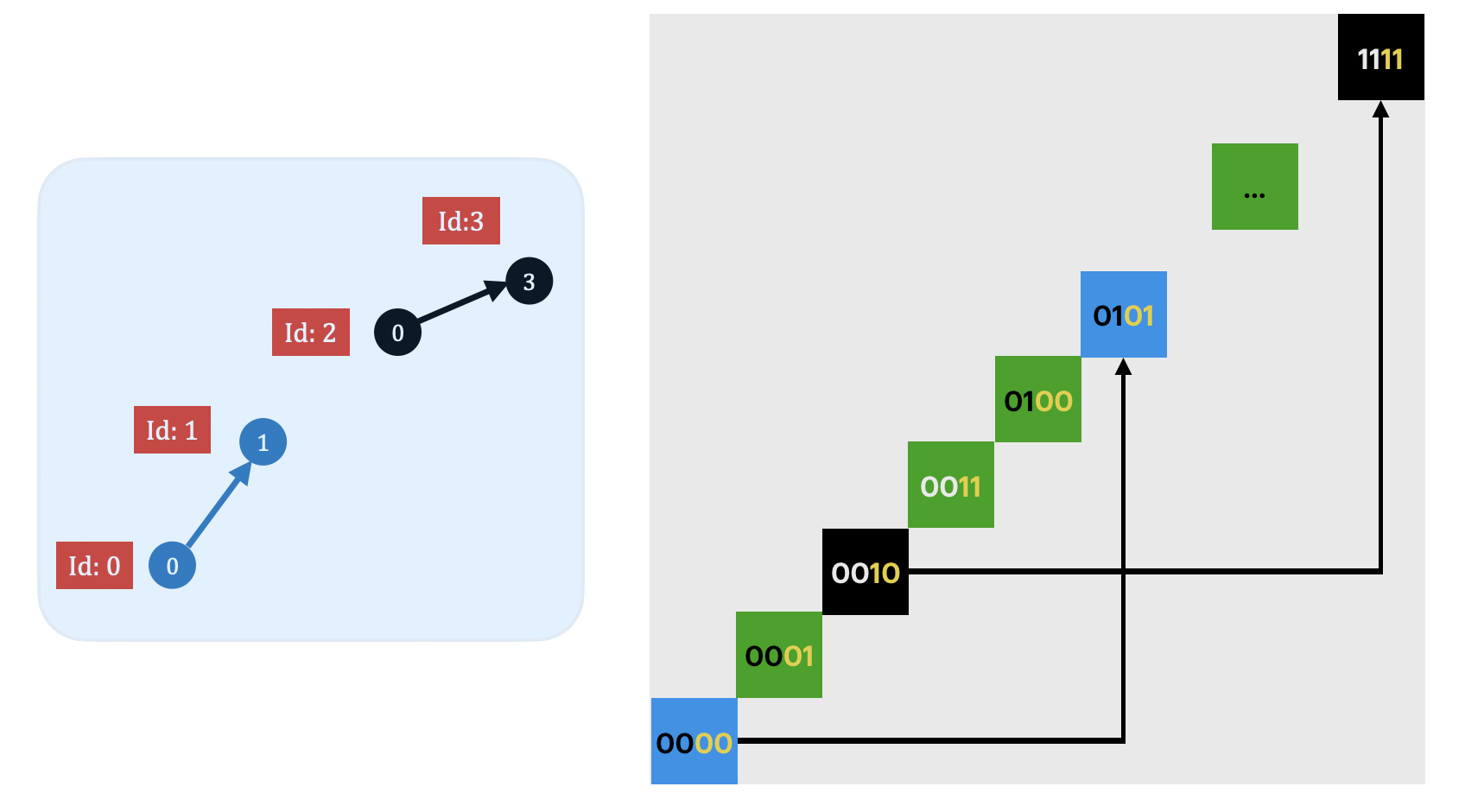}}
    \caption{{Schematic illustration of ordering vertices by potential and label,
before padding. The actual vertex
positions are given by $\rho(v)$. In the right-hand
diagram, each edge first goes right and then up, leaving its source
to the right and entering its destination from below. Crossings are
subsequently replaced by the local block gadgets.}}
    \label{fig:path-creation}
\end{figure}

\paragraph{Grid details.}
Given the above embedding, we can classify our sub-squares in the following categories. Recall
that eventually our goal is to define a function over the grid.
\begin{enumerate}
 \item \emph{Background.} The blocks which are not used by a path will receive the background
 gadgets. We will use a simple function to define those. In particular, for $(x,y)$ not in the path, we use the function
$f(x,y) = x-y+4N+10$ (the reason will become clear in the next section). Observe that if
there are no paths, the only solution appears in the origin.
 
 \item \emph{The source.} The origin, i.e., the bottom left part of the grid is very important. Since it is not a solution of
EOPL, we have to construct our function in a way that it does not have a GDA fixed point
there. We will see how to handle this later but we mention that we will use a special gadget moving from $\<0,0\>$ to $\<\rho(v^\star),0\>$ and then to $\<\rho(v^\star),\rho(v^\star)\>$.
 
 \item \emph{Other path endpoints.} A path can start in the background
 at a non-trivial source or end in the background at a sink. These
 are valid EOPL solutions, so the corresponding endpoint regions are
 precisely where our construction permits approximate GDA fixed points.
\end{enumerate}

Every vertex $v\neq v^\star$ has at most one incoming and one outgoing
edge. Accordingly, its diagonal block has one of the following four forms:
\begin{enumerate}
 \item $v$ is isolated: Then the block will not contain any path and will fully be in the
background, thus not containing any GDA fixed point.

 \item $v$ has one incoming edge and no outgoing edge: the block will contain
the end of a monotone path. There will be a GDA fixed point at the end of the path, which
is fine, since $v$ is a sink of the End-of-Potential-Line instance.

 \item $v$ has one outgoing edge and no incoming edge: a non-trivial-source
 gadget starts a path to the right. There will be a GDA fixed point at the start of the path, which
is fine, since $v$ is a non-trivial source of the End-of-Potential-Line instance.
 
 \item $v$ has both an incoming and an outgoing edge: in this case, we will be able to connect the two
paths without introducing GDA fixed points, i.e., implementing a turn.
\end{enumerate}

\subsection{Step II: From Monotone Paths to Functions on Grids}
\label{sec:grids}

Now we have a collection of monotone paths placed in our $N \times N$ grid and can successfully embed any EOPL instance with $2^n $ nodes.
This section now explains how to define some particular path gadgets in order to define our function $f$, which will be used to define the instance of $\mathsf{2d-GDAFixedPoint}$.

For our purposes, it will be more convenient to work with a discrete grid $G.$
Let us consider the grid $G = \{0,1,...,N\} \times \{0,1,...,N\}$. Our goal is to color and assign a direction to each point of the grid $G$. The colors will correspond to function values and the directions to gradients of a function, which will be later obtained as the continuous extension of this discrete collection of values and gradients to the continuous $[0,N]^2$ cube (using bicubic interpolation in two dimensions).

\subsubsection{Defining the function values}
\label{sec:function-values}
In this section, we introduce the way that we will assign function values to the members of our class. We will use three colors, 0 (black), 1(blue) and 2 (green). Let us now specify the (local) function values. In particular, if a grid point $(x,y) \in G$ is:
\begin{enumerate}
    \item in the color 0 regime, we set $f(x,y) = x-y + 4N + 10$
    \item in the color 1 regime, we set $f(x,y) = -x-y + 8N + 20$
    \item in the color 2 regime, we set $f(x,y) = x+y$
\end{enumerate}

Some remarks are in order. One can check that color 1's values are strictly larger than color 0's values, which are strictly larger than color 2's values and that
the gap between any two color regimes at any point is at least 10. Note that $3N+10 \leq \mathrm{color~0~values} \leq 5N + 10$,
$6N+20 \leq \mathrm{color~1~values} \leq 8N + 20$, and,
$0\leq \mathrm{color~2~values} \leq 2N$.

The black color regime will be used for the ``background'', i.e., anything that is not part of the path. 
We will re-scale the function at the end of the construction to deal with the domain $[0,1]^2$.

\subsubsection{Path gadgets, coloring of the grid, and monotone paths}
\label{sec:path-gadgets}
Given a grid $G$ of size $N \times N$, any function $f$ on it will correspond to a coloring of each $(x,y) \in G$ (function value) together with a direction, indicating the descent in function value. 
To design each hard function in our family, we introduce the following steepest-descent path gadgets for our path.
All the path gadgets correspond to a $6 \times 6$ sub-square of the grid $G$. Hence, a function will be a result of padding many of those path gadgets in our $N \times N$ grid (with some constraints). 

The main constraint is that each function in the hard family will intuitively correspond to a collection of ``monotone'' paths, i.e., paths that move either up or right. The construction specifies for each grid point a color (which represents the value of $f$ at
that point) and an arrow that represents the direction of $-\nabla f$ at that point. A general
rule that one should follow throughout the construction is that the function values should be
consistent with the arrows. For example, if some grid point has an arrow pointing to the
right, then the adjacent grid point to the right should have a lower function value, while
the adjacent grid point to the left should have a higher function value.
A key property of the functions in our class is that there will only be an approximate fixed point of $\mathsf{GDA}$ at the \emph{sink} of the associated monotone path. We will prove this later.

\subsubsection{Path Formulation}

We proceed with a formal description of our gadgets.

\begin{definition}
    [Tile]
    Given a grid $G = \{0,1,...,N\}^2,$ we define a tile as a subset of the grid containing a $6 \times 6$ sub-square (containing consecutive points of the grid). Each one of the 36 points of the tile will have a color (black, green or blue) and a direction (up, down, left or right).
    We will identify each tile with the ordered pair $\<u,v\>$ for some $u,v \in \{0,1,...,N-1\}.$
\end{definition}

Our path gadgets will consist of some particular tiles, which are inspired by the work of \cite{fearnley2021complexity}. We will use in total 9 different tiles.

First, let us define the tiles that correspond to the background, i.e., points that are not related with the function that we define.
\begin{definition}
    [Background]
    Given a grid $G = \{0,1,...,N\}^2,$ the background consists of the tiles appearing in \Cref{fig:background}. 
    \begin{enumerate}
        \item In the case that the tile $\<u,v\>$ does not appear in the left layer of the grid, i.e., $u \neq 0$, we will always use the tile of \Cref{fig:back1}.
        \item In the case that the tile $\<u,v\>$ appears in the left layer of the grid, i.e., $u = 0$, we will  use the tile of \Cref{fig:back2}.
    \end{enumerate}
\end{definition}
\begin{figure}[!ht]
    \centering
   \begin{subfigure}[b]{0.45\linewidth}
        \centering
        \includegraphics[width=\linewidth]{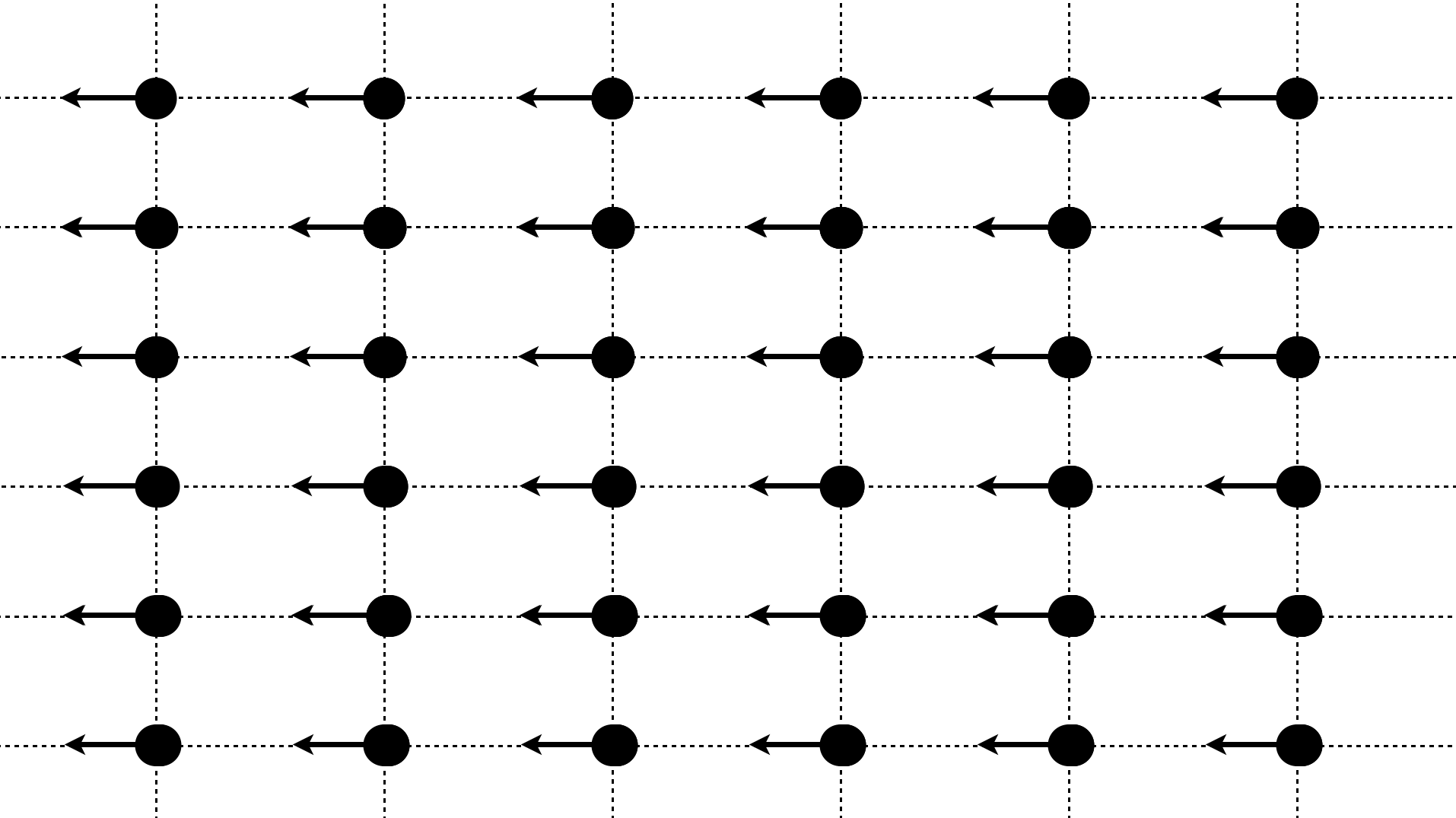}
        \caption{Background going left}
        \label{fig:back1}
    \end{subfigure}
    \hfill
    \begin{subfigure}[b]{0.45\linewidth}
        \centering
        \includegraphics[width=\linewidth]{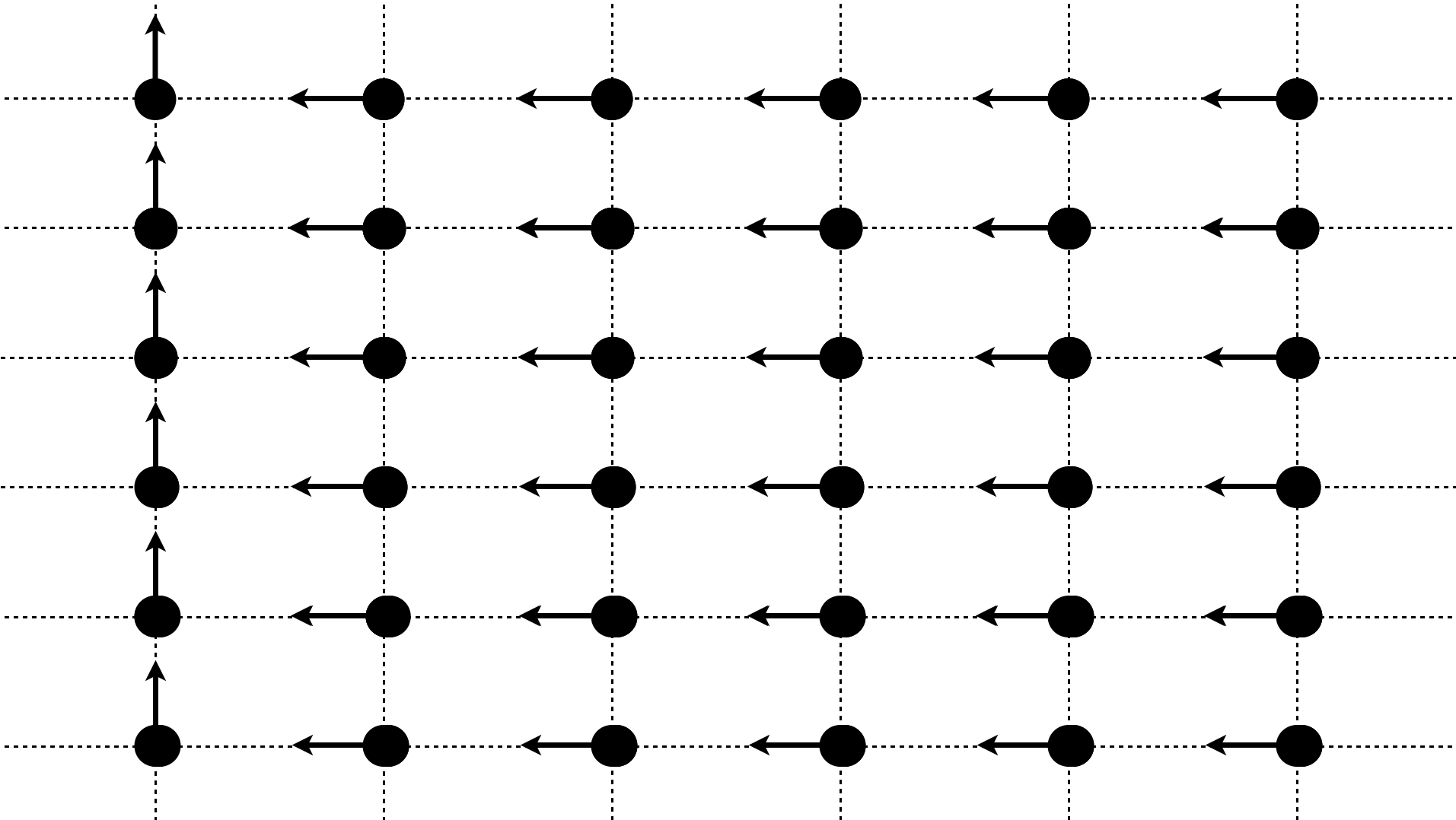}
        \caption{Background meeting the left boundary}
        \label{fig:back2}
    \end{subfigure}
    
    \caption{Background Tiles}
    \label{fig:background}
\end{figure}

Next, we proceed with the definition of monotone paths. Recall that after the embedding of EOPL, there are two kinds of monotone paths. The path with source the given source $v^\star$ of the EOPL instance. This path is special since it start from $\<0,0\>$ and its source should not be a solution (the sink is the only solution). The second kind consists of path that start  from the ``interior'' of the grid, and whose source and sink are solutions.
\begin{definition}
    [Monotone Path]
    Given a grid $G = \{0,1,...,N\}^2,$ we  define a monotone path on it as a list of neighboring tiles with either
    \begin{enumerate}
        \item source placed at the positions $\<0,0\>$ and $\<0,1\>$ of $G$ using the tiles of  \Cref{fig:source},
        \item  then moving straight either right (as in \Cref{fig:right}) or up (as in \Cref{fig:up}) or  
        making a right-up turn (as in \Cref{fig:RU}) or making an up-right turn (as in \Cref{fig:UR}),
        \item until reaching the sink tile of \Cref{figure:sink} ({The sink must be entered from below, because its gadget has only a bottom entrance}).
    \end{enumerate}
    or 
    \begin{enumerate}
        \item source placed at some position $\<i,i\>$ of $G$  for $i > 0$ using the tile of  \Cref{fig:right}.
        \item  then moving straight either right (as in \Cref{fig:right}) or up (as in \Cref{fig:up}) or  
        making a right-up turn (as in \Cref{fig:RU}) or making an up-right turn (as in \Cref{fig:UR}),
        \item until reaching the sink tile of \Cref{figure:sink} ({The sink must be entered from below, because its gadget has only a bottom entrance}).
    \end{enumerate}
\end{definition}

An example of a monotone path can be found in \Cref{fig:example}. For this example, the source is not a solution. On the other side, \Cref{fig:case2} gives a monotone path that is ''inside'' the grid and where both source and sink are solutions.

Let us also comment on how to treat the input source $v^\star.$ Assume that EOPL has given source $v^\star$ and let $b(v^\star)$ be its sub-square in the diagonal of the grid. For our construction, we connect Tile $\<0,0\>$ and Tile $\<0,1\>$ with the $b(v^\star)$ sub-square in order to avoid creating solutions. This is done by first extending Tile $\<0,1\>$ to the right until reaching the column of $b(v^\star)$ and then extending the path vertically until reaching the given source of EOPL. We add the following remark for more details.

\begin{remark}[Embedding the distinguished source]
\label{rem:distinguished-source}
Let $v^\star$ be the distinguished source of the EOPL instance with
diagonal block in the 2D grid $\< \rho(v^\star),\rho(v^\star) \>$. Since $v^\star$ is not a valid solution of EOPL, we connect a
special boundary source to this block, so that $v^\star$ becomes an
internal point of the embedded paths. More precisely, let $c^\star=3\rho(v^\star)+1$, so that $\< c^\star,c^\star\>$ is the central tile of the 
block representing $v^\star$. We place the special source gadgets of
\Cref{fig:source} in tiles $\langle0,0\rangle$ and $\langle0,1\rangle$. Then, we connect them as follows:
\[
\langle0,0\rangle
\longrightarrow \langle0,1\rangle
\xrightarrow{\text{right}} \langle c^\star,1\rangle
\xrightarrow{\text{up}} \langle c^\star,c^\star\rangle
\xrightarrow{\text{right}} \cdots .
\]
All pairs in this display correspond to tile indices. Any crossings of the added path with other paths are resolved using
the local rerouting construction of \Cref{sec:pathCrossings}.
\end{remark}

\begin{figure}[!ht]
    \centering
   \begin{subfigure}[b]{0.45\linewidth}
        \centering
        \includegraphics[width=\linewidth]{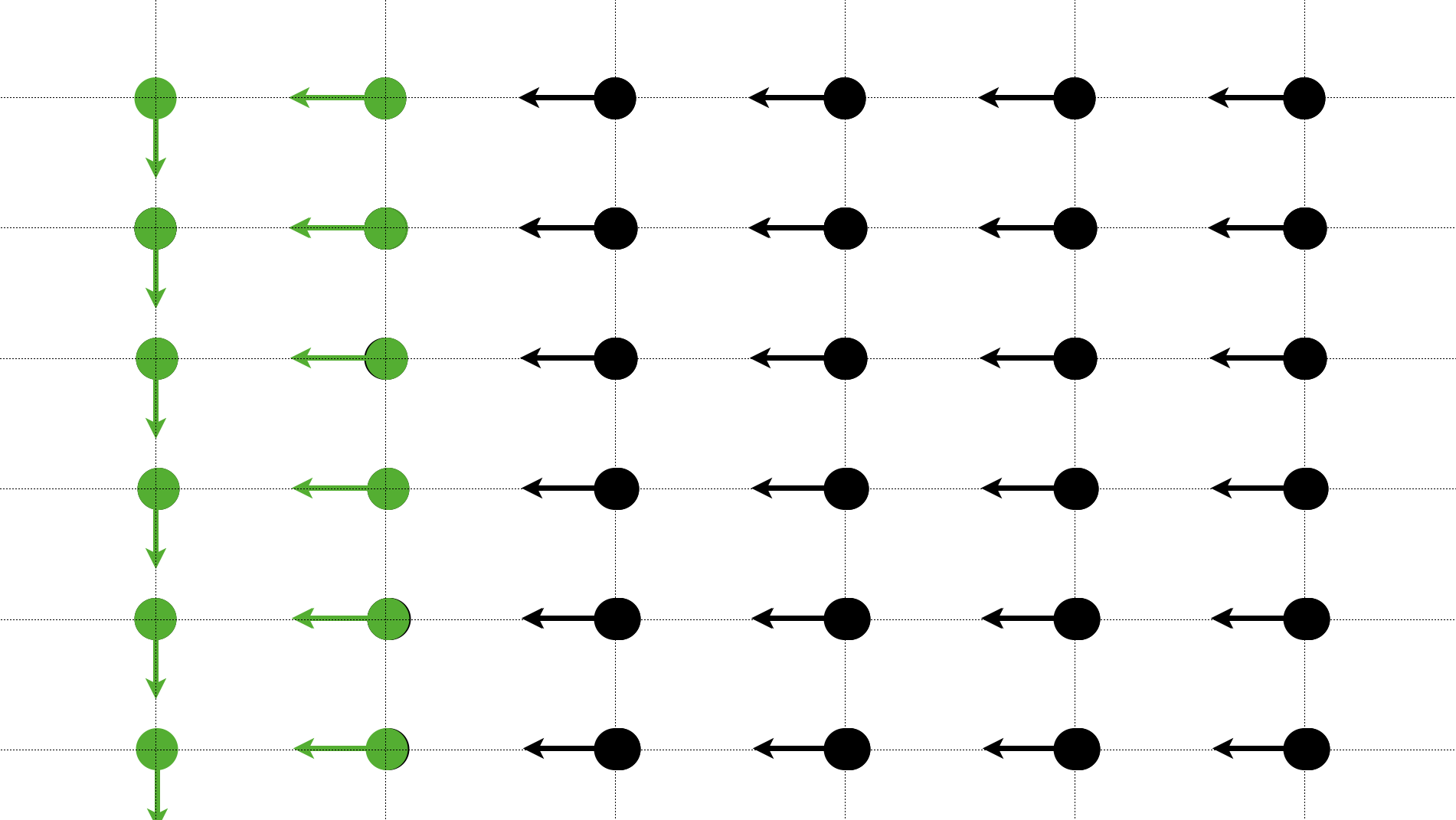}
        \caption{Tile $\<0,0\>$}
        \label{fig:source1}
    \end{subfigure}
    \hfill
    \begin{subfigure}[b]{0.45\linewidth}
        \centering
        \includegraphics[width=\linewidth]{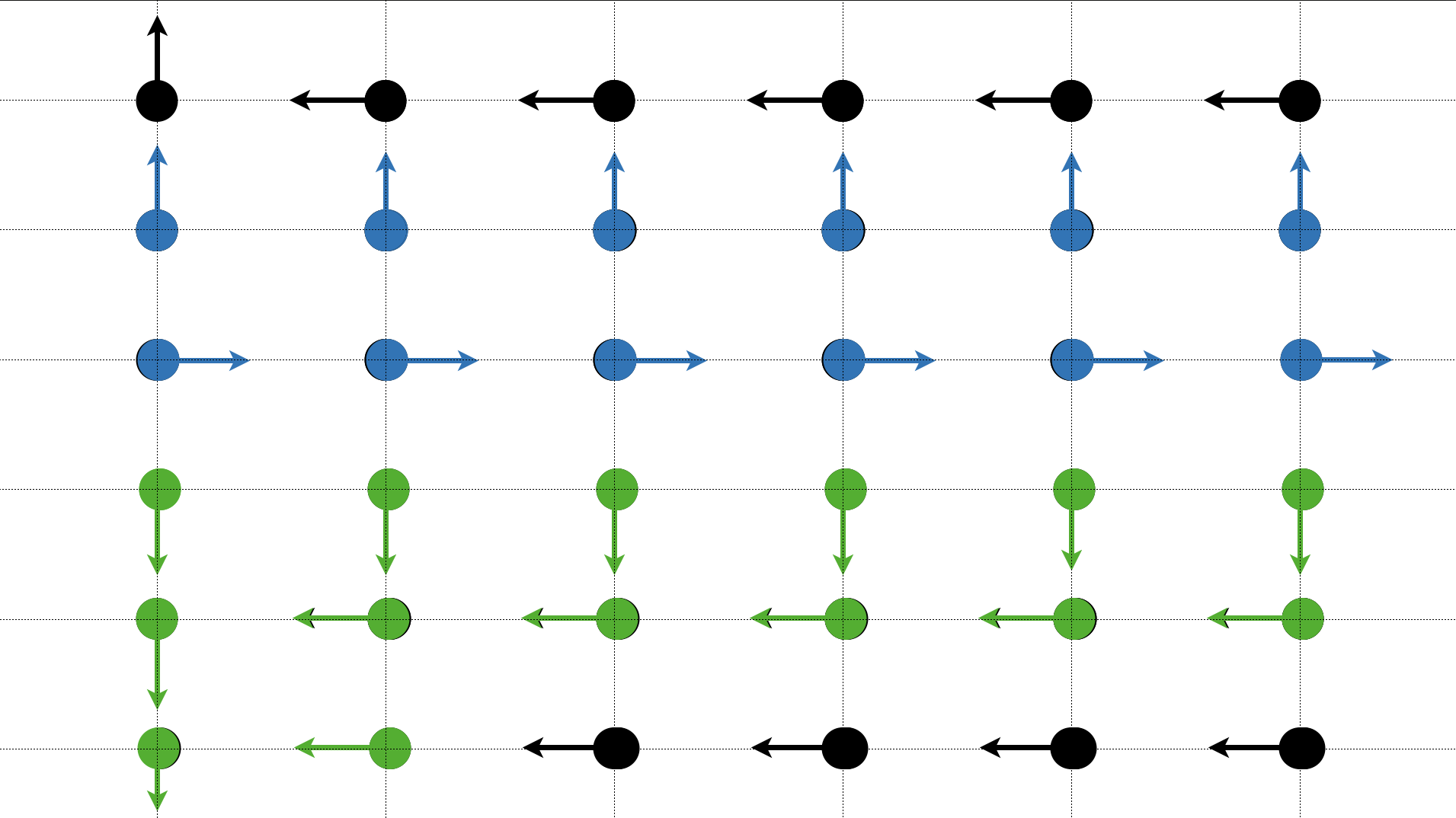}
        \caption{Tile $\<0,1\>$}
        \label{fig:source2}
    \end{subfigure}
    
    \caption{Source of Monotone Path}
    \label{fig:source}
\end{figure}

\begin{figure}[!ht]
    \centering
   \begin{subfigure}[b]{0.45\linewidth}
        \centering
        \includegraphics[width=\linewidth]{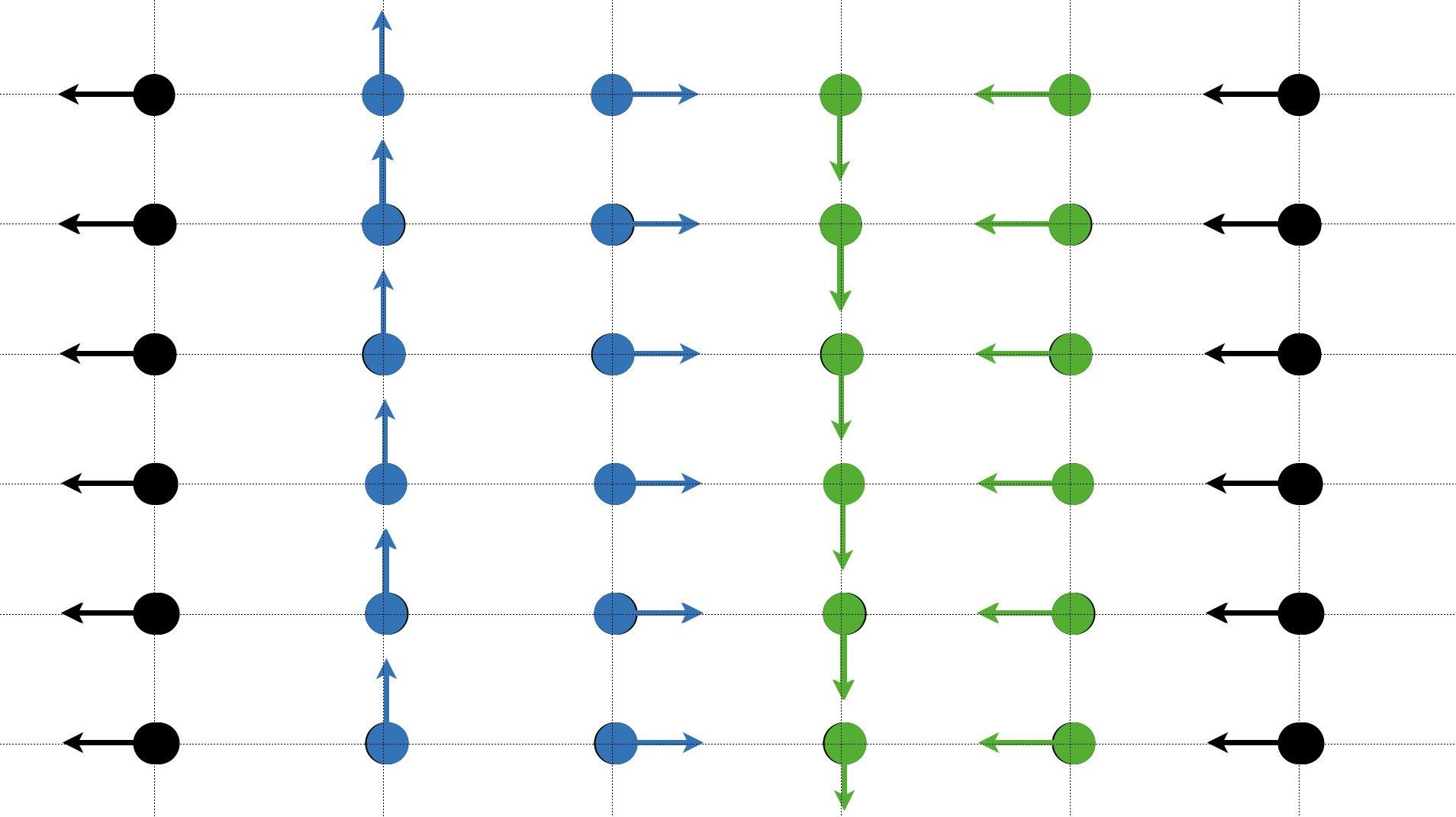}
        \caption{Path I: Up}
        \label{fig:up}
    \end{subfigure}
    \hfill
    \begin{subfigure}[b]{0.45\linewidth}
        \centering
        \includegraphics[width=\linewidth]{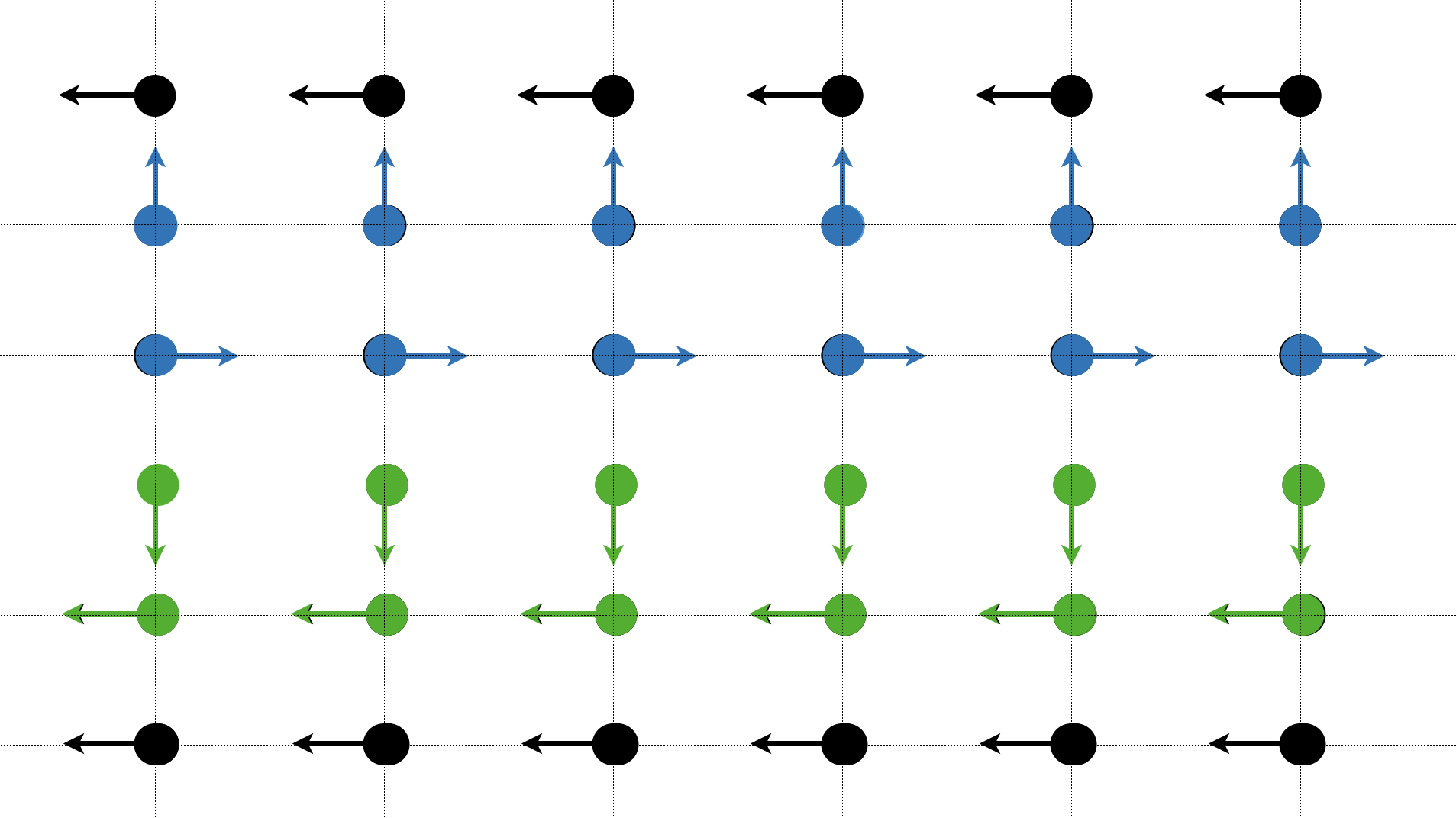}
        \caption{Path II: Right}
        \label{fig:right}
    \end{subfigure}
    
    \caption{Lines of Monotone Path}
    \label{fig:lines}
\end{figure}

\begin{figure}[!ht]
    \centering
   \begin{subfigure}[b]{0.45\linewidth}
        \centering
        \includegraphics[width=\linewidth]{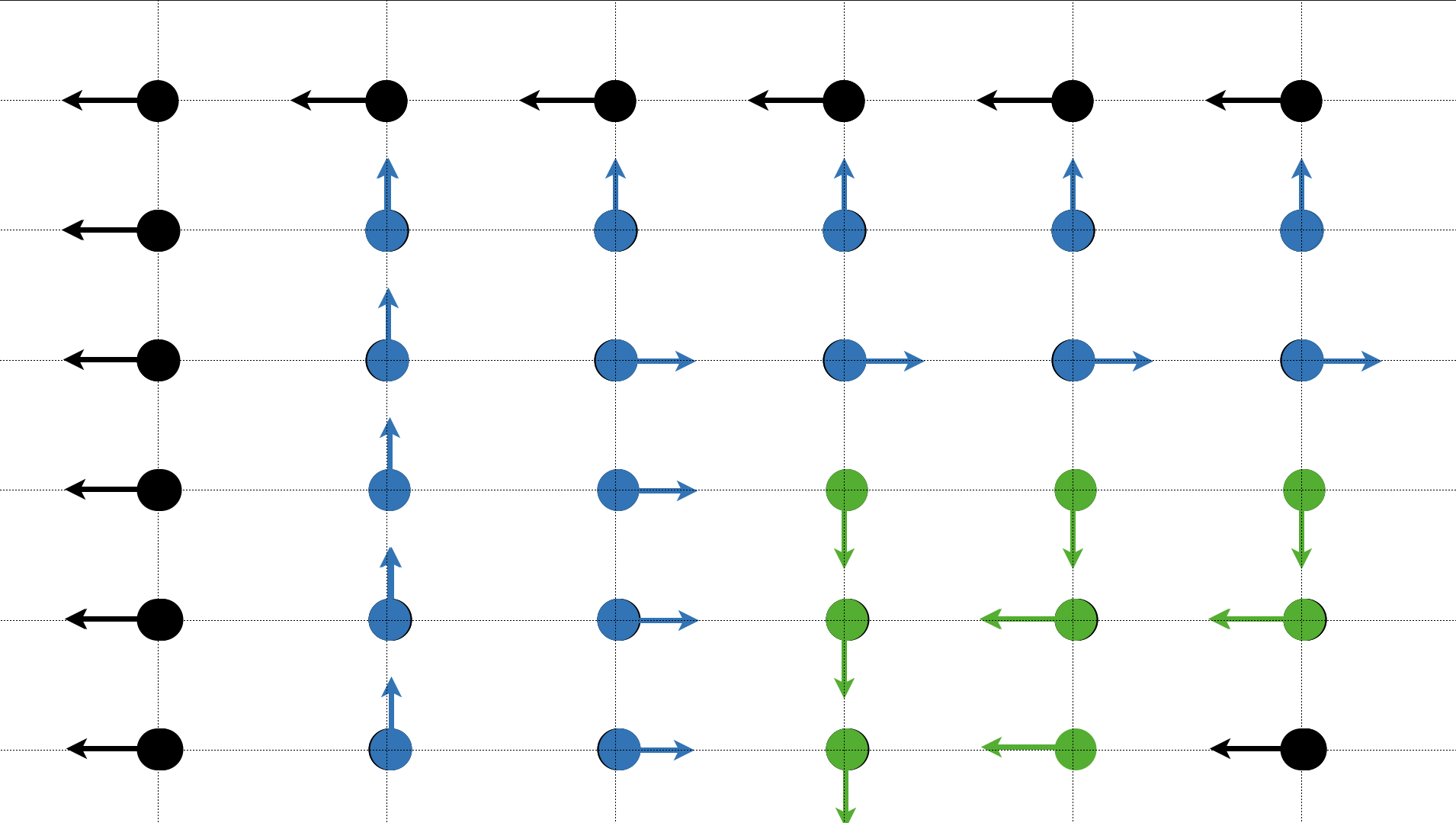}
        \caption{Turn I: Up and Right}
        \label{fig:UR}
    \end{subfigure}
    \hfill
    \begin{subfigure}[b]{0.45\linewidth}
        \centering
        \includegraphics[width=\linewidth]{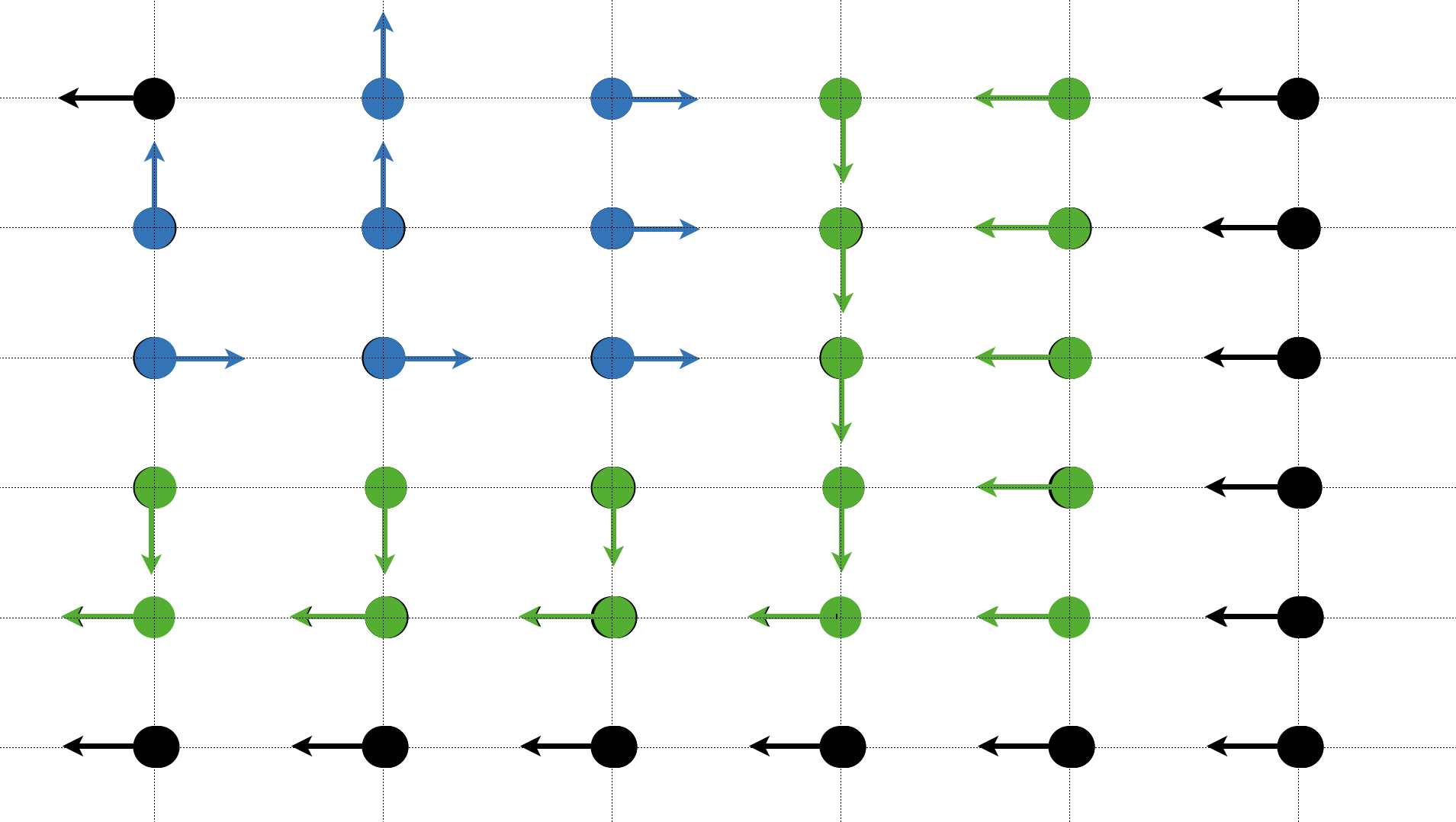}
        \caption{Turn II: Right and Up}
        \label{fig:RU}
    \end{subfigure}
    
    \caption{Turns of Monotone Path}
    \label{fig:turns}
\end{figure}

\begin{figure}[!ht]
    \centering
    \includegraphics[scale=0.2]{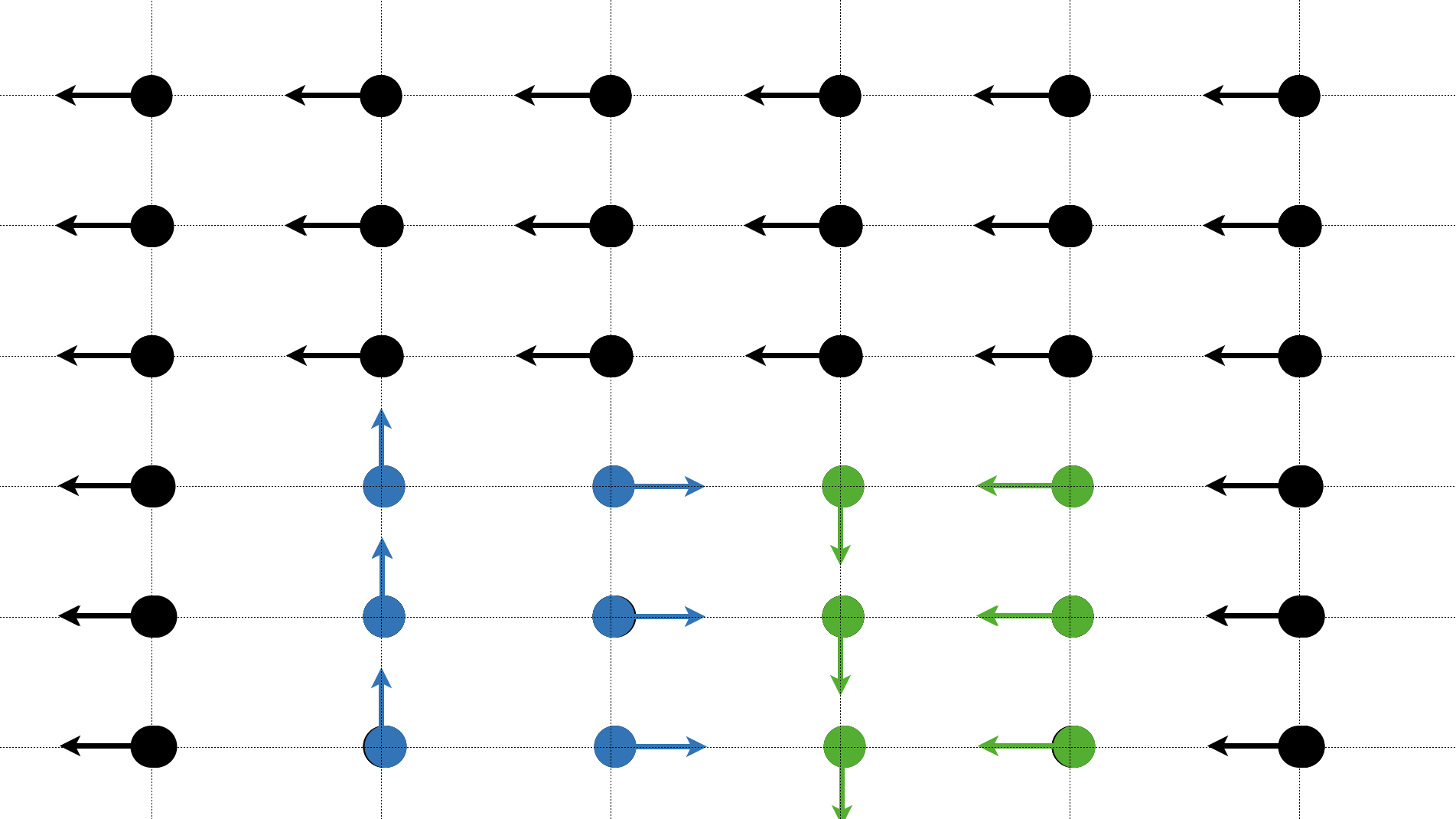}
    \caption{Sink of Monotone Path}
    \label{figure:sink}
\end{figure}

\begin{figure}[!ht]
    \centering
    \includegraphics[width=0.7\linewidth]{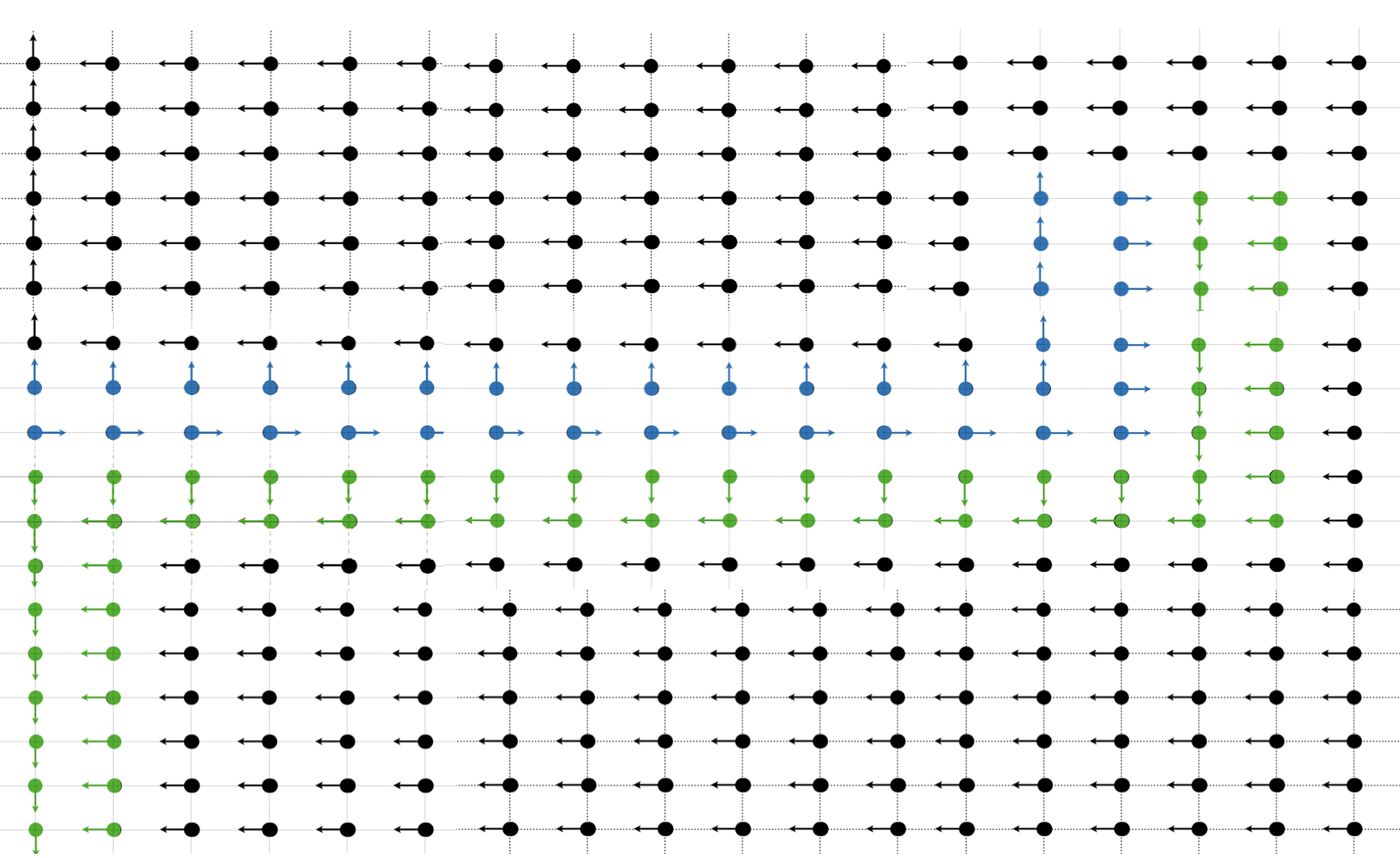}
    \caption{An Example of a Monotone Path on a 18 $\times $ 18 grid using 9 tiles.
    The source of this path is the input source of EOPL and hence does not contain a GDA fixed point (only the sink contains a solution).
    In $\<0,0\>$ and $\<0,1\>$, we use the source tiles, in $\<0,2\>$ we use the background tile that touches the left layer, and in positions $\<1,0\>, \<2,0\>, \<1,2\>$, we use the background tile that is going left. Finally, in position $\<1,1\>$ we use the tile where the path is moving right, in $\<1,2\>$ the path is going right and up and finally in $\<2,2\>$ we place the sink tile.}
    \label{fig:example}
\end{figure}

\begin{figure}[!ht]
    \centering
    \includegraphics[width=0.7\linewidth]{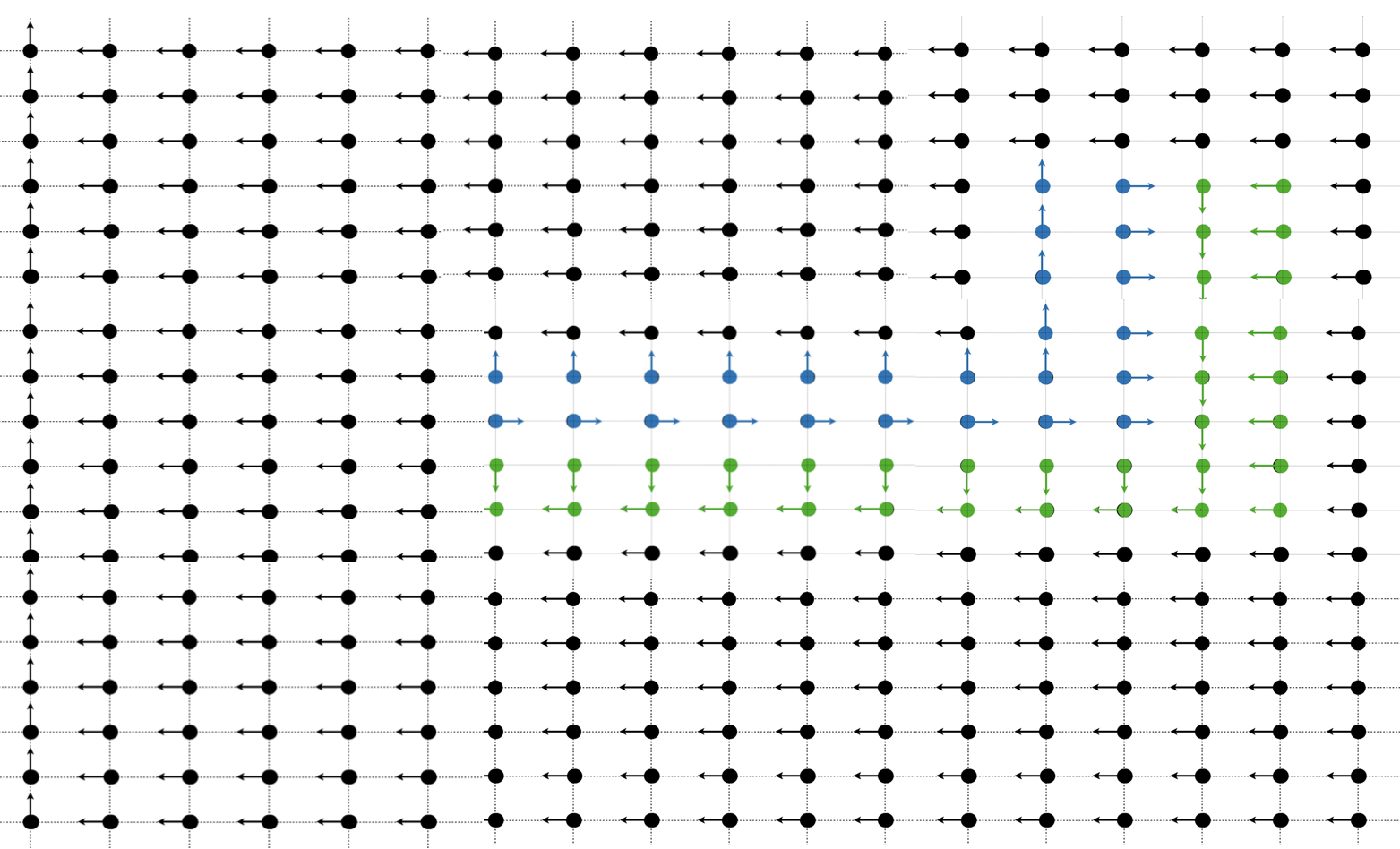}
    \caption{An Example of a Monotone Path on a 18 $\times $ 18 grid using 9 tiles. In this example, both the source and the sink sub-squares contain a solution and this is also the case in EOPL, since it corresponds to a non-trivial source.}
    \label{fig:case2}
\end{figure}

Given the above collection of tiles, we can define monotone paths on the grid together with the background tiles. Any such path is associated with a function $f$ defined over the grid $G$. The values of the function at a point $(x,y) \in G$ are determined by the colors of the point (black, green and blue) based on \Cref{sec:function-values} and the gradient value is determined by the direction assigned (as shown in the tiles).

\begin{remark}
[Connection to \cite{fearnley2021complexity}]
In both our work and \cite{fearnley2021complexity}, the arrows in the construction correspond to the direction of $-\nabla f$, i.e., the direction where the function value decreases. However, the solution concepts are different. In \cite{fearnley2021complexity}, we can think that there are two minimization players that want to 
decrease the function value and hence move according to the arrows both horizontally and vertically. Hence, for instance, in \Cref{fig:example}, the bottom left square has a KKT solution. On the other side, we are looking for fixed points of GDA. In our case, the min player, that is moving horizontally, wants to follow the arrow since it decreases the function value. 
However, the max player is moving vertically and wants to increase the function value; so she wants to move in the opposite direction of the arrow. 
\end{remark}

\subsubsection{Handling Path Crossings}
\label{sec:pathCrossings}
By our construction, there might exist points
where the monotone paths between different connected components of the EOPL graph cross. First of all, note that it is impossible to have more than two paths
traversing a sub-square, and thus any crossing involves exactly two paths. Furthermore, no crossing can occur in big
squares where a “turn” occurs, since, in that case, the turn connects the two paths.
The only way for two paths to cross is the case where a  path traverses a big square from left to right,
and a second path traverses the same big square from bottom to top. This problem always occurs when one tries to embed an End-of-Line instance in a two-dimensional domain and can be solved using re-routing the paths \cite{chen2009complexity,fearnley2021complexity}.
The idea of Chen and Deng \cite{chen2009complexity} is to locally re-route the two
paths so that they no longer cross. This modification has the following two crucial properties: a) it is completely local,
and b) it does not introduce any new solution (in our case a fixed point of GDA). This is because, even if the resulting paths change after the local re-routing, the solutions in the routed instance are exactly the same as in the original instance. We refer to Figure 10(a) of \cite{fearnley2021complexity} for details on how to do the re-routing.

{
In our construction, we implement this rerouting by the following
$3\times3$ block, whose entries denote the tile panels in Figures~\ref{fig:background}--\ref{fig:turns}
and whose rows are listed from top to bottom:
\[
\begin{array}{ccc}
  \ref{fig:back1} & \ref{fig:up} & \ref{fig:back1}\\
  \ref{fig:right} & \ref{fig:RU} & \ref{fig:UR}\\
  \ref{fig:back1} & \ref{fig:UR} & \ref{fig:RU}
\end{array}
\]
It connects the left entrance to the top exit and the bottom entrance
to the right exit, with matching tile boundaries and no new endpoint.
The required constant-factor refinement of the embedding is absorbed
in~$N$.
}

Finally, any sub-square that is not traversed by any path (including the isolated vertices of the End-of-Potential-Line instance) will take the background coloring.

\subsection{Extending the function to the continuous grid
}
\label{sec:extension}

For now, we have only defined the function $f$ and the direction of its gradient at
all grid points of $G$. In order to extend $f$ to the whole domain $[0, N]^2$, we use bicubic interpolation. Consider any sub-square and a point $(x,y) \in [0,1]^2$ inside it. Then, the bicubic interpolation inside this sub-square will be a polynomial of the form:
\begin{equation}\label{eq:bicubic}
f(x,y) = \sum_{i=0}^3 \sum_{j=0}^3 a_{ij} x^i y^j
\end{equation}
where the coefficients $a_{ij}$ depend on the function values and the gradients in the four corners of the sub-square ($00,01,10$, and $11$), and are computed as follows \footnote{See also \url{https://en.wikipedia.org/wiki/Bicubic_interpolation} and \cite{fearnley2021complexity}.}
\begin{align}\label{eq:bicubic-coefficients}
&\begin{bmatrix}
a_{00} & a_{01} & a_{02} & a_{03}\\
a_{10} & a_{11} & a_{12} & a_{13}\\
a_{20} & a_{21} & a_{22} & a_{23}\\
a_{30} & a_{31} & a_{32} & a_{33}\\
\end{bmatrix}\\
&=
\begin{bmatrix}
1 & 0 & 0 & 0\\
0 & 0 & 1 & 0\\
-3 & 3 & -2 & -1\\
2 & -2 & 1 & 1\\
\end{bmatrix} \cdot 
\begin{bmatrix}
f(0,0) & f(0,1) & f_y(0,0) & f_y(0,1)\\
f(1,0) & f(1,1) & f_y(1,0) & f_y(1,1)\\
f_x(0,0) & f_x(0,1) & f_{xy}(0,0) & f_{xy}(0,1)\\
f_x(1,0) & f_x(1,1) & f_{xy}(1,0) & f_{xy}(1,1)\\
\end{bmatrix}
\cdot
\begin{bmatrix}
1 & 0 & -3 & 2\\
0 & 0 & 3 & -2\\
0 & 1 & -2 & 1\\
0 & 0 & -1 & 1\\
\end{bmatrix}\,.
\end{align}
In the above, $f_x$ (resp. $f_y$) denote the partial derivative of $f$ with respect to $x$ (resp. $y$) and $f_{xy}$ denotes the second order partial derivative of $f$ with respect to $x$ and $y$. 
We set the values of $f, f_x, f_y$ and $f_{xy}$ at the four corners of the square in a similar way as \cite{fearnley2021complexity}:
\begin{itemize}
	\item The values $f(0,0), f(0,1), f(1,0)$ and $f(1,1)$ are set according to the value regimes in our construction, i.e., the colors of the corner points. For example if the point $(0,0)$ (wrt to the sub-square) has coordinates $(x,y)$ in the grid and is colored black, then $f(0,0) = x - y + 4N + 10$. 
	
    \item The values of $f_x(0,0), f_x(0,1), f_x(1,0), f_x(1,1), f_y(0,0), f_y(0,1), f_y(1,0)$ and $f_y(1,1)$ are set based on the direction of steepest descent ($- \nabla f$) in our construction, with a length multiplier of $\delta = 1/2$.
    It is important to note that the arrows in our construction indicate the direction of $-\nabla f$ (function decrease). Hence, the min player, that is moving horizontally, wants to follow the arrow since it decreases the function value. 
    On the other side, the max player, that is moving vertically, wants to increase the function value and hence wants to move in the opposite direction of the arrow.
    For example, if the arrow of steepest descent at $(0,1)$ is pointing to the left, then we set $f_x(0,1)=\delta$ and $f_y(0,1)=0$. If it is pointing up, then we set $f_x(0,1)=0$ and $f_y(0,1)=-\delta$.

	\item We always set $f_{xy}(0,0) = f_{xy}(0,1) = f_{xy}(1,0) = f_{xy}(1,1) = 0$.
\end{itemize}

This yields a function $f : [0,N]^2 \to \reals$ (or more generally $\reals^2 \to \reals)$ which is continuously differentiable in $\reals^2$. Also, since the extension will be a polynomial, both $f$ and its gradient $\nabla f$ will be Lipschitz-continuous on $[0, N]^2$ with Lipschitz constant
$L = \Omega(N)$. This means that $f$ is $\Theta(N)$-smooth. As mentioned in \cite{fearnley2021complexity}, we can extend the function not only on $[0,N]^2$ but also in $\mathbb R^2$ (for the details we refer to their work).

\begin{proposition}
[Extension of the function]
Given a monotone path, the function $f : \reals^2 \to \reals$, obtained by bicubic interpolation on each one of the tiles of $\{0,1,...,N\}^2$, has the following
properties:
\begin{enumerate}
    \item $f$ is continuously differentiable on $\reals^2$, and,
    \item $f$ and its gradient $\nabla f$ are Lipschitz-continuous on $[0, N]^2$ with Lipschitz constant $L = \Theta(N)$.
\end{enumerate}
\end{proposition}
\begin{proof}
Let $f_x, f_y$ denote the partial derivatives with respect to $x$ and $y$ respectively and $f_{xy}$ 
denotes the second
order partial derivative with respect to $x$ and $y$. The proof is essentially the same as in \cite[Lemma 4.2]{fearnley2021complexity} but in our case the values of $f, f_x, f_y, f_{xy}$ are bounded by $2^4 N$ instead of $2^3 N$.
\end{proof}

\subsection{Correctness}
\label{sec:valid}

{We have worked on the domain $[0,N]^2$. For this section, we will work with constant approximation accuracy $\epsilon = 0.01$.}
In order for the above construction to be correct we have to show that 0.01-first order Nash points of $f$ only lie:
\begin{enumerate}
    \item either at a  source sub-square of some monotone path with source not corresponding to the one given as input to EOPL
    \item  or at some sink sub-square (the sub-square at the end of the monotone path).
\end{enumerate}
As we discussed before, in these sub-squares we embedded the solutions of EOPL. Now we have to show that our construction did not create any other approximate fixed points of GDA.

{
We divide the proof into two parts. The reason is that an $\epsilon$-approximate fixed point of $\mathsf{GDA}$ of $f$ is equivalent to an $\epsilon$-$\mathsf{KKT}$ point of $f$ in the interior of the domain. This is because, as long as
$(x  - \nabla_x f(x, y), y + \nabla_y f(x, y)) \in [0,N] \times [0,N]$, then the fixed point condition reads as
\begin{equation}\label{eq:int-KKT}
\|\nabla_x f(x,y) \| \leq \epsilon\,,
\|\nabla_y f(x,y) \| \leq \epsilon\,.
\end{equation}
Hence, $(x,y)$ is an $\epsilon$-$\mathsf{KKT}$ of $f$. The work of \cite{fearnley2021complexity} provides sufficient conditions on the values/coloring of each grid point such as the function the function $f$ constructed by the bicubic interpolation does not admit any interior $\epsilon$-KKT point~(Equation~\ref{eq:int-KKT}). Our construction abides with these conditions apart from sink squares or non-trivial source squares. Hence, it remains to show that the desired property only for the points for which the projection operation is needed either in $X$ or $Y$. 

\begin{definition}
[Good Points]
For a function $f : [0,N]^2 \to \mathbb R$, we will say that $(x,y) \in [0,N]^2$ is good if
\[
(x  - \nabla_x f(x, y), y + \nabla_y f(x, y)) \in [0,N] \times [0,N]\,.
\]
\end{definition}

\begin{fact}
    For a function $f : [0,N]^2 \to \mathbb R$, let $(x,y) \in [0,N]^2$ be a good point. Then 
    if it is an $\epsilon$-approximate fixed point of $\mathsf{GDA}$ of $f$, then it 
    is also an $\epsilon$-$\mathsf{KKT}$ of $f$.
\end{fact}

A last observation is in order. Let us assume that $(x,y)$ is not good and direction $x$ is the one that violates the goodness of $(x,y)$. Since we are working on $[0,N]^2$, this means that either 
$x - \nabla_x f(x,y) < 0$ and so 
$\Pi_{[0,N]}(x - \nabla_x f(x,y)) = 0$ or
$x - \nabla_x f(x,y) > N$ and so 
$\Pi_{[0,N]}(x - \nabla_x f(x,y))  = N$. Let us assume that $(x,y)$ is close to the left boundary. Then $(x,y)$ is not an $\epsilon$-fixed point of $\mathsf{GDA}$ of $f$ is either $x > \epsilon$ or $\|\nabla_y f(x,y)\| > \epsilon$.  

In summary, we have the following connection between fixed points of gradient descent/ascent of $f$ and the goodness of a  point $(x,y)$.

\begin{observation}
    Let $(x,y)$ be an $\epsilon$-approximate fixed point of $\mathsf{GDA}$ of $f$ then one of the following holds:
\begin{enumerate}
    \item If $(x,y)$ is good, then  $\|\nabla_x f(x,y)\|, \|\nabla_y f(x,y) \| \leq \epsilon$

    \item If $(x,y)$ is bad, then either
    \begin{itemize}
    \item $\|\nabla_x f(x,y)\| < \epsilon$ and $y \in [0,\epsilon] \cup [N-\epsilon, N]$

    \item $\|\nabla_y f(x,y)\| < \epsilon$ and $x \in [0,\epsilon] \cup [N-\epsilon, N],$

    \item $x,y \in [0,\epsilon] \cup [N-\epsilon, N].$
    
    \end{itemize}
\end{enumerate}
\label{obs:good-points}
\end{observation}

This gives us a way to look for approximate fixed points of GDA for $f$. We take two cases based on whether a point is good or bad.

\paragraph{Roadmap} We decompose the proof into two cases.
\begin{enumerate}
        \item In the first case (no active projections), the point will be a good point of $f$. This means that is suffices to show that it is not a KKT point of $f$. In \Cref{sec:proof1}, we prove that this is indeed the case. 

        \item In the second case, one (or both) of the projections gets activated, and hence the point is bad for $f$. For bad point to be solutions, we observe that they should lie to the outer region of the grid. This means that is suffices to show that in any such sub-square, our constructed $f$ does not have approximate GDA fixed points.
    \end{enumerate}

\begin{definition}
    Given an EOPL instance with input source $v^\star$, we call a source non-trivial if it is a source of the EOPL instance different than $v^\star.$
\end{definition}

For the localization statements below, the solution sub-square of a
non-trivial source includes the elementary squares straddling the left
entrance of its initial right-path tile. Such a square is decoded using
the source label of that adjacent tile.

We will prove the following claims.
\begin{claim}
    Let $\epsilon = 0.01$.
    Let $(x,y)$ be an $\epsilon$-approximate fixed point of GDA for $f$. If $(x,y)$ is good, then it should be at a sink or a non-trivial source sub-square.
\end{claim}

\begin{claim}
     Let $\epsilon = 0.01$.
     Let $(x,y)$ be an $\epsilon$-approximate fixed point of GDA for $f$.
     If $(x,y)$ is bad, then it should lie at the $\epsilon$-outer layer of the grid and there are no  $\epsilon$-approximate fixed points of GDA for $f$ in this region.
\end{claim}

}

\subsubsection{Analysis of Good Points}
\label{sec:proof1}
In this section, we prove the following result.

\begin{lemma}
\label{lemma:sinks}
    Let $\epsilon = 0.01$.
    Given an EOPL instance with input source $v^\star$, let $f: [0,N]^2 \to \mathbb R$ be the associated function of \Cref{sec:extension}.
    Let $(x,y)$ be a good point of $f$ that is not in any sink or any non-trivial source sub-square. Then 
   $(x,y)$ is not an $\epsilon$-approximate fixed point of $\mathsf{GDA}$ of $f$.
\end{lemma}

\begin{proof}
Given an EOPL instance with input source $v^\star$, we call a sub-square a solution sub-square if this corresponds to either a sink or a non-trivial source.
We define a \emph{small} square as the square induced by adjacent points of the grid of a sub-square.
We will show that any small square  that does not lie in a
solution sub-square, does not contain any $0.01$-first order Nash point.
Our proof follows the ideas of \cite[Lemma 4.3]{fearnley2021complexity}.
\cite{fearnley2021complexity}
considered any small square with the color and arrow information about its four
corners, without any information about its position. If we manage to show that
a small square does not contain any $0.01$-first-order Nash point of $f$ using only this information, then this will
always hold, wherever the small square is positioned. As a result, \cite{fearnley2021complexity} obtain a finite number of
squares (with color and arrow information) that need to be checked. We will follow the same strategy.
We will use the groups $G_1, G_2, G_3$  $G_4$, as defined by \cite{fearnley2021complexity} (see \Cref{fig:g1,fig:g2,fig:g3,fig:g4}).

\begin{proposition}
[Lemma 4.3 in \cite{fearnley2021complexity}]

{
Let $\epsilon=0.01$. For any element $g$ of $G_i$, with
$i\in\{1,2,3,4\}$, its bicubic interpolant satisfies
\[
 \max\{|\nabla_x f(x,y)|,|\nabla_y f(x,y)|\}>\epsilon
\]
throughout the closed small square. In particular, the bicubic interpolation on g does not introduce interior $\eps$-$\mathsf{KKT}$ points of $f$.
}
\end{proposition}

{
It remains to verify that every small square in the construction of
\Cref{sec:grids}, outside the sink and non-trivial source solution
sub-squares, satisfies the conditions of one of $G_1,G_2,G_3,G_4$.
Figures~\ref{fig:g1}--\ref{fig:g4} list the relevant local configurations and their
generators.
}

\begin{figure}[!ht]
    \centering
    \includegraphics[scale = 0.3]{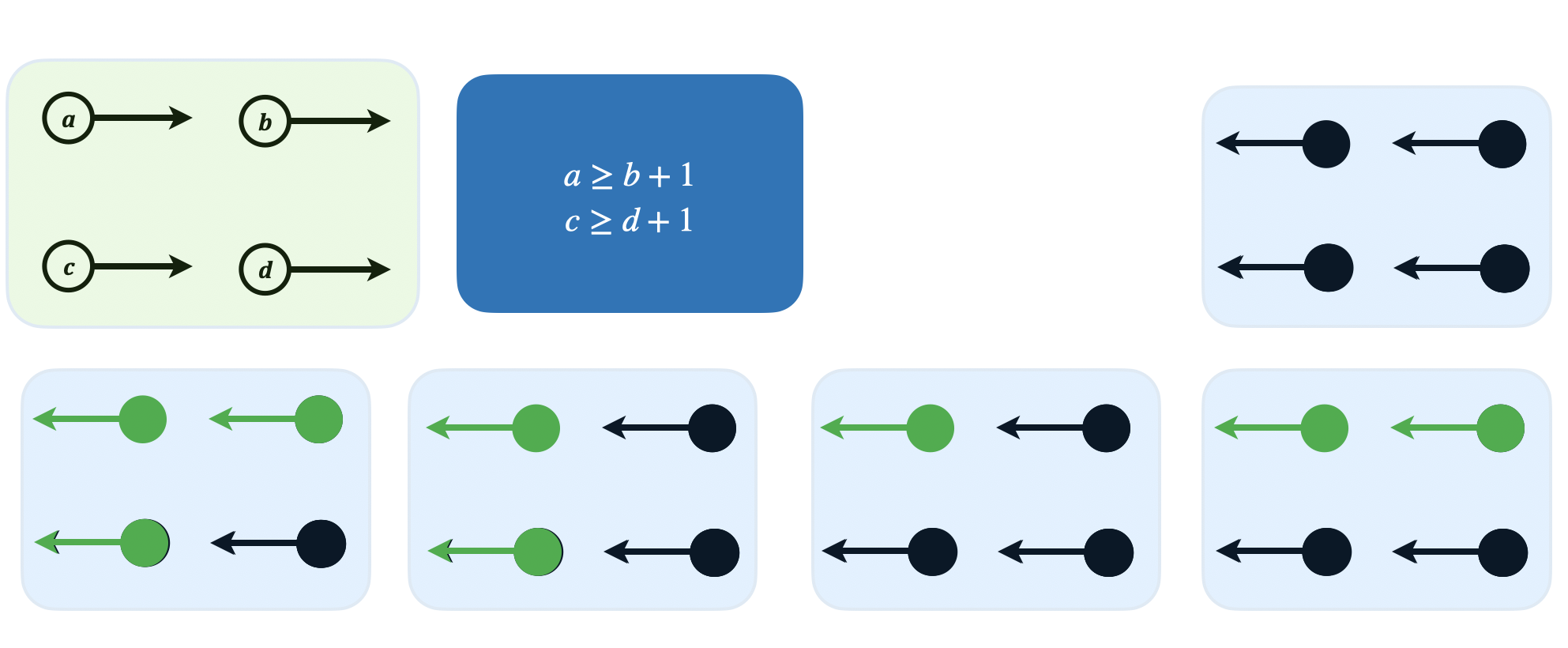}
    \caption{Generator of Group 1 with the corresponding inequality constraints and the associated tilings of our construction in the light blue boxes.}
    \label{fig:g1}
\end{figure}

\begin{figure}[!ht]
    \centering
    \includegraphics[scale = 0.3]{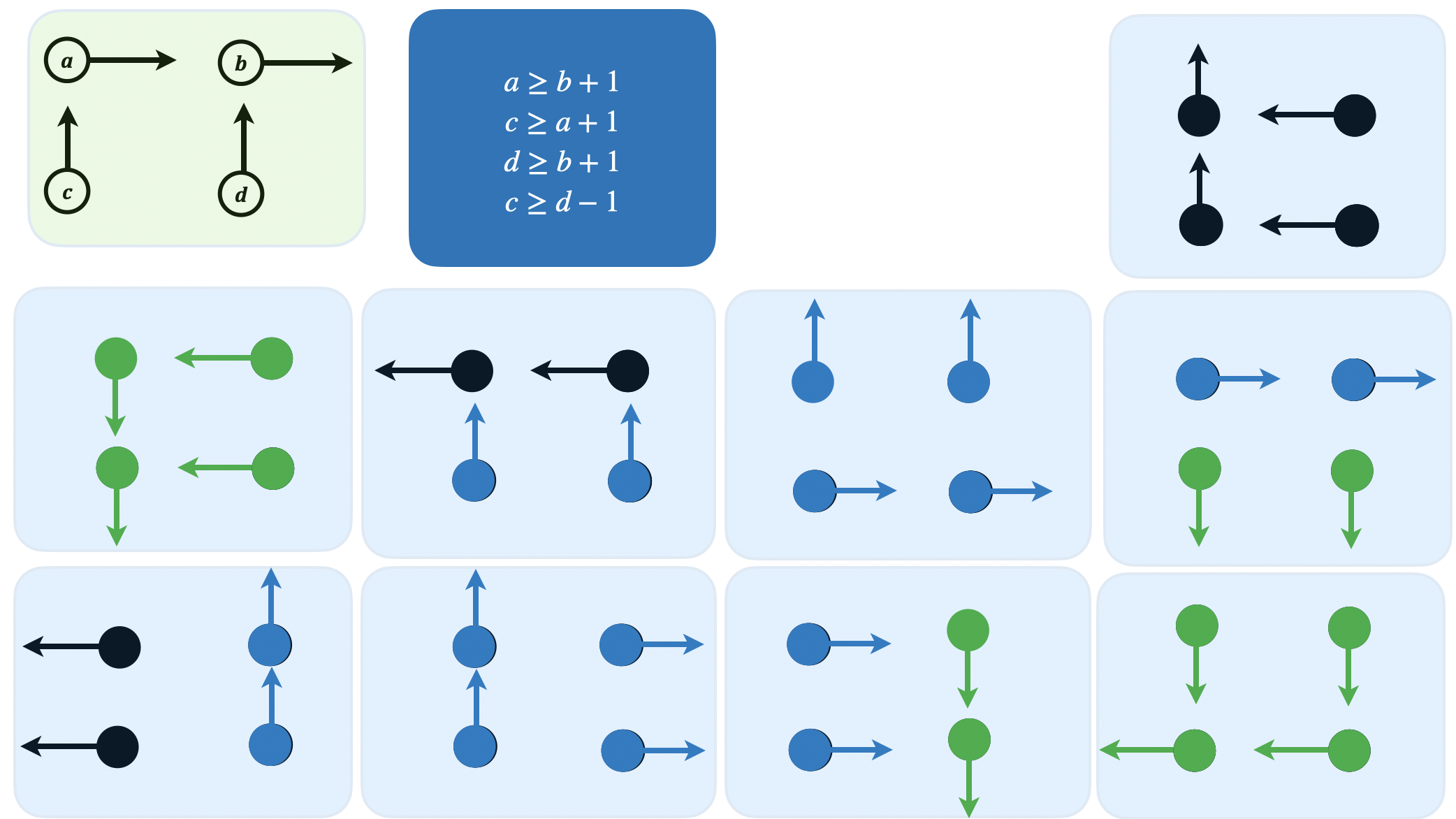}
    \caption{Generator of Group 2 with the corresponding inequality constraints and the associated tilings of our construction in the light blue boxes.}
    \label{fig:g2}
\end{figure}

\begin{figure}[!ht]
    \centering
    \includegraphics[scale = 0.3]{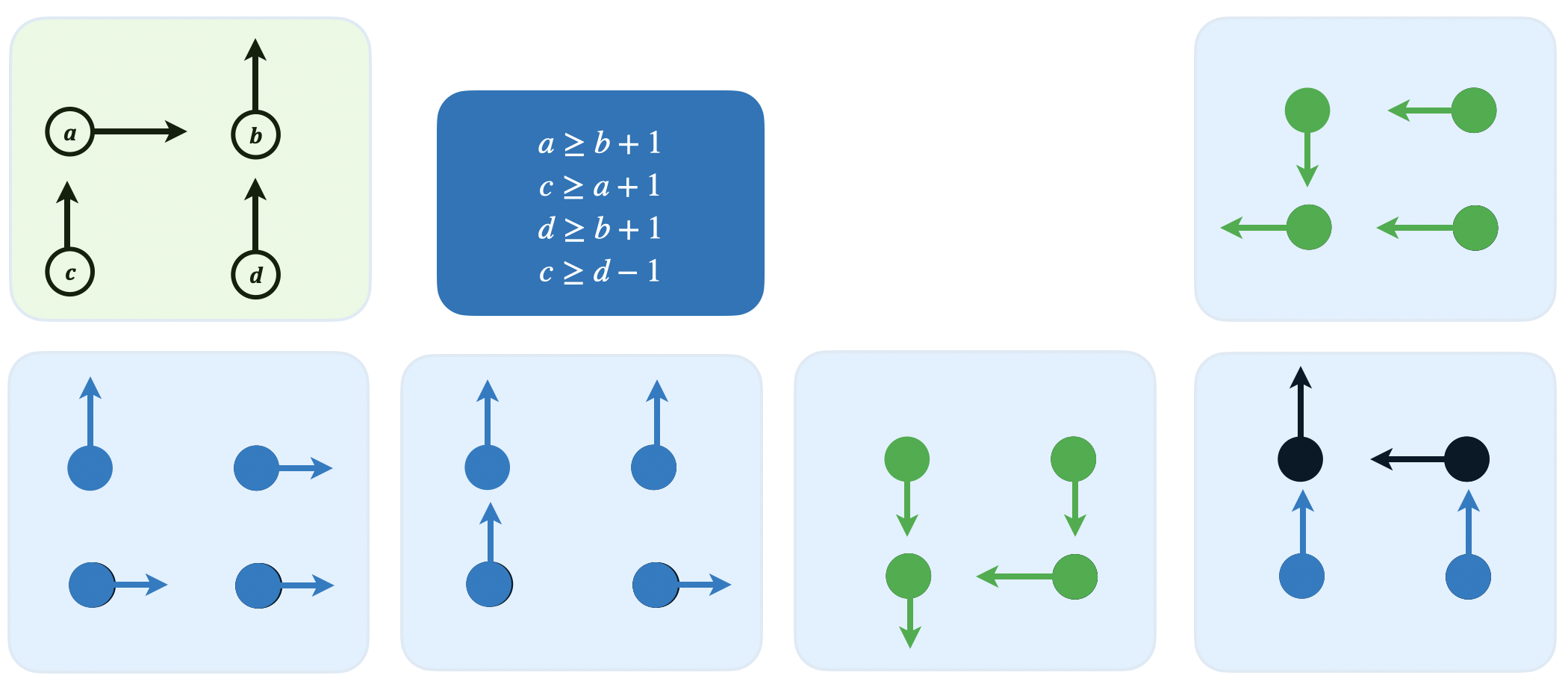}
    \caption{Generator of Group 3 with the corresponding inequality constraints and the associated tilings of our construction in the light blue boxes.}
    \label{fig:g3}
\end{figure}

\begin{figure}[!ht]
    \centering
    \includegraphics[scale=0.3]{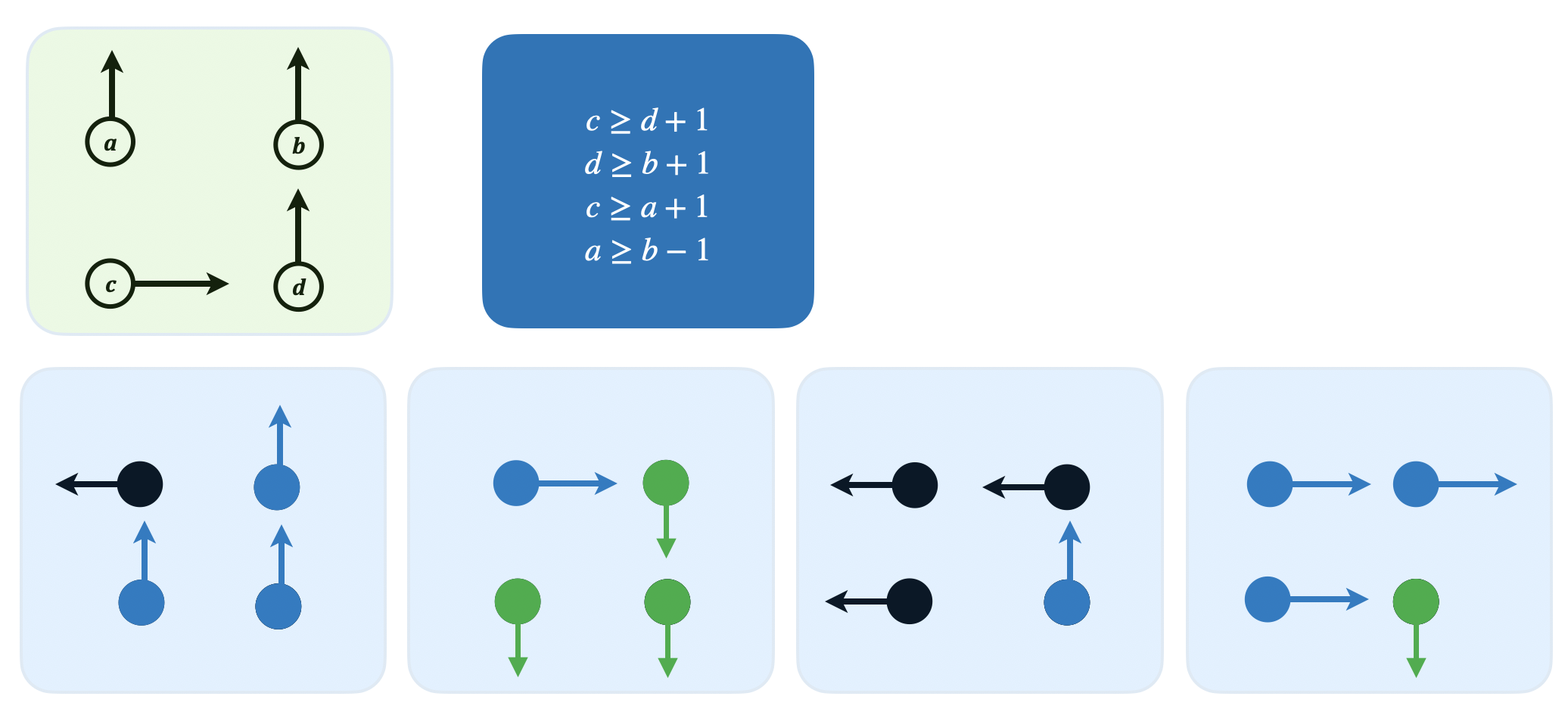}
    \caption{Generator of Group 4 with the corresponding inequality constraints and the associated tilings of our construction in the light blue boxes.
    }
    \label{fig:g4}
\end{figure}

\begin{proposition}

{
Every small square outside the sink and non-trivial source solution
sub-squares belongs to one of the groups $G_i$, $i\in\{1,2,3,4\}$.
}
\end{proposition}
\begin{proof}
    The proof follows by construction and the properties of the groups of \cite{fearnley2021complexity}.

\end{proof}

{
The preceding two propositions, together with
\Cref{obs:good-points}, complete the proof of
\Cref{lemma:sinks}.
}

\end{proof}

\subsubsection{Analysis of Non-Good Points}
Up to this point we have shown that there are no unintended approximate fixed points of $\mathsf{GDA}$
in the interior of the domain $[0, N]^2$ of our instance. 
It remains to show that no solutions appear at points that are not good. 

\begin{lemma}
[Location of Bad Points]
 Given an instance of EOPL, let $f: [0,N]^2 \to \mathbb R$ be the associated function of \Cref{sec:extension}.
 Then the bad points of $f$ that are $\epsilon$-approximate fixed points of GDA of $f$ lie in the $\epsilon$-outer layer of the grid.
\end{lemma}

\begin{proof}
Let $(x,y)$ be a bad point and $\epsilon$-approximate point of GDA. We assume that $(x,y)$ does not lie on the outer layer of the $[0,N]^2$ grid and we will show a contradiction. As a result, we will get that such a point must necessarily lie on the outer layer.
\smallskip

\noindent Since $(x,y)$ is a bad point, we know that one of the following $4$ cases holds: $i)$
$\Pi_{[0,N]}(x - \nabla_x f(x,y)) = 0$ or $ii)$ $\Pi_{[0,N]}(x - \nabla_x f(x,y))  = N$ or $\Pi_{[0,N]}(y + \nabla_y f(x,y)) = 0$ or $ii)$ $\Pi_{[0,N]}(y + \nabla_y f(x,y))  = N$. Notice that by the definition of $\epsilon$-approximate GDA fixed points, 
\[
\max \left\{ \| x - \Pi_X(x^\star - \nabla_x f(x, y))\|, 
\| y - \Pi_Y(y + \nabla_y f(x, y))\| \right\}
\leq \epsilon.
\]
However in case $(x,y)$ does not lie on the outer layer of the $[0,N]^2$ grid then 
{
\[
\max \left\{ \| x - \Pi_X(x - \nabla_x f(x, y))\|, 
\| y - \Pi_Y(y + \nabla_y f(x, y))\| \right\} 
> \epsilon.
\]
}
{
That is, since $(x,y)$ is bad, then at least one of  $x  - \nabla_x f(x, y)$ and $y + \nabla_y f(x, y)$ lies outside of $[0,N]$, and so is projected to either $0$ or $N$. Since both $x$ and $y$ lie in $(\epsilon, N-\epsilon)$ the corresponding coordinate changes by more than $\epsilon$, a contradiction.

}

\end{proof}

Next, we will show that by construction of $f$, there are no $\epsilon$-approximate fixed points of GDA of $f$ in the $\epsilon$-outer layer of the grid. 

\begin{observation}
    The outer layer contains 8 different small squares.
\end{observation}

\begin{lemma}
[Bad Points are not fixed points of $\mathsf{GDA}$]
\label{lemma:sinks-boundary}
   Given an EOPL instance, let $f: [0,N]^2 \to \mathbb R$ be the associated function of \Cref{sec:extension}.
    Let $(x,y)$ be a bad point of $f$. Then $(x,y)$ is not an $\epsilon$-approximate fixed point of $\mathsf{GDA}$ of $f$.
\end{lemma}

\noindent The proof of Lemma \ref{lemma:sinks-boundary} is deferred to Appendix \ref{app:cases}.\\
{
\noindent\emph{Proof outline.}
Let $\epsilon=0.01$. If a bad point were an
$\epsilon$-approximate fixed point of $\mathsf{GDA}$, it would
lie within distance $\epsilon$ of a boundary. We examine the eight types of small squares occurring
in this boundary layer. In each case, the bicubic interpolation
formulas identify a coordinate whose proposed GDA displacement
exceeds $\epsilon$, with the domain boundary in that direction
also more than $\epsilon$ away. Thus, even after projection,
that coordinate changes by more than $\epsilon$, leading to a contradiction.
The detailed case analysis is given in \Cref{app:cases}.}

\subsection{Completing the reduction}
\label{subsec:ReductionComplete}

{
We now turn the auxiliary function constructed above into an instance of
$\textsc{2D-GDAFixedPoint}$ on $[0,1]^2$ and complete the reduction.
}

First, it is not hard to show (see \cite{fearnley2021complexity}) that arithmetic circuits for both $f$ and $\nabla f$ can be computed in polynomial time in $n$ and in the sizes of $\mathsf{Succ}, \mathsf{Pred}, \mathsf{Potential}$. In fact, they only use a constant number of true multiplication
gates.{\footnote{Consequently, our $\mathsf{CLS}$-hardness result also holds
when $f$ and $\nabla f$ are represented by well-behaved arithmetic circuits,
as in \cite{fearnley2021complexity}. These circuits admit exact
polynomial-time evaluation at rational inputs. The normalization to
$[0,1]^2$ only adds multiplications by rational constants and preserves
this property.}}

{
To obtain an instance on the domain required by the formal problem
definition, apply \Cref{rmk:scaling} and output
\[
  \widetilde f(z)=\frac{f(Nz)}{N^3},
  \qquad \alpha=\frac{0.01}{N^2},
  \qquad \widetilde L=\frac{L}{N^2},
  \qquad \widetilde M=\frac{M}{N},
\]
where $L$ and $M$ are the Lipschitz and smoothness bounds for $f$.
An evaluator for $\widetilde f$ and its gradient first computes $Nz$,
uses the evaluators constructed above, and rescales their outputs by
$N^{-3}$ and $N^{-2}$, respectively. These additional operations are
multiplications by rational constants. Since {$\log N=O(n+m)$}, the
constants $N$, $N^{-2}$, $N^{-3}$, and $\alpha$ have {$O(n+m)$}-bit
binary rational descriptions. The rescaled bounds $\widetilde L$ and
$\widetilde M$ also have polynomial encoding length. Thus the
normalized instance and its evaluators can be constructed in polynomial
time and have polynomial description length, despite the exponentially
small value of $\alpha$.

If $z\in[0,1]^2$ is a solution to this instance, then
\Cref{rmk:scaling} yields
\[
  \|Nz-F_{\mathrm{GDA},f}(Nz)\|_2
  \leq N^2\|z-F_{\mathrm{GDA},\widetilde f}(z)\|_2
  <0.01.
\]
The correctness analysis in \Cref{sec:valid} therefore allows us to recover
a solution of $I^{\mathrm{EOPL}}$ from $Nz$ in polynomial time.
Multiplication of the returned coordinates by $N$ increases their
encoding lengths by at most $O(\log N)$. This completes the
polynomial-time reduction to $\textsc{2D-GDAFixedPoint}$ on $[0,1]^2$
and proves \Cref{thm:mainCla}.
}
\end{proof}

\section{Query Lower Bound for Finding Fixed Points of GDA}
\label{sec:query}

An immediate corollary of the above construction is that the set of monotone paths with source $\<0,0\>$ (i.e., those who have source the given input source of EOPL and only solution at a sink) forms a class of hard functions with the following properties:
\begin{enumerate}
    \item Each function (i.e., monotone path) is defined on $[0,N] \times [0,N]$ and has an $\eps$-approximate fixed point of GDA only at the sink sub-square for $\epsilon = 0.01$.

    \item Each function has smoothness parameter $M \leq 100N$.
\end{enumerate}

Our goal in this section is to prove that for this class of functions, any deterministic algorithm with black box query access to the function and gradient oracles needs at least $\Omega(N)$ queries to find a $0.01$-approximate fixed point of GDA when the strategy space of both the min and the max players is $[0,N].$ 

\begin{theorem}
    For any $N > 145$ and any deterministic algorithm $A$ with access to function and gradient evaluation oracles, there is function $f = f_A : [0,N]^2 \to \mathbb R$ that is $100 N$-smooth for which $A$ has to make, in the worst case, $\Omega(N)$ queries in order to find an $0.01$-approximate fixed point of GDA.  
\end{theorem}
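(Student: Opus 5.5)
We prove the query lower bound by an adversary argument over the family of hard functions from \Cref{sec:cls}. The family is indexed by monotone paths that start at the special source tiles $\langle 0,0\rangle,\langle 0,1\rangle$. With the source tiles at the origin, the correctness analysis of \Cref{sec:valid} shows two things for each such $f$. First, it is $100N$-smooth, by the extension proposition, since the corner data are bounded by $2^4N$. Second, its only $0.01$-approximate GDA fixed points lie in the sink sub-square: \Cref{lemma:sinks} handles good points and \Cref{lemma:sinks-boundary} handles bad points. Hence finding a $0.01$-approximate fixed point of $f$ means locating the sink tile of the hidden path.

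The key locality property is the following. On a unit square, the bicubic interpolant depends only on the colors and arrows of its four corners. Each of these is determined by the tile type (background, right, up, turn, sink, source) of the tile containing that grid point. So an oracle answer $(f(z),\nabla f(z))$ at a point $z$ is determined by the tile types of at most four tiles near $z$. Moreover, away from the path, background tiles are fixed: \Cref{fig:back1} is used everywhere except column $0$, which uses \Cref{fig:back2}. Consequently, a query reveals information only about the path's behavior in $O(1)$ tiles.

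We take a simple, staircase-free subfamily so the adversary is easy to analyze. Consider paths of the form $\langle0,0\rangle\to\langle0,1\rangle\to$ right along row $1$ to column $c$, then up column $c$ to height $h$, ending at a sink at $\langle c,h\rangle$ with $c\ge1$. An L-shaped path alone gives only about $\Theta(N/6)^2$ functions, and we would need roughly $N$ unresolved candidates. Instead we use a 1D-search structure.

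The adversary keeps the sink undecided and answers queries consistently with the background wherever possible. Let $T=\Theta(N/6)$ be the number of tiles per side. The adversary maintains a set of candidate paths: the path goes right along row $1$ to some column $c$, then straight up to row $T-1$, with the sink at the top $\langle c,T-1\rangle$. Commit to rows; the actual freedom is the column $c$, with $T-2$ choices. A query at a point $z$ in tile column $u$ is answered as follows.
\begin{enumerate}
\item If $u$ is not adjacent to any remaining candidate column relevant to the query, the answer is the same for all remaining candidates, and nothing is learned.
\item Otherwise the adversary eliminates the at most $O(1)$ columns $c$ whose path would alter the four corners of the queried unit square. It then answers as background or as row-$1$ right-path, whichever is consistent with all remaining $c$.
\end{enumerate}
Row $1$ is handled in the same way: querying tile $\langle u,1\rangle$ is consistent with ``right'' for all $c>u+1$ and rules out only nearby $c$.

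Each query therefore removes $O(1)$ candidates. After $k\le (T-2)/C$ queries, at least two candidate columns remain that are consistent with all answers. Their sinks lie in distinct tiles, and the only approximate fixed points lie in sink sub-squares. So whatever point $A$ outputs is not a solution for at least one of them, and that function is $f_A$. This gives $\Omega(T)=\Omega(N)$ queries. The condition $N>145$ only ensures the grid contains enough tiles (including the padding) for at least two candidates to remain; with $N+1=18B$ and $T=3B$ it suffices to have $T$ on the order of $10$ or more.

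The main obstacle is verifying the locality claim precisely, together with the consistency of the adversary's answers. The unit squares straddling two tiles depend on both tiles' boundary points. Since the tiles are designed with matching boundaries (and $\delta$, the colors, and the arrows are fixed per tile), each answer depends on $O(1)$ tiles. The remaining care is to check that the straight-up segment of column $c$ changes only tiles in column $c$ and tile $\langle c,1\rangle$, so that each query eliminates at most a constant number, about $3$, of candidates.
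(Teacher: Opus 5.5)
The step that fails is your treatment of queries on the horizontal leg. In your subfamily, tile $\langle u,1\rangle$ has one of three types depending on the candidate:
\begin{itemize}
\item a right-path tile for every candidate $c>u$;
\item the turn tile for $c=u$;
\item a background tile for every $c<u$, since for $c<u$ the path has already turned north in column $c$.
\end{itemize}
Path tiles contain blue and green points, whose function values differ from black ones by at least $10$. So a single function evaluation inside $\langle u,1\rangle$ tells these three situations apart.

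Hence no answer is ``consistent with all remaining $c$.'' Answering ``right'' eliminates every candidate $c<u$, and answering ``background'' eliminates every $c>u$. Your claim that such a query ``rules out only nearby $c$'' is false. An algorithm can binary-search the turning column along row $1$ with $O(\log N)$ queries. After that the sink $\langle c,T-1\rangle$ is determined, so your family supports only an $\Omega(\log N)$ bound. The same problem hits any family with a bounded number of bends, since each straight leg can be binary-searched for its turning coordinate.

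What is missing is the adaptive staircase adversary of Hirsch and Vavasis, which the paper uses. The path is not drawn from a fixed family; it is grown as queries arrive.
\begin{itemize}
\item If a query hits the currently open track $T_{i-1}$, the adversary commits the path only up to the first coordinate past the current end that is not forbidden for turning, and bends there. It then answers the query as straight, turn, or background according to the query's position relative to that bend.
\item If a query lands strictly northeast of the prefix, it is answered as background. Its row and column are then forbidden for future turns, which keeps every newly opened track unqueried.
\end{itemize}
This keeps the end of the committed prefix at coordinates $u_i,v_i\le i+1$, so each query buys only $O(1)$ tiles of progress. After $q\le K-7$ queries, the prefix can still be completed to two paths with disjoint sinks. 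This is done after first forcing a north-going final track, because the sink gadget is entered only from below.

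Your locality observation is correct, and the paper uses it to simulate each evaluation by four tile queries. But it is the unboundedly many, adaptively placed bends that produce $\Omega(N)$, not locality alone.

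As a secondary point, your candidates place the sink on the last tile row, and the whole vertical leg on the right boundary when $c=T-1$. You would have to redo the boundary analysis of \Cref{app:cases}, which only treats the eight square types that actually occur. This is why the paper's completions keep the last tile row and column as background.
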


The query lower bound follows by a similar path-chasing argument, going back to the work of \cite{hirsch1989exponential,vavasis1993black}. We imagine an algorithm $A$ that makes function and gradient
evaluations. We want to find a pair of functions $f(x,y)$, $f'(x,y)$ on $[0,N]^2$ in our family of hard instances with
disjoint sets of approximate fixed points of $\mathsf{GDA}$ such that algorithm $A$ cannot distinguish
them until it makes a large number of function/gradient queries, i.e., until $\Omega(N)$ queries have been made. Notice
that, by construction, the only approximate fixed points for functions in our family occur in the sink sub-square when $\epsilon = 0.01$, and therefore $f$ and $f'$ will have different sinks. The design of $f,f'$ is algorithm-dependent (based on the queries of $A$) and follows the blueprint of \cite{vavasis1993black,hirsch1989exponential}. The idea is that we create the monotone paths of $f$ and $f'$ adaptively as we see the query points of $A$ and that these two monotone paths will agree at all the query points until the number of function/gradient evaluations gets large enough.

\begin{proof}
Our proof follows exactly the steps of the path-chasing argument of \cite{vavasis1993black}. Consider the partition of $[0,N]^2$ into the sub-squares we have already defined in the previous section.

We assume that algorithm $A$ makes a deterministic sequence of queries, and that at each
 query it evaluates $f$ and $\nabla f$. 
 The sequence of the two-dimensional (adaptively chosen) queries is denoted by $(x_i,y_i) \in [0,N]^2$. 
 Recall that $A$ has unbounded
computational power to pick each query. We will show that the length of the query sequence is at least $\Omega(N)$ in the worst case. First, instead of using the aforementioned query oracle, we will work with the following \emph{local oracle}: a query to this oracle reveals the gadget in one tile, namely all its grid point colors and arrows. We remark that a function/gradient evaluation can be simulated with at most
$4$ such queries (hence this oracle does not affect our lower bound). To see this, choose an elementary unit square containing the evaluation point, and query
the tiles containing its four corners. The absolute function values at these corners depend only on
their coordinates and colors, their gradients depend only on their arrows, and their prescribed mixed
derivatives are zero. These values determine the bicubic interpolant and its gradient on the unit square (at a boundary point, use any incident unit square, with a fixed tie-breaking rule; this is fine since our functions are $C^1$).

Since the function lives at $[0,N]^2$ and each tile that we query is $6 \times 6$, let us set $K = (N+1)/6$ the number of tiles.

We will create adaptively two monotone paths $f, f'$ starting from the bottom left position of the two-dimensional cube. The two monotone paths are  built up as the limit of a sequence $f_0,f_1,f_2,...$ of monotone paths of increasing length with $f_i \subseteq f_{i+1}$ and $f_0 = \<0,0\> \rightarrow \<0,1\> \to \<1,1\>.$ (The last tile $\<1,1\>$ contains a right-up turn, so its outgoing direction is north.)

Let us now fix some round $i$, where we have already constructed the path $f_i$ with last sub-square $\<u_i, v_i\>$.
In our construction, $f$ and $f'$ will both agree with $f_i$ all the way up to sub-square $\<u_i, v_i\>$, and
moreover, that $f$, $f'$ will make a bend in sub-square $\<u_i, v_i\>$. We let $T_i$ denote the
track starting strictly after $\<u_i, v_i\>$ following the direction of the bend and proceeding to the
border of $[0,N]^2$. 
This means that
\begin{enumerate}
    \item For a north-going track, $T_i = \{\<u_i,b\> : K > b > v_i\}$.
    \item For an east-going track, $T_i = \{\<a,v_i\> : K > a > u_i\}$.
\end{enumerate}
The invariant property of the upcoming construction is that no queries have been made in the region of $T_i$ on iterations 1 up to $i$. Since $f_0$ is the path $\<0,0\> \rightarrow \<0,1\>$, we will define by $T_0$ the path that starts from $\<1,1\>$ and moves up until reaching the top boundary. Note that $T_0 \{\<1,b\> : K > b > 1\}$.

\begin{figure}[!ht]
    \centering
    \includegraphics[width=1\linewidth]{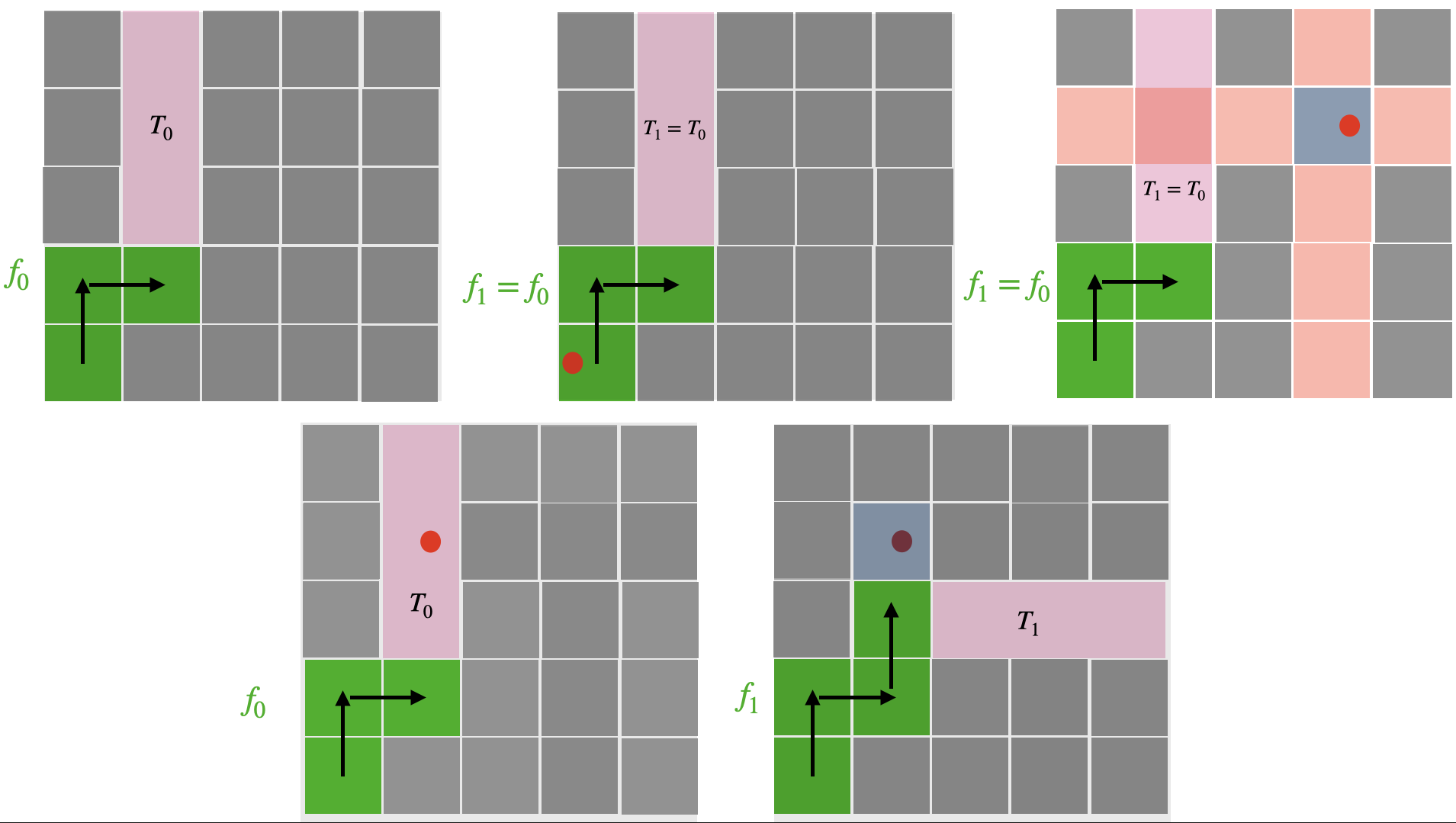}
    \caption{Illustration of the 3 cases that a query (red point) affects the monotone path construction. In the top left figure, we have the original instance which starts with $f_0$ ending at $\<1,1\>$ and $T_0$ be its ``vertical'' extension. In the top middle figure, the query (red) falls in the known part of the path (Case I). Then we make no change to the path and $T_i$. In the top right, the query falls in a potential position of the path. We reveal the background gadget in this square, forbid its row and column as \emph{turning coordinates} (light red) and keep the path unchanged. \emph{Note that the light-red shading does not forbid passing through these rows
and columns (only turning)}. 
    Finally, in the bottom left plot, the query falls in $T_0.$ Here, we extend the path to the first allowed turning row $T_0$ and bend $T_1$ to the right. Finally, we reveal the local tile information at the queried square. }
    \label{fig:placeholder1}
\end{figure}

We also maintain sets $C_i,H_i$ of \emph{forbidden turning columns and rows}, which are both initially empty sets. A column in
$C_i$ cannot contain an east-to-north turn, and a row in $H_i$ cannot contain a north-to-east turn. However, it is important to note that passing
through these rows and columns is allowed.

For a current tile $\<u_i,v_i\>$, we will study the upper east-north region $R_i = \{\<a,b\> : u_i < a < K, v_i < b < K \}.$

In order to extend $f_{i-1}$ to $f_i$ for $i = 1,2,...$ we consider 3 cases. To set up some notation, we have already defined $f_{i-1}$, ending at sub-square $\<u_{i-1}, v_{i-1}\>$ and let $T_{i-1}$ be its extension to the boundary after the ``bending'' that we discussed before. Then the query $w_i$ is made to the local oracle:
\begin{enumerate}
    \item Case I: The query $w_i$ lies in the part of the space that is already determined/revealed to the algorithm. 
    This case includes a repeated query (we give the same answer as before), a tile in $f_{i-1}$ (we give the path gadget of the tile), or a tile outside
the region $f_{i-1} \cup T_{i-1} \cup R_{i-1}$ (we give the background gadget).
    Then we set $f_i = f_{i-1}$ (the algorithm spent a query for free). We also set $T_{i} = T_{i-1}$ and $\<u_i, v_i\> = \<u_{i-1}, v_{i-1}\>$.

    \item Case II: The query $w_i$ lies in a sub-square $\<u',v'\>$ of $[0,N]^2$, not in $T_{i-1}$ (that is strictly northeast
of the last tile of the prefix). This means that $u' > u_{i-1}$ and $v' > v_{i-1}.$ 
Then we forbid column $u'$
for east-to-north turns and row $v'$ for north-to-east turns, setting $C_i = C_{i-1} \cup \{u'\}$
and $H_i = H_{i-1} \cup \{v'\}.$
    We then keep $f_{i} = f_{i-1}$, $T_{i} = T_{i-1}$, and $\<u_i, v_i\> = \<u_{i-1}, v_{i-1}\>$.
    We also set the queried sub-square as the background tile, telling the algorithm that our path does not pass through it.

    \item Case III:
    In the third case, $w_i$ lies in $T_{i-1}$. In this case (assuming for example that
it is a north track), we let $v_i$ be the smallest integer coordinate greater than $v_{i-1}$ (current column)
whose row is not forbidden for
turning. Equivalently,
\[
v_i = \min \{v > v_{i-1} : v \in \mathbb Z, v \notin H_{i-1}\}.
\]
A counting argument can show that this coordinate lies inside the grid throughout the rounds
under consideration.
Then
we let $f_i$ be the union of the path $f_{i-1}$ and the portion of $T_{i-1}$ connecting 
$\<u_{i-1}, v_{i-1}\> \rightarrow \<u_{i-1}, v_i\>$. We also let $u_i = u_{i-1}$ and $T_i$ be the track moving to the right (it bends the previous path) starting at $\<u_i, v_i\>$ and
including sub-squares with increasing first coordinates. 
We fix the up-right gadget at the new bend. If $w_i$ lies below the bend, we return the
straight north-going gadget; if it is the bend, we return the up-right gadget; and if it lies above the
bend, we return background. In the last case, every subsequent continuation has first coordinate
strictly larger than $u_i$, so the queried tile remains background. If $T_{i-1}$ had been
a straight track moving right, then $T_i$ would have been a straight track moving up. The other case is treated symmetrically.
\end{enumerate}

First, one can verify that all the invariants of the construction hold after the above choices.

We will conclude the proof with an
indistinguishability argument showing that after few queries, we can still construct two functions/paths that agree with every answer given to the algorithm, but have different solutions. We notice that if $f_i$ terminates at
sub-square $\<u_i,v_i\>$, then $u_i, v_i \leq i+1$ because $f_0$ starts with $u_0=v_0 = 1$ and we never pass to a higher value of
$u$ unless all lower integer values of $u$ had queries associated with them. The same
holds for the second coordinate. Therefore, we can continue extending $f_i$ until $\Omega(N)$ queries have been made in the domain $[0,N]^2$.

In particular, after any $q \leq K-7$ local queries, we can still complete the prefix in two different ways.
First, note that
the sink gadget can only be entered from below, so we first have to ensure that the final track is north-going. This can happen to a small adaptation: if $T_q$ already points north, then no change is needed; otherwise we can extend the prefix along $T_q$ to the first column outside $C_q$ and
make a right-up turn there. This added segment contains no previous query and the resulting north-going
track is also not queried. If $\<u,v\>$ is the last bend
after this potential extension, then we know that $u \leq q+2, v \leq q+1$ (by the same counting as before). Now we can complete the two constructions: one path will stop at $\<u,v+2\>$, and the other will stop at $\<u,v+4\>$, using the
sink gadget in each case. Note that both completions stay inside the grid and leave the last tile row and column
as background. They change only tiles that are not queried and their sink sub-squares are disjoint. Consequently, the two completed functions agree at every simulated function/gradient
query and have disjoint sets of $0.01$-approximate GDA fixed points. This establishes the query lower bound.
\end{proof}

\subsection*{AI Disclosure}

All the conceptual and technical contributions of the paper were obtained by the authors by the end of December 2025 without the assistance of AI models.
After writing a full version of the paper, AI models were only used to help streamline the writing, find typographical errors, and verify some of the proofs.

\bibliography{bib}

\appendix

{
\section{Proof of Lemma~\ref{lemma:sinks-boundary}}
\label{app:cases}
The goal of this appendix is to establish \Cref{lemma:sinks-boundary}.
Throughout, let $\epsilon=0.01$.

\begin{lemma}
[Bad points are not approximate fixed points of $\mathsf{GDA}$]
Given an EOPL instance, let $f:[0,N]^2\to\mathbb R$ be the associated
function of \Cref{sec:extension}. If $(x,y)$ is a bad point of $f$,
then it is not an $\epsilon$-approximate fixed point of $\mathsf{GDA}$.
\end{lemma}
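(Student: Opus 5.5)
We use the Location of Bad Points lemma to restrict attention to the $\epsilon$-outer layer. Any bad point that is an $\epsilon$-approximate GDA fixed point has $x\in[0,\epsilon]\cup[N-\epsilon,N]$ or $y\in[0,\epsilon]\cup[N-\epsilon,N]$. By \Cref{obs:good-points}, it also satisfies one of three conditions: $|\nabla_x f|<\epsilon$ with $y$ near a horizontal side, $|\nabla_y f|<\epsilon$ with $x$ near a vertical side, or both coordinates within $\epsilon$ of a side. So it suffices to examine the unit squares meeting the boundary of $[0,N]^2$.

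The padding makes this finite. Vertices occupy coarse blocks of index at least $2$ and at most $B-3$, and all paths and crossing gadgets stay inside this padded region. The only exception is the special path leaving $\langle0,0\rangle,\langle0,1\rangle$ and running right along tile row $1$. Hence every boundary unit square carries one of a fixed list of corner configurations:
\begin{itemize}
\item the left-background tile (\Cref{fig:back2}) along the left side;
\item the ordinary background (\Cref{fig:back1}) along the bottom, right and top sides;
\item the four corner squares;
\item the squares of the source tiles of \Cref{fig:source} and of the right-going path in tile row $1$ that touch the bottom or left side.
\end{itemize}
We group these into the eight types of small squares mentioned in the Observation. For each type, we write down the corner data: the values from the colour regimes and the gradients $\pm\delta$ along the arrows, with $f_{xy}=0$. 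From these we obtain the explicit bicubic polynomial via \eqref{eq:bicubic-coefficients}, with constants in $N$ cancelling in the derivatives.

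The key step is, for each type, to exhibit a coordinate that moves by more than $\epsilon$ under GDA. For a black background square away from the left side, all corners have the same colour, and the arrows give $-\nabla f$ pointing left. The formula then gives $\nabla_x f = 1 + O(\text{cubic corrections})$, and the bicubic correction stays bounded away from cancelling this. We verify this by noting that all corner values lie on the affine function $x-y+c$ while the corner slopes are $\delta$ instead of $1$. The correction is then a fixed polynomial bounded by $\tfrac14$ on the unit square. So $\nabla_x f\ge c_0>\epsilon$, and the min player moves left by $\min\{\nabla_x f, x\}$. This is more than $\epsilon$ unless $x\le\epsilon$, i.e.\ unless we are in the left column.

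In the left column (\Cref{fig:back2}), the arrows turn. There we instead show that $|\nabla_y f|$ is bounded below, with the max player pushed toward a side that is at distance more than $\epsilon$. The bottom-left corner is covered by the source tile, which is designed so that its arrows also give such a direction. Along the top and bottom sides we use the $y$-component: the colour-0 regime gives $f_y$ of fixed sign, so the ascent step pushes $y$ downward, i.e.\ away from the top side. At the bottom, the source-path and background arrows are chosen so that the ascent direction points into the interior. In each case, the chosen coordinate's displacement is $\min\{a_i,d_i\}>\epsilon$, contradicting the fixed-point inequality.

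The main obstacle is the corner and source squares, where the colours differ between corners. There the interpolant is a genuine mixture of regimes with value gaps of at least $10$, and a lower bound on one partial derivative is not immediate. For these we will compute the relevant partial derivative explicitly on the $\epsilon$-strip, e.g.\ $x\in[0,\epsilon]$. We show that its sign is fixed there and its magnitude exceeds $\epsilon$, using that a value gap of $10$ across a unit edge forces a derivative of order $10$ somewhere along the interpolating cubic, and checking that this holds throughout the thin strip rather than just at one point. If a uniform bound fails at some corner, we fall back on the other coordinate, whose boundary is far away, and check that its gradient is bounded there.
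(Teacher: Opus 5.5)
Your route is the paper's: confine bad approximate fixed points to the $\epsilon$-outer layer, list the finitely many corner configurations of boundary unit squares, write out the bicubic partials, and in each case exhibit a coordinate whose projected displacement $\min\{a_i,d_i\}$ exceeds $\epsilon$. Your treatment of the uniform background (the paper's Type~1) and, in outline, of the left background column (Type~0) matches the paper.

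The gap is in the squares that carry the real content of the proof. These are the mixed-colour squares along the left side coming from the source tiles (the paper's Types~2--7). You only announce a computation for them, and the mechanism you propose would not work. With $f_{xy}=0$ at the corners, the corner values enter $f_y$ only through
\[
6y(1-y)\bigl[(f(0,1)-f(0,0))(1-h(x))+(f(1,1)-f(1,0))h(x)\bigr],\qquad h(x)=3x^2-2x^3,
\]
and symmetrically for $f_x$. This term vanishes at the corners. So on the strip $x\le\epsilon$ near $y\in\{0,1\}$, a value gap of order $10$ (or $N$) contributes nothing. There the bound must come from the arrow data $\pm\delta$, e.g.\ the $-\tfrac12$ in Type~2 and the $+\tfrac12$ in Types~5--7.

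A large gap also helps only if its sign agrees with the arrow term; otherwise it destroys the bound. That sign has to be read off the colour ordering, e.g.\ $D\ge0$ in Type~2 and $6f(1,1)-6f(0,0)+4.5\ge0$ in Type~4. In Types~3 and~4 the arrows turn inside the square, so neither partial is bounded away from $0$ on the whole strip. The paper splits at $y=\tfrac12$ and uses $f_x$ on one half and $f_y$ on the other. Each time it checks that the domain boundary in the direction of motion is at least $1-x$ or $\tfrac12$ away. Your fallback sentence names this possibility, but without the explicit polynomials nothing has been verified.

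Two further claims in your sketch are false, although the affected squares are rescued by your $f_x$ argument. First, at the bottom side the ascent direction does not point into the interior. In the black background squares $f_y=-6y(1-y)\le0$. In the source-tile square $[1,2]\times[0,1]$ (green left corners, black right corners, left arrows) one gets $f_y=6y(1-y)(1-2h(x))$, which changes sign. In both cases $f_y$ vanishes on the row $y=0$, so the contradiction must come from the horizontal coordinate. For the latter square, $f_x=\tfrac12+3x(1-x)(8N+21-4h(y))\ge\tfrac12$, and the global horizontal coordinate is at least $1$. Second, the correction to $f_x$ in a background square is $-3x^2+3x-\tfrac12\in[-\tfrac12,\tfrac14]$, so the bound is $f_x\ge\tfrac12$, not $\tfrac34$.
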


The outer layer contains eight types of small squares, which we consider
separately below. Within each small square, $(x,y)\in[0,1]^2$ denotes
local coordinates measured from its bottom-left corner. We use $f$ for
the restriction of the constructed function to this square, expressed
in these coordinates. In particular, $(0,1)$ is the top-left corner and
$(1,0)$ is the bottom-right corner. Projection is always onto the full
domain $[0,N]^2$, rather than onto an individual small square.

Good points in these configurations are excluded by the group analysis
in \Cref{sec:proof1}, so it remains to exclude bad points. A bad
$\epsilon$-approximate fixed point must lie within distance $\epsilon$
of a domain boundary at which projection is active. Types $2$--$5$ and
$7$ meet only the left boundary, so for these cases it suffices to
consider $x\le\epsilon$. The corner cases for Types $0$ and $6$ are
addressed below; Type $1$ is handled on its entire square.

\subsection{Type 0 squares}
\begin{lemma}
A type $0$ square contains no $\epsilon$-approximate fixed point of
$\mathsf{GDA}$.
\end{lemma}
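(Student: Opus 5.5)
My plan is to write out the bicubic interpolant on a Type~0 square explicitly and then split the analysis into good and bad points. I will first read off the color, position and arrow data of the four corners from the relevant tile: the source tile $\langle 0,0\rangle$ at the bottom-left corner of the domain, where the min player's projection at $x=0$ and the max player's projection at $y=0$ may both be active. From the corner values $f(0,0),f(0,1),f(1,0),f(1,1)$ (given by the color formulas of \Cref{sec:function-values}), the first derivatives $\pm\delta=\pm 1/2$ dictated by the arrows, and $f_{xy}=0$, I will compute the coefficient matrix $(a_{ij})$ via \eqref{eq:bicubic-coefficients}. Then I will write $f_x(x,y)$ and $f_y(x,y)$ as explicit polynomials on $[0,1]^2$. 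Because most corners share a color, the value differences are $\pm1$ or $0$ and the polynomials stay small; where colors differ, the differences are at least $10$ and dominate.

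Good points of the square are excluded by the group analysis of \Cref{sec:proof1}: the square matches one of $G_1,\dots,G_4$, so $\max\{|f_x|,|f_y|\}>\epsilon$ throughout. So by \Cref{obs:good-points} only bad points remain, and these lie in the strips $x\le\epsilon$ or $y\le\epsilon$ adjacent to the active boundary. On each strip I will use the coordinate whose projection is \emph{not} stopped by that boundary. Near $x=0$, the projection in $x$ only clips motion toward $x<0$; if I can show that $-f_x$ is uniformly positive of size $>\epsilon$ there (the min player is pushed rightward, into the domain), then the $x$-displacement is $\min\{|f_x|,N-x\}>\epsilon$, a contradiction. If instead $f_x$ points outward, so that $x$ legitimately projects to $0$, I will show $|f_y|>\epsilon$ on $x\in[0,\epsilon]$ with $y$ more than $\epsilon$ from the boundary in the direction of $+f_y$. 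The symmetric statement handles the strip $y\le\epsilon$. For the corner region $x,y\le\epsilon$, I will check by Taylor expansion around $(0,0)$ that the corner arrow forces an inward displacement larger than $\epsilon$ for at least one player. Since $|\nabla f(0,0)|=\delta=1/2$ and $f$ is $O(1)$-smooth on a single small square, the gradient on an $\epsilon$-neighborhood differs from its corner value by $O(\epsilon)\ll 1/2$.

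Bounding the polynomials on the thin strips is routine. Writing $f_x(x,y)=f_x(0,y)+O(x)$ with an explicit constant from the coefficients, I reduce to a one-variable cubic in $y$ on $[0,1]$, which I bound below by evaluating endpoints and critical points.

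I expect the main obstacle to be the sign bookkeeping in the corner case at $(0,0)$. There the source gadget's arrows are specifically designed so that the ascent player in $y$ (moving against the arrow) and the descent player in $x$ cannot both be stopped by the walls. I need to confirm that at least one of them is pushed inward with magnitude near $1/2$. If the corner arrow points exactly into a wall for one player, I will instead argue with the other player's derivative, using the value differences along the adjacent edge.
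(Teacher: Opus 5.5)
Your template is the right one, and your second branch is exactly the paper's argument. That template is: discard good points via $G_1,\dots,G_4$, then on the strip $x\le\epsilon$ use the coordinate whose projection is not clipped, showing $|f_y|>\epsilon$ with room in the direction of $f_y$.

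\textbf{The gap.} You have placed the Type~$0$ square at the bottom-left corner of the domain, and you make the corner $x,y\le\epsilon$ the crux, asserting that the corner arrow pushes at least one player inward. For the actual Type~$0$ data this assertion is false. The two left corners carry upward arrows, so at the local corner $(0,0)$ one has $f_x=0$ and $f_y=-\tfrac12$. If that corner were the corner of the domain, the min player would stay at $x=0$. The max player's step $y+f_y=-\tfrac12$ would be projected back to $0$. That is an exact GDA fixed point, and no Taylor expansion or appeal to ``the other player's derivative'' can exclude it.

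The lemma holds precisely because Type~$0$ squares lie on the left boundary but never meet the bottom boundary. They may reach the top-left corner. Your plan never isolates this placement fact. The configuration you have in mind, with the bottom wall possibly active, is the paper's Type~$6$ square. There $f_y\ge 0$ pushes the max player upward, away from the bottom wall.

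\textbf{What the proof actually needs.} The missing ingredient is the explicit identity
\[
f_y(x,y)=-3\bigl(1+h(x)\bigr)y(1-y)-\tfrac12\bigl(1-h(x)\bigr),\qquad h(x)=3x^2-2x^3 .
\]
Two consequences finish the proof.
\begin{enumerate}
\item $f_y\le 0$ on the whole square. So projection at the top boundary is never active, while projection at the bottom boundary would move $y$ by at least $1$. Hence a bad $\epsilon$-approximate fixed point must be clipped at the left wall, which forces $x\le 0.01$.
\item For $x\le 0.01$, one gets $f_y\le -\tfrac12\bigl(1-h(0.01)\bigr)<-0.49$ uniformly in $y\in[0,1]$. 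There is at least one unit of room below, so the vertical displacement is at least $0.49>\epsilon$.
\end{enumerate}
Your proposal defers this computation. The one corner case it does commit to is the wrong one, and it would fail for this square's data.
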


\begin{figure}[!ht]
    \centering
    \includegraphics[width=0.3\linewidth]{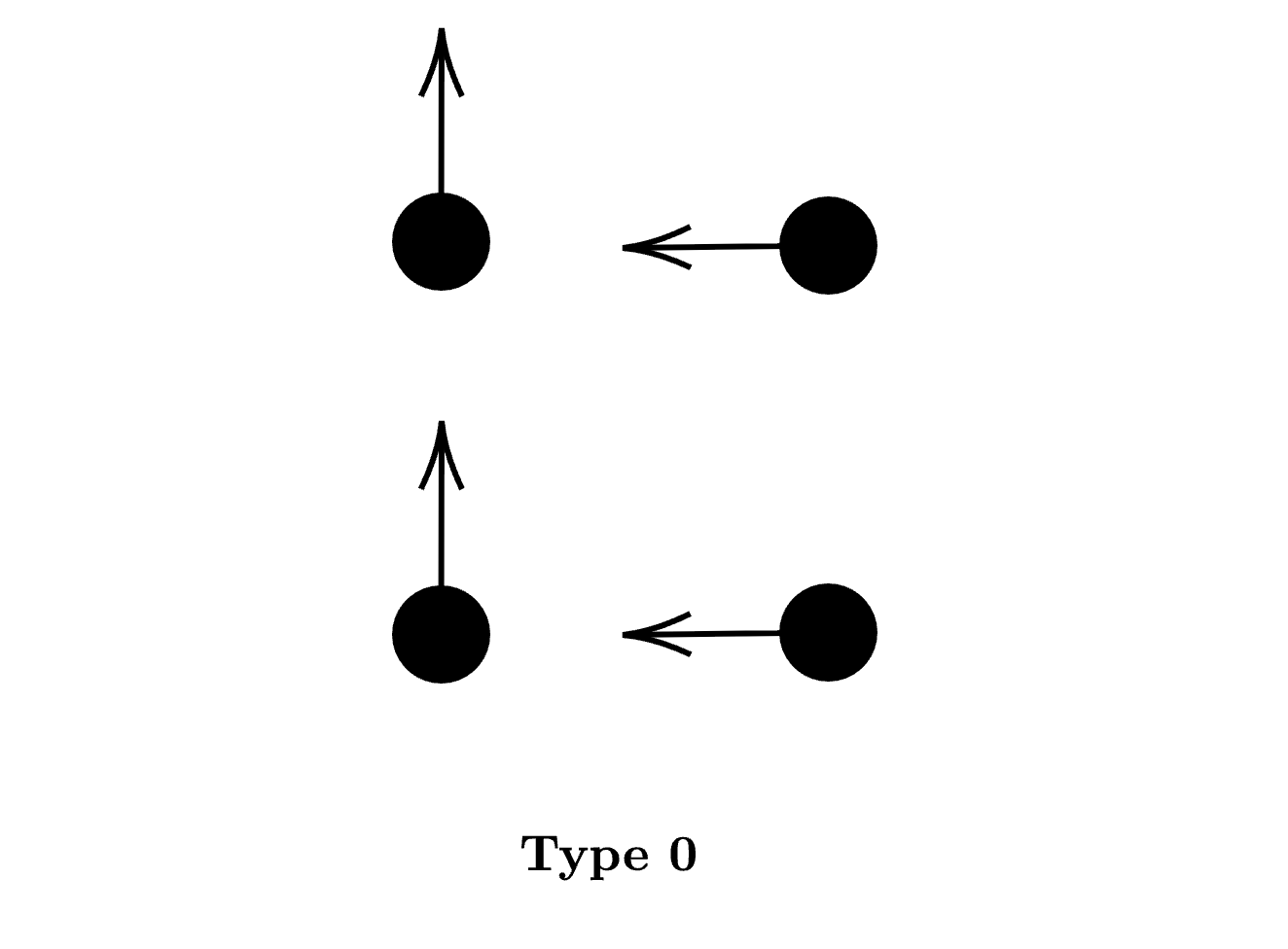}
    \caption{Type 0}
    \label{fig:placeholder2}
\end{figure}

Type $0$ squares, shown in \Cref{fig:placeholder2}, occur along the
left boundary and do not meet the bottom boundary. Bicubic interpolation
gives the following identity.

\begin{lemma}\label{l:1}
For a type $0$ square,
\[
\begin{aligned}
f_y(x,y)={}&-6x^3y^2+6x^3y-x^3+9x^2y^2-9x^2y\\
          &+1.5x^2+3y^2-3y-0.5.
\end{aligned}
\]
\end{lemma}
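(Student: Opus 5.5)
The plan is to compute $f_y$ directly from the bicubic interpolation formula \eqref{eq:bicubic}--\eqref{eq:bicubic-coefficients}. This requires reading off from \Cref{fig:placeholder2} the color and arrow at each of the four corners of the type $0$ square. The identity to prove is a polynomial identity, so no estimates are needed, only careful bookkeeping.

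\textbf{Step 1: corner data.} I will translate the figure into the $4\times4$ data matrix. For function values, I will use the color regimes of \Cref{sec:function-values} in global coordinates $(X_0+x, Y_0+y)$. Since $f_y$ is unaffected by an additive constant, only the differences $f(0,1)-f(0,0)$, $f(1,0)-f(0,0)$ and $f(1,1)-f(0,0)$ matter. For derivatives, I will use the arrow rule with $\delta=1/2$ and $f_{xy}\equiv 0$. As a consistency check against the claimed formula:
\begin{itemize}
\item Evaluating the claimed $f_y$ at the corners gives $f_y(0,0)=f_y(0,1)=-\tfrac12$, which corresponds to up-arrows on the left edge.
\item It also gives $f_y(1,0)=f_y(1,1)=0$, which corresponds to horizontal arrows on the right edge.
\item Integrating the claimed formula along $x=0$ and along $x=1$ gives $f(\cdot,1)-f(\cdot,0)=-1$ on both edges. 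This matches all four corners being black, with regime $x-y+4N+10$.
\end{itemize}
I will confirm these readings against the figure before proceeding.

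\textbf{Step 2: coefficients.} Next I will form the middle matrix with rows $(f(0,\cdot), f_y(0,\cdot))$, $(f(1,\cdot), f_y(1,\cdot))$, $(f_x(0,\cdot), f_{xy})$ and $(f_x(1,\cdot), f_{xy})$. Multiplying by the two fixed matrices in \eqref{eq:bicubic-coefficients} gives the coefficients $a_{ij}$. It is cleanest to do this in two stages. First, multiply on the right to obtain, for each row, the cubic Hermite polynomial in $y$. Then multiply on the left to combine these in $x$ using the Hermite basis $h_{00}(x)=1-3x^2+2x^3$, $h_{01}(x)=3x^2-2x^3$, $h_{10}(x)=x-2x^2+x^3$ and $h_{11}(x)=x^3-x^2$.

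\textbf{Step 3: differentiate and collect.} Finally, I will differentiate $f=\sum_{i,j}a_{ij}x^iy^j$ in $y$ and group the result by monomials $x^iy^k$ for $i\in\{0,2,3\}$ and $k\in\{0,1,2\}$. The absence of odd-in-$x$ linear terms is expected, because the $x$-derivative data only enters $f_y$ through $h_{10}$ and $h_{11}$ multiplied by $y$-derivatives of $f_x$, which are constant in $y$. The result will then be compared term by term with the claimed expression.

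\textbf{Main obstacle.} The only real difficulty is correctly reading the arrow directions and colors at the four corners of the figure, including the sign convention that the arrows show $-\nabla f$. A single sign error there changes the answer. The corner-value and edge-difference checks in Step 1 guard against this. After expansion I will re-evaluate the final polynomial at the four corners and at $\partial_x f_y$, to confirm that it reproduces the prescribed data $f_{xy}=0$.
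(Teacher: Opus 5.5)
Your proposal is correct and takes the same approach as the paper, which simply states that bicubic interpolation of the type~$0$ corner data yields the identity. With all four corners black, up arrows ($f_x=0$, $f_y=-\tfrac12$) on the left edge, equal horizontal arrows ($f_y=0$) on the right edge and $f_{xy}=0$, the computation gives $f_y=-3(1+h(x))y(1-y)-\tfrac12(1-h(x))$ with $h(x)=3x^2-2x^3$, which expands to the stated polynomial. One small correction to your Step~3 remark: the $f_x$ data contribute $6y(1-y)\,h_{1a}(x)\bigl(f_x(a,1)-f_x(a,0)\bigr)$ to $f_y$. They vanish because the horizontal derivative data agree at the top and bottom of each vertical edge, not because some derivative is constant in $y$.
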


Writing $h(x)=3x^2-2x^3$, \Cref{l:1} gives
\[
f_y(x,y)=-3(1+h(x))y(1-y)-\frac{1-h(x)}2\le0.
\]
Thus projection at the top boundary cannot be active. Projection at
the bottom boundary would give a displacement of at least $1$.
Consequently, a bad $\epsilon$-approximate fixed point in this square
would require projection at the left boundary and hence $x\le0.01$.
For such $x$, the same identity gives
\[
f_y(x,y)\le-\frac{1-h(0.01)}2=-0.499851<-0.49.
\]
Since the square does not meet the bottom boundary, the projected
vertical displacement is at least $0.49>\epsilon$, a contradiction.

\subsection{Type 1 squares}
\begin{lemma}
A type $1$ square contains no $\epsilon$-approximate fixed point of
$\mathsf{GDA}$.
\end{lemma}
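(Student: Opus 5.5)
We plan to follow the template of the Type $0$ case, with one difference. The Type $1$ square is the one that is \emph{not} reduced to the group analysis of \Cref{sec:proof1}, so we do not split into good and bad points. We will show directly that every point of the closed square has at least one coordinate whose projected GDA displacement exceeds $\epsilon$. We expect this square to sit at the bottom-left corner of the domain, inside the source tile $\langle 0,0\rangle$, where both the left and the bottom boundary may make projection active.

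\textbf{Step 1: closed forms for the gradient.} From the colours and arrows at the four corners (\Cref{fig:source1}), we read off the values $f$, $f_x$, $f_y$, with $f_{xy}=0$ and $|f_x|,|f_y|\in\{0,\delta\}$ for $\delta=1/2$. We substitute these into the coefficient formula \eqref{eq:bicubic-coefficients} and differentiate \eqref{eq:bicubic}, obtaining explicit cubic expressions for $f_x(x,y)$ and $f_y(x,y)$ in local coordinates, in the same way as \Cref{l:1}. We then rewrite them using $h(t)=3t^2-2t^3$ and the factor $t(1-t)$, aiming for expressions such as
\[
f_y = A(x)\,y(1-y) + B(x),
\]
and similarly for $f_x$, in which the sign of each term on $[0,1]$ is evident. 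The corner values differ by an integer-scale colour gap; in a corner square this is presumably a gap of the form $\pm1$ or $\pm 2$ along an edge. This gap produces terms of magnitude $\Theta(1)$ that dominate the $\pm\delta$ arrow contributions.

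\textbf{Step 2: the sign of each coordinate's update.} For each coordinate we determine the direction of the proposed update, $-f_x$ for the min player and $+f_y$ for the max player. Whenever that direction points into the interior of $[0,N]$, projection does not truncate the move, so the displacement equals $|f_x|$ or $|f_y|$. Whenever it points towards the boundary at $0$, the displacement is $\min\{|\cdot|,\text{distance}\}$. This yields a partition of the square into three regions: the strip $x\le\epsilon$, the strip $y\le\epsilon$, and the remainder. On the remainder, both distances to the boundary exceed $\epsilon$, so it suffices to show $\max\{|f_x|,|f_y|\}>\epsilon$. For this we reuse the lower bound argument of the groups from \cite{fearnley2021complexity}, applied to the explicit polynomials. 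On each strip, we pick the coordinate parallel to that boundary, say $f_y$ on $x\le\epsilon$. We then show that it is bounded away from zero, for example $|f_y|\ge 0.49$ by evaluating at $h(\epsilon)$ as in Type $0$, and that its update does not run into the bottom boundary.

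\textbf{Step 3: the corner, which we expect to be the main obstacle.} The difficult part is the corner region $x,y\le\epsilon$. There both coordinates may be pushed towards the boundary and be truncated to displacements of at most $\epsilon$. In that case neither coordinate gives a contradiction, and this is exactly the configuration that makes the bottom-left tile a KKT point in \cite{fearnley2021complexity}. The source gadget is designed to prevent this. Our first approach is to verify, from the closed form, that at the corner the max player's update $+f_y$ is strictly positive, of order $\delta$, uniformly on $[0,\epsilon]^2$: the ascent direction points up, away from the bottom boundary. If this holds, the vertical move is untruncated and larger than $\epsilon$. If instead the sign works out only for $f_x$, we will argue the symmetric statement that $-f_x>\epsilon$ points rightwards. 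The remaining work is an elementary bound on a cubic over a tiny box, using monotonicity of $h$ together with $h(0.01)\approx 3\cdot10^{-4}$.
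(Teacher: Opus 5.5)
Your plan rests on a misidentification of the square, and the mechanisms you propose do not exist for the actual configuration. A Type~$1$ square is not a corner square of the source tile $\langle 0,0\rangle$. It is the generic background square: four black corners, each with a left-pointing arrow, as in the tile of \Cref{fig:back1}. Such squares occur in the interior and along the bottom, top and right boundaries, but never on the left boundary. This is why the paper treats Type~$1$ ``on its entire square'' rather than restricting to a thin strip.

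The corner data are $f(1,\cdot)-f(0,\cdot)=1$, $f(\cdot,1)-f(\cdot,0)=-1$, $f_x=\delta$, $f_y=0$ and $f_{xy}=0$. The bicubic interpolant is then separable, and
\[
f_x(x,y)=3x(1-x)+\tfrac12\ \ge\ \tfrac12,\qquad f_y(x,y)=-6y(1-y)\ \le\ 0 .
\]
Both options in your Step~3 fail here. The ascent update $+f_y$ is never positive, and it vanishes on the top and bottom edges, so for a copy sitting on the bottom boundary the vertical coordinate gives nothing. The descent update $-f_x$ points left, not right. Reading the corner data off \Cref{fig:source1} would produce the interpolant of a different square, so even a correct analysis of it would not prove the statement.

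The missing idea is a single uniform bound. Since $f_x\ge\tfrac12$ on the whole closed square, the min player's proposed move is leftward by at least $\tfrac12$. Every point of a Type~$1$ square has global horizontal coordinate $x\ge 1$. Projection onto $[0,N]$ therefore leaves a horizontal displacement of $\min\{f_x,x\}\ge\tfrac12>\epsilon$.

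This one inequality covers good and bad points at once. It also covers every boundary the square may touch: whether projection in $y$ is active at the top or bottom is irrelevant, because the horizontal displacement alone exceeds $\epsilon$. No strip decomposition, corner analysis, or appeal to the groups of \cite{fearnley2021complexity} is needed.
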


Type $1$ squares, shown in \Cref{fig:type1}, occur away from the left
boundary, including on the bottom, top, and right boundaries. They
also occur in the interior.

\begin{figure}[!ht]
    \centering
    \includegraphics[width=0.3\linewidth]{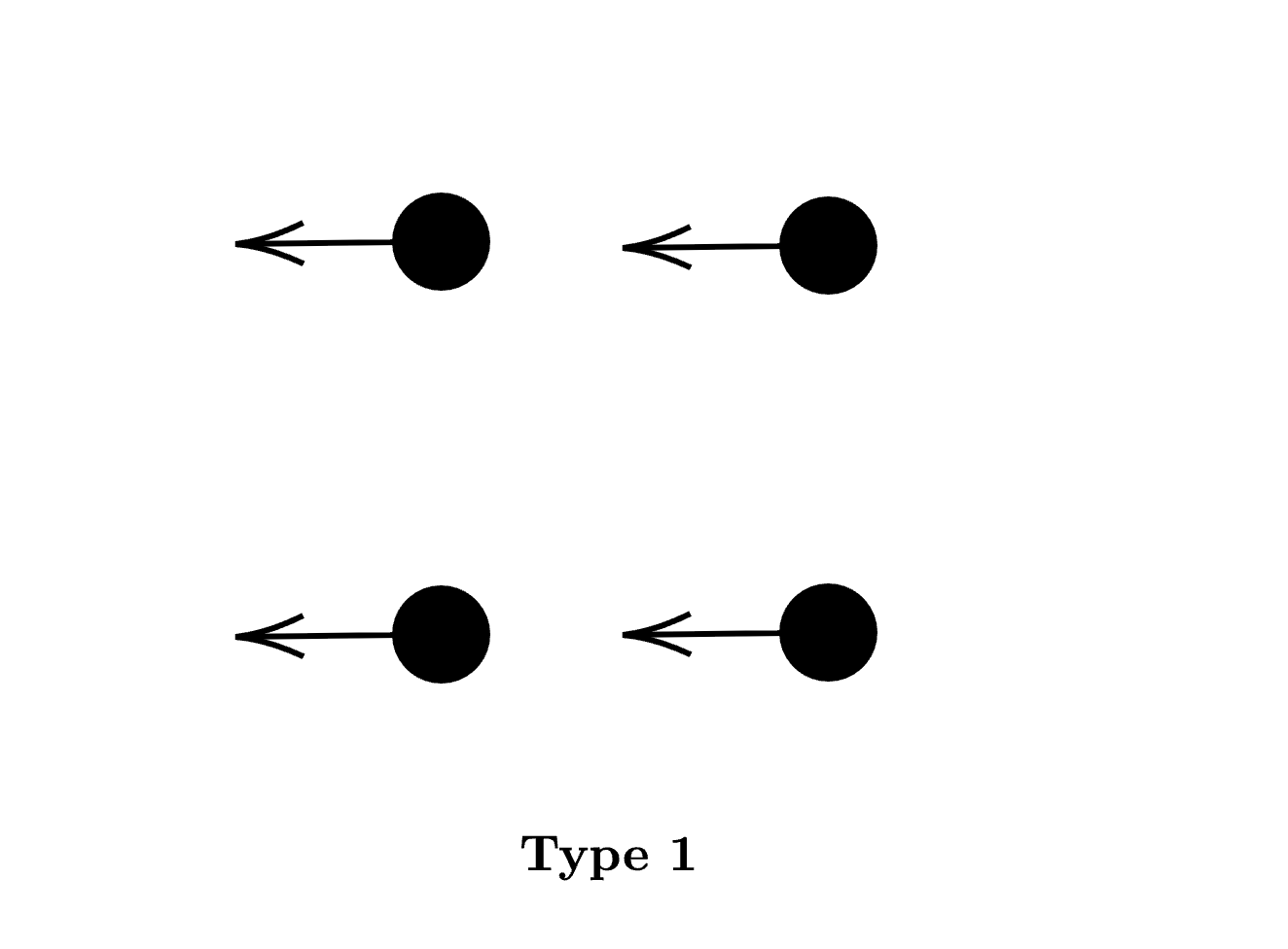}
    \caption{Type 1}
    \label{fig:type1}
\end{figure}

\begin{lemma}\label{l:type1square}
For a type $1$ square,
\[
f_x(x,y)=-3x^2+3x+0.5.
\]
\end{lemma}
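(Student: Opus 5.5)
The plan is to compute the bicubic interpolant of \eqref{eq:bicubic}--\eqref{eq:bicubic-coefficients} directly from the corner data of a type $1$ square and differentiate it in $x$. I will read the corner data off \Cref{fig:type1}. I expect it to be the ``background going left'' configuration of \Cref{fig:back1}. In that configuration all four corners are black, so $f=x-y+4N+10$ at the grid points. All arrows point left, so with $\delta=1/2$ the derivative data are $f_x=\tfrac12$ and $f_y=0$ at every corner, and $f_{xy}=0$ by convention.

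In local coordinates, write $c$ for the value at the bottom-left corner. The corner values are then
\[
f(0,0)=c,\qquad f(1,0)=c+1,\qquad f(0,1)=c-1,\qquad f(1,1)=c.
\]
Rather than multiplying out the $4\times4$ matrices, I will use the equivalent tensor-product Hermite form of the interpolant. Its basis in each variable is
\[
h_{00}(t)=1-3t^2+2t^3,\quad h_{10}(t)=3t^2-2t^3,\quad h_{01}(t)=t-2t^2+t^3,\quad h_{11}(t)=t^3-t^2 .
\]
The interpolant is $\sum$ (value)$\cdot h_{\cdot 0}\otimes h_{\cdot 0}$ plus the $f_x$-terms $h_{\cdot 1}(x)h_{\cdot 0}(y)$, the $f_y$-terms $h_{\cdot 0}(x)h_{\cdot 1}(y)$ and the $f_{xy}$-terms. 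The $f_y$- and $f_{xy}$-terms vanish because those data are zero.

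The key observation is that the data are \emph{separable in $y$}. The value jump in $x$ equals $1$ on both the bottom and the top edge. Moreover, $f_x=\tfrac12$ at every corner, and $h_{00}(y)+h_{10}(y)=1$. Therefore
\[
f(x,y)=c-\bigl(h_{10}(y)\bigr)+P(x),
\]
where $P$ is the one-dimensional cubic Hermite interpolant with $P(0)=0$, $P(1)=1$ and $P'(0)=P'(1)=\tfrac12$. Concretely,
\[
P(x)=h_{10}(x)+\tfrac12 h_{01}(x)+\tfrac12 h_{11}(x)=0.5x+1.5x^2-x^3 .
\]
Differentiating gives $f_x(x,y)=P'(x)=-3x^2+3x+0.5$, which is independent of $y$, as the lemma claims.

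The only real obstacle is confirming the exact corner data in \Cref{fig:type1}, namely colours and arrow directions. If any corner differs from the background-left pattern, the $y$-cancellation above fails, and I would instead carry out the full matrix product. If the arrows turn out to point right (giving $f_x=-\tfrac12$), the signs would change and the stated formula would not hold. So the derived formula also serves as a consistency check on the reading of the figure.
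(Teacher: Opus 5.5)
Your proposal is correct and follows essentially the paper's route. The paper gives no derivation beyond stating that bicubic interpolation of the corner data yields the identity, and your tensor-product Hermite form is exactly the factorization of the paper's matrix formula: $[1,x,x^2,x^3]$ times its left matrix gives $(h_{00},h_{10},h_{01},h_{11})(x)$ in your labeling. The corner data you assume (all-black values $x-y+4N+10$, left arrows so $f_x=\tfrac12$, $f_y=f_{xy}=0$) are read from a figure you cannot see, but they fit how the paper uses the lemma: type $1$ squares lie only away from the left boundary, and $f_x\ge 0.5$ is used to push the min player leftward. With that data, your separability step correctly gives $f=c-h_{10}(y)+P(x)$ and hence $f_x=P'(x)=-3x^2+3x+\tfrac12$.
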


By \Cref{l:type1square}, $f_x(x,y)\ge0.5$. Every point in such a square
has global horizontal coordinate at least $1$, so projection onto the
domain leaves a horizontal displacement of at least $0.5>\epsilon$.

\subsection{Type 2 squares}
\begin{lemma}
A type $2$ square contains no $\epsilon$-approximate fixed point of
$\mathsf{GDA}$.
\end{lemma}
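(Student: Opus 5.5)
We follow the same template as for Type $0$ and Type $1$. By the preamble of this appendix, Type $2$ squares meet only the left boundary. Good points are already excluded by the group analysis of \Cref{sec:proof1}. A bad $\epsilon$-approximate fixed point would therefore need the left projection to be active, which forces $x\le\epsilon=0.01$ in local coordinates, where the local and global horizontal coordinates agree. Moreover, since the square does not touch the top or bottom boundary, any vertical displacement proposed by GDA is truncated only by a distance of at least $1$ minus the local offset. We aim to show that on the strip $\{x\le 0.01\}$, the ascent step $f_y$ is bounded away from zero by a constant larger than $\epsilon$.

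The first step is to read off the corner data of the Type $2$ configuration (colors, hence values, and arrows, hence $f_x,f_y$ with $\delta=1/2$ and $f_{xy}=0$). We then apply the coefficient formula \eqref{eq:bicubic-coefficients} to obtain a closed form for $f_y(x,y)$, as in \Cref{l:1}. We expect it to have the shape
\[
f_y(x,y)=A(x)\,y(1-y)+B(x)\,\bigl(1-h(x)\bigr)+C(x)\,h(x),
\]
with $h(x)=3x^2-2x^3$. The $y(1-y)$ term comes from the difference in value between the top and bottom corners, and the remaining terms interpolate the corner $y$-derivatives along the left and right edges. Since the two left corners of a left-boundary square carry vertical arrows (the background meeting the left boundary pushes vertically, as in \Cref{fig:back2}), the factor $1-h(x)$ at $x\approx0$ should contribute $\pm\tfrac12$. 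We expect the $y(1-y)$ term to carry the same sign, because the color values change monotonically in $y$ along the left column.

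The second step is the sign and magnitude bound. We will show that $|f_y(x,y)|\ge \tfrac12(1-h(0.01))-O(h(0.01))>0.49$ uniformly over $y\in[0,1]$ and $x\le0.01$, using $h(0.01)<3\cdot10^{-4}$. The proposed vertical move then exceeds $0.49$. Because the square lies at least one unit from the relevant horizontal boundary, the projection cannot shrink this move below $\epsilon$, which contradicts the fixed-point condition. If instead the sign of $f_y$ is such that the point moves toward the top or bottom edge, we will use the fact that this boundary is at distance at least $1$, so the displacement is $\min\{|f_y|,\text{dist}\}>\epsilon$ in either case.

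The main obstacle is the first step. We must get the Type $2$ corner data right from the figure and check that the $y(1-y)$ coefficient does not have the opposite sign to the edge-derivative term, since opposite signs could create a zero of $f_y$ near $x=0$. If that happens, we will fall back on $f_x$. Near $x=0$, left projection is active only if $f_x>x\ge0$. We would then combine a lower bound on $|f_y|$ away from the zero set with the fact that $|f_y|$ can be small only on a thin curve in $y$. Along that curve we would show that $x-\Pi(x-f_x)=x\le0.01$ is not the only displacement, because the bicubic form forces $|f_y|>\epsilon$ everywhere except where $y$ is within $O(\epsilon)$ of that curve, and that set is then excluded using the interior (group) argument.
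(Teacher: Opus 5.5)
Your outline matches the paper's argument. You restrict to the strip $x\le 0.01$, obtain $f_y$ in closed form from the bicubic formula, check that the $y(1-y)$ term has the same sign as the constant term, and conclude that the vertical GDA step exceeds $\epsilon$ because the relevant horizontal boundary is at distance at least $1$. What is missing is the computation you flag as the main obstacle, and it is essentially the whole proof. It does resolve in your favor. The Type $2$ corner data are: blue bottom corners, black top corners, upward arrows everywhere except a leftward arrow at the top-right. With these data the interpolant gives
\[
f_y(x,y)=18x^3y^2-19x^3y-\tfrac{57}{2}x^2y^2+30x^2y-(6D+3)\,y(1-y)-\tfrac12,
\qquad D=f(1,0)-f(0,1),
\]
and $D\ge 0$ because blue values exceed black values. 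Dropping the two nonpositive terms gives
\[
f_y\le x^3y(18y-19)+30x^2y-\tfrac12\le 30x^2-\tfrac12\le -0.497
\qquad\text{for } x\le 0.01 .
\]
Since $f_y<0$, the proposed step points downward. You therefore only need that a Type $2$ square does not meet the bottom boundary; the top boundary is irrelevant.

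Two cautions. First, monotonicity of the colors along the left edge does not by itself fix the sign of the $y(1-y)$ coefficient. That coefficient also contains $6\,(f(1,1)-f(1,0))\,h(x)$, and value differences across colors are of order $N$, so a small weight $h(x)$ does not automatically make this term negligible. You must check that the left-edge difference $f(0,1)-f(0,0)=-(D+1)$ dominates, or that the right edge has the same sign. Here $f(1,1)-f(1,0)=1-D<0$, so both hold.

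Second, your fallback would not work. A point near a zero of $f_y$ with the left projection active is a bad point. The group analysis of \Cref{sec:proof1} only controls good points, so it says nothing there. Even knowing $|f_x|>\epsilon$ only yields a horizontal displacement of $x\le 0.01$ after projection. Fortunately, the sign computation above shows the fallback is never needed.
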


\begin{figure}[!ht]
    \centering
    \includegraphics[width=0.3\linewidth]{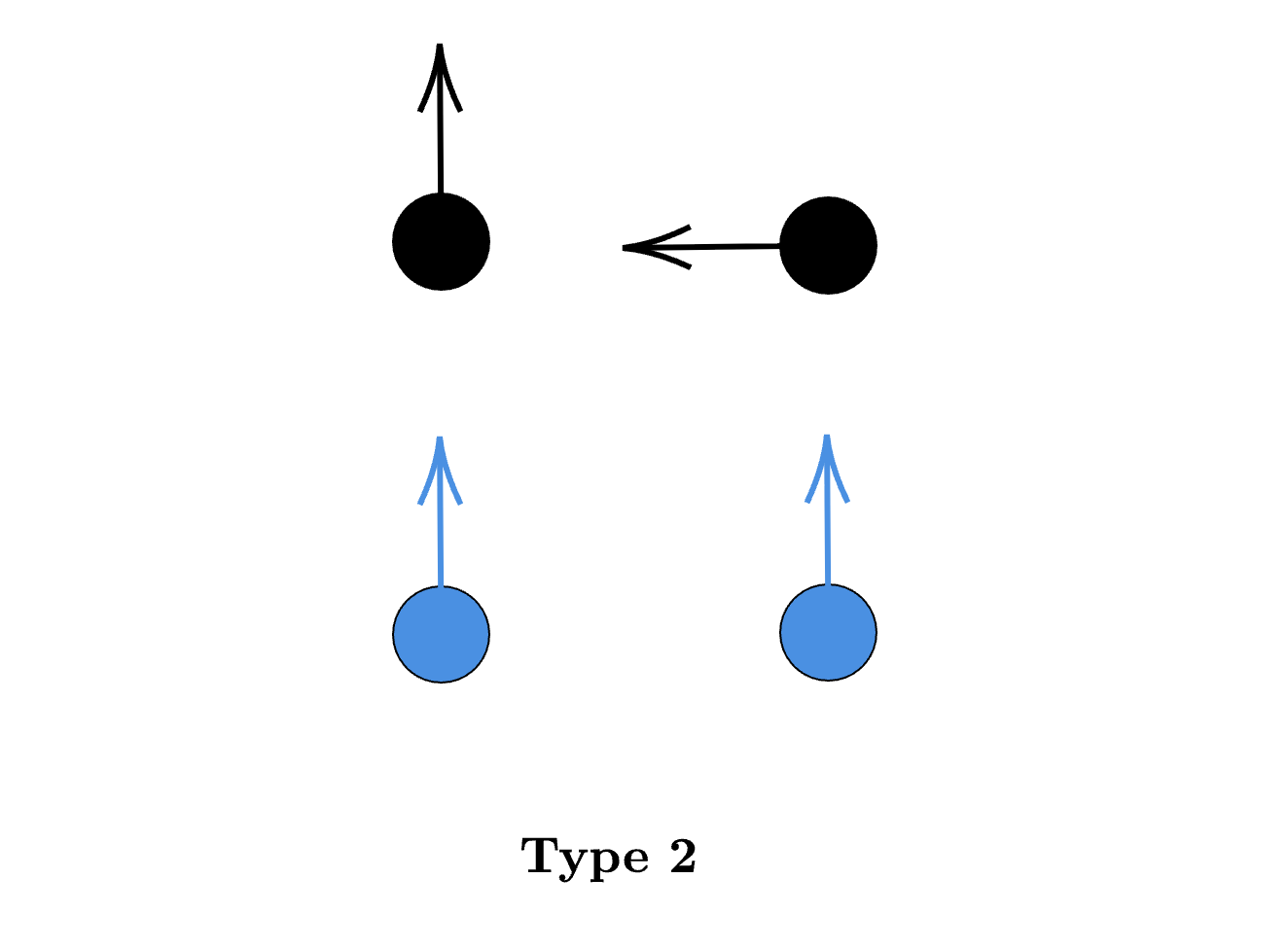}
    \caption{Type 2, Group 3}
    \label{fig:type2}
\end{figure}

Type $2$ squares, shown in \Cref{fig:type2}, meet the left boundary
but not the bottom boundary. It suffices to consider $x\le0.01$.

\begin{lemma}\label{l:type2square}
For a type $2$ square, let $D=f(1,0)-f(0,1)$, the value at the blue
bottom-right corner minus the value at the black top-left corner. Then
\[
\begin{aligned}
f_y(x,y)={}&18x^3y^2-19x^3y-\frac{57}{2}x^2y^2+30x^2y\\
          &-(6D+3)y(1-y)-\frac12.
\end{aligned}
\]
\end{lemma}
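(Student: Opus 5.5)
The statement to prove is the explicit formula for $f_y$ in Lemma~\ref{l:type2square}, the partial derivative in $y$ of the bicubic interpolant on a type~2 small square. The plan is direct computation from the bicubic coefficient formula \eqref{eq:bicubic-coefficients}. First I read off the corner data of the type~2 configuration in \Cref{fig:type2}, in local coordinates. The top-left corner $(0,1)$ is black and the bottom-right corner $(1,0)$ is blue. The other two corners have the colours shown in the figure. Each corner's arrow gives $(f_x,f_y)\in\{(\pm\tfrac12,0),(0,\pm\tfrac12)\}$, with the sign convention of \Cref{sec:extension} (arrow left means $f_x=+\delta$, arrow up means $f_y=-\delta$), and $f_{xy}=0$ at all four corners.

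For the function values, I write each corner value through its colour formula from \Cref{sec:function-values}, in terms of the global coordinates $(x_0,y_0)$ of the bottom-left corner. The key simplification is that the formula for $f_y$ should involve the corner values only through differences. Because black and blue each have affine value formulas, those differences collapse to constants plus the single non-constant quantity $D=f(1,0)-f(0,1)$. The constant $3$ in $-(6D+3)y(1-y)$ and the pure-number cubic coefficients should come out of this collapse.

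Next I expand \eqref{eq:bicubic} and differentiate:
\[
f_y(x,y)=\sum_{i=0}^3\sum_{j=1}^3 j\,a_{ij}x^iy^{j-1}.
\]
I compute the needed columns $j=1,2,3$ of the coefficient matrix $A=L\,F\,R$, where $L$ and $R$ are the two fixed matrices in \eqref{eq:bicubic-coefficients} and $F$ is the corner-data matrix. It is cleanest to work in the Hermite basis. With $h_0(t)=1-3t^2+2t^3$, $h_1(t)=3t^2-2t^3$, $g_0(t)=t-2t^2+t^3$, $g_1(t)=t^3-t^2$, we have
\[
f=\sum_{a,b} f(a,b)h_a(x)h_b(y)+f_x(a,b)g_a(x)h_b(y)+f_y(a,b)h_a(x)g_b(y),
\]
since all $f_{xy}$ terms vanish. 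Differentiating in $y$ uses $h_1'(y)=6y(1-y)=-h_0'(y)$, $g_0'(y)=1-4y+3y^2$ and $g_1'(y)=3y^2-2y$. This gives
\[
f_y=6y(1-y)\sum_a\bigl[(f(a,1)-f(a,0))h_a(x)+(f_x(a,1)-f_x(a,0))g_a(x)\bigr]+\sum_{a,b}f_y(a,b)h_a(x)g_b'(y).
\]

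Finally I collect monomials and match them against the claimed polynomial. The $x$-independent part should be $-(6D+3)y(1-y)-\tfrac12$. The $x^2$ and $x^3$ terms should give the coefficients $18,-19,-\tfrac{57}{2},30$.

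The main obstacle is the bookkeeping in this last step. One risk is pinning down exactly how the two vertical value differences $f(a,1)-f(a,0)$ reduce to an expression involving $D$; this is where the colour offsets ($4N+10$ versus $8N+20$) must cancel correctly. Another risk is sign errors from the arrow convention. I would first check the result at the corners, where $f_y$ must equal the prescribed $\pm\tfrac12$ or $0$, and along $x=0$, before trusting the full expansion.
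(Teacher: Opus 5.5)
Your plan is correct and is essentially the paper's argument: the paper obtains the identity directly from the bicubic formula \eqref{eq:bicubic-coefficients}, and your Hermite form is the same computation, since $[1,x,x^2,x^3]$ times the left matrix there equals $[h_0(x),h_1(x),g_0(x),g_1(x)]$. For your bookkeeping check, the stated formula is exactly what corners $(0,0),(1,0)$ blue and $(0,1)$ black with upward arrows, and $(1,1)$ black with a leftward arrow, produce: they give $f(0,1)-f(0,0)=-(D+1)$, $f(1,1)-f(1,0)=1-D$ and $f_x(1,1)-f_x(1,0)=\tfrac12$, which reduce your expression to $-(6D+3)y(1-y)-\tfrac12+h_1(x)\bigl(11y-\tfrac{21}{2}y^2\bigr)+3g_1(x)\,y(1-y)$, and this expands to the claimed polynomial.
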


The value regimes in \Cref{sec:function-values} give $D\ge0$.
Using \Cref{l:type2square}, for $0\le x\le0.01$ and $0\le y\le1$,
\[
\begin{aligned}
f_y(x,y)
&\le x^3y(18y-19)+30x^2y-\frac12\\
&\le30x^2-\frac12\le-0.497<-0.49.
\end{aligned}
\]
Since the square does not meet the bottom boundary, the projected
vertical displacement is at least $0.49>\epsilon$.

\subsection{Type 3 squares}
\begin{lemma}
A type $3$ square contains no $\epsilon$-approximate fixed point of
$\mathsf{GDA}$.
\end{lemma}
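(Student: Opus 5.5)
We follow the same template as for Types 0 and 2. First we pin down the location of Type 3 squares from the figure. Like Types 2--5 and 7, they meet only the left boundary, so a bad $\epsilon$-approximate fixed point must have projection active at the left boundary, i.e.\ local coordinate $x\le 0.01$. Good points are already excluded by the group analysis of \Cref{sec:proof1}, so we may restrict to the thin strip $\{0\le x\le 0.01,\ 0\le y\le 1\}$.

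Next we compute the relevant partial derivative from the bicubic formula \eqref{eq:bicubic-coefficients}. We read the corner colors and arrows of the Type 3 square off the figure. The corner values follow the regimes of \Cref{sec:function-values}, with global-coordinate differences kept symbolic as in \Cref{l:type2square}, say through a quantity $D$ whose sign is fixed by the color ordering. The corner gradients are $\pm\delta=\pm 1/2$ or $0$ according to the arrows, and the mixed derivatives are zero.

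Since the left boundary forces a nearly vertical analysis, we will compute $f_y(x,y)$ in closed form, as in \Cref{l:1} and \Cref{l:type2square}. We expect the pure-$y$ part to have the form $-c\,y(1-y)-\tfrac12$ with $c\ge0$, coming from the downward/upward arrows at the left corners, plus correction terms carrying factors $x^2$ or $x^3$.

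We then bound $f_y$ uniformly on the strip. The plan is to drop the nonpositive term $-c\,y(1-y)$, using the regime gaps to verify that $c\ge0$. The polynomial corrections are bounded by crude estimates such as $|x^2y(\cdot)|\le C x^2$, which for $x\le0.01$ contribute at most about $10^{-3}$. The target is $f_y\le -0.49$, or $f_y\ge 0.49$ if the arrows point the other way. Because the square does not meet the relevant bottom (respectively top) boundary, the projected vertical displacement $y+f_y$ then moves by at least $0.49>\epsilon$, contradicting the fixed-point condition.

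The main obstacle is the arrow configuration. If the left-edge arrows of Type 3 are horizontal rather than vertical, $f_y$ may vanish somewhere on the strip. In that case we would switch to $f_x$. We would show that $f_x<0$ with $|f_x|>\epsilon$ away from a small region, which makes projection at the left boundary inactive. If the point is then good, the group analysis excludes it. In the remaining region we would fall back on $|f_y|>\epsilon$, checked corner by corner at the endpoints $y\in\{0,1\}$ together with monotonicity in $y$.
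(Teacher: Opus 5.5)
\paragraph{Main route.} Your main route does not work for this square. From the bicubic data, a Type 3 square has rightward arrows at its two bottom corners ($f_x=-\tfrac12$, $f_y=0$) and upward arrows at its two top corners ($f_x=0$, $f_y=-\tfrac12$). Its interpolant is
\[
f_y(x,y)=y\bigl[(1-y)\,3x(1-x)(1-2x)+4.5y-5\bigr],
\qquad
f_x(x,y)=x(1-x)(6y^3-9y^2-3)-(1-y)^2\bigl(y+\tfrac12\bigr).
\]
So $f_y$ vanishes identically on the bottom edge $y=0$, and no uniform bound $|f_y|\ge 0.49$ holds on the strip $x\le 0.01$. Your expected shape $-c\,y(1-y)-\tfrac12$ is also wrong: at $x=0$ one gets $-4.5\,y(1-y)-\tfrac12 y$. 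This is exactly the ``main obstacle'' scenario you flagged, so everything rests on your fallback.

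\paragraph{Fallback.} The fallback has the right idea, and it is what the paper does, but as written it is not a proof. You never compute the interpolant, and you never fix where to switch from $f_x$ to $f_y$. The check you describe for the leftover region (evaluate $f_y$ at $y\in\{0,1\}$ and use monotonicity in $y$) fails on the full range, since $f_y(x,0)=0$ and $y\mapsto 4.5y^2-5y$ has an interior minimum at $y=5/9$.

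The paper closes this with the explicit formulas above. For $x\le 0.01$ they give $f_x\le -y^3+1.5y^2-0.45$ and $f_y\le y(4.5y-4.95)$, and one splits at $y=\tfrac12$:
\begin{itemize}
\item For $y\le\tfrac12$, $f_x\le -0.2$, so the min player moves right by at least $0.2$; nothing blocks this near the left boundary.
\item For $y\ge\tfrac12$, $f_y\le -0.45$. At least half a unit lies below within the square, so the max player moves down by at least $0.45$.
\end{itemize}

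Your own sign observation would have sufficed without any magnitude bound or leftover region. The formula shows $f_x\le 0$ on the whole square, since $x(1-x)\ge 0$ and $6y^3-9y^2-3<0$ there. Hence the left-boundary projection is never active in a Type 3 square. Since these squares meet no other boundary, any bad point would have to move by at least $1$, and good points are handled by the group analysis.
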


\begin{figure}[!ht]
    \centering
    \includegraphics[width=0.3\linewidth]{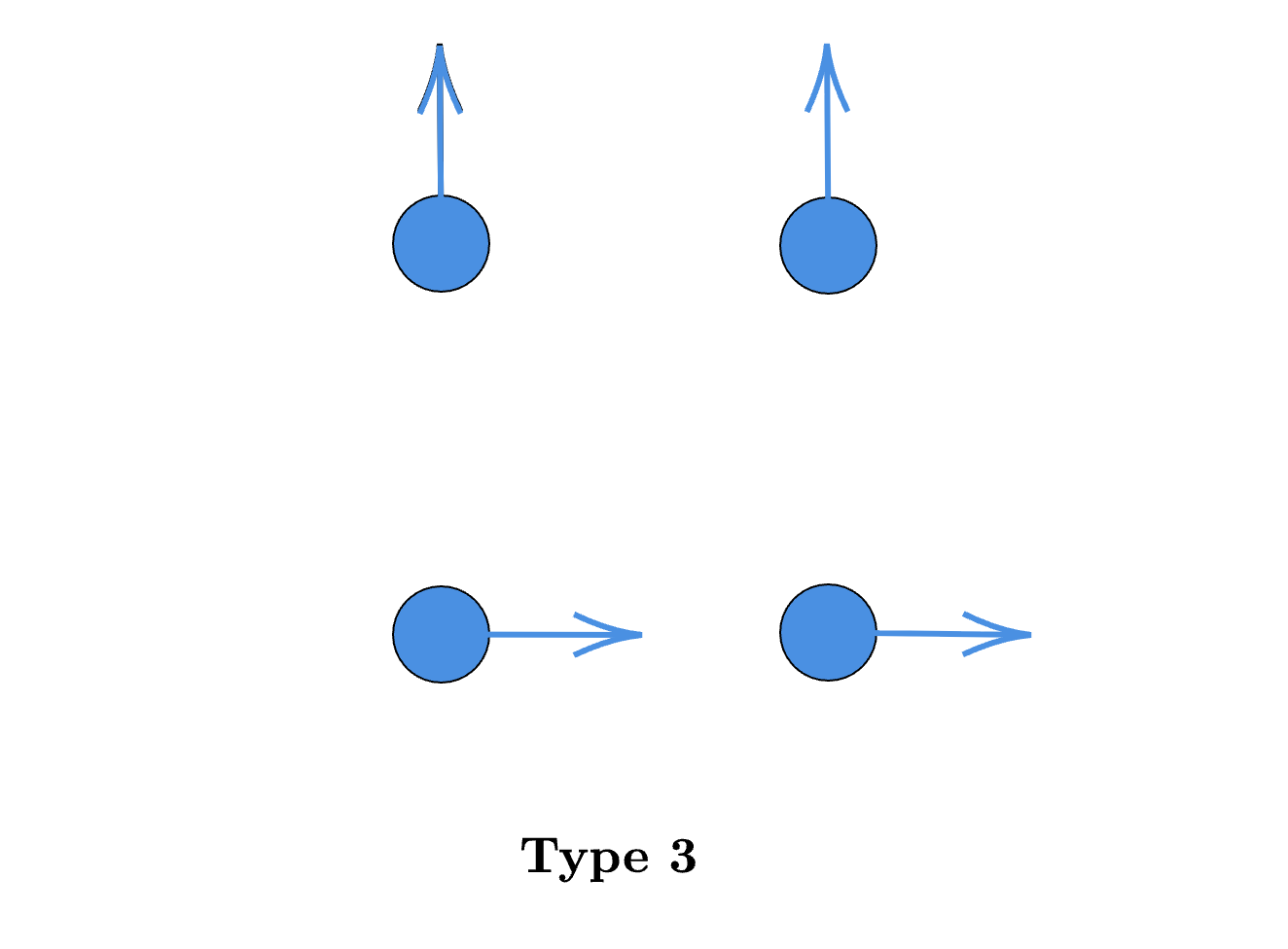}
    \caption{Type 3}
    \label{fig:type3}
\end{figure}

Type $3$ squares, shown in \Cref{fig:type3}, meet the left boundary.
It suffices to consider $x\le0.01$.

\begin{lemma}\label{l:2}
For a type $3$ square,
\[
\begin{aligned}
f_x(x,y)={}&-6x^2y^3+9x^2y^2+3x^2+6xy^3-9xy^2\\
          &-3x-y^3+1.5y^2-0.5,\\
f_y(x,y)={}&y(-6x^3y+6x^3+9x^2y-9x^2-3xy+3x+4.5y-5).
\end{aligned}
\]
\end{lemma}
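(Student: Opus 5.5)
The plan is a direct symbolic computation. Lemma~\ref{l:2} is purely algebraic: in a small square of type~$3$, the bicubic interpolant of \Cref{sec:extension} is a polynomial $\sum_{i,j\le 3}a_{ij}x^iy^j$. Its coefficients are fixed by the sixteen corner data in~\eqref{eq:bicubic-coefficients}. So I will read off these corner data from \Cref{fig:type3}, form the middle matrix, perform the two matrix products, and differentiate.

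\textbf{Step 1: corner data.} For each corner $(0,0),(1,0),(0,1),(1,1)$ I take its color and arrow from \Cref{fig:type3}. I then translate the arrow, with $\delta=1/2$, into $(f_x,f_y)$, keeping in mind that arrows indicate $-\nabla f$. Reading the claimed formulas backwards gives a consistency check on this reading:
\begin{itemize}
\item At the bottom corners we should have $f_x=-\tfrac12$ and $f_y=0$, i.e.\ arrows pointing right.
\item At the top corners we should have $f_x=0$ and $f_y=-\tfrac12$, i.e.\ arrows pointing up.
\item All mixed derivatives $f_{xy}$ are zero.
\end{itemize}
For the values, I write them relative to $f(0,0)$, which does not affect the gradient. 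Integrating the claimed $f_x$ and $f_y$ along the edges predicts $f(1,0)-f(0,0)=f(1,1)-f(0,1)=f(0,1)-f(0,0)=-1$. This is exactly the behaviour of the blue regime $-x-y+8N+20$ on four adjacent grid points, so I expect all four corners to be blue. If the figure shows instead a mixture of colors, the same procedure applies, but the value differences then involve $N$-dependent gaps, and the lemma as stated fixes them.

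\textbf{Step 2: coefficients.} I compute $A=\mathcal L\,F\,\mathcal R$ with the two fixed $4\times4$ matrices of~\eqref{eq:bicubic-coefficients} and $F$ the data matrix. Since $f_{xy}\equiv0$ and the $f_y$ entries vanish on the bottom row, $F$ is sparse, and the product is a routine computation.

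\textbf{Step 3: derivatives.} I compute $f_x=\sum_{i,j} i\,a_{ij}x^{i-1}y^j$ and $f_y=\sum_{i,j} j\,a_{ij}x^iy^{j-1}$, then collect terms. For $f_y$ I factor out $y$, which is possible because $f_y(x,0)\equiv0$ (both bottom corners have $f_y=0$ and $f_{xy}=0$ there).

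\textbf{Checks.} Finally, I verify the result against all corner conditions: $f_x$ and $f_y$ at the four corners must match the arrows, and the edge integrals must match the value differences. This catches transcription errors.

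The only real obstacle is Step~1: correctly reading colors and arrow directions from the figure, including the sign convention for the max player. Everything after that is mechanical. I would rely on the backwards consistency check above to confirm the reading before doing the matrix product.
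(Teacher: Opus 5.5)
Your proposal is correct and matches the paper, which gives no separate argument and simply states the formula as the output of the bicubic interpolation \eqref{eq:bicubic-coefficients} applied to the corner data of \Cref{fig:type3}. Your inferred data is the only reading consistent with the stated formula: four blue corners, rightward arrows at the bottom, upward arrows at the top, and zero mixed derivatives. Carrying out Steps 2--3 with this data reproduces both $f_x$ and $f_y$ exactly.
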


By \Cref{l:2}, for $x\le0.01$,
\[
f_x(x,y)\le-y^3+1.5y^2-0.45,
\qquad
f_y(x,y)\le y(4.5y-4.95).
\]
If $y\le0.5$, then $f_x(x,y)\le-0.2$ and the rightward displacement
is at least $0.2$, since at least $1-x\ge0.99$ remains to the right
within the square. If $y\ge0.5$, then $f_y(x,y)\le-0.45$ and at least
half a unit remains below within the square, giving a downward
displacement of at least $0.45$. Both bounds exceed $\epsilon$.

\subsection{Type 4 squares}
\begin{lemma}
A type $4$ square contains no $\epsilon$-approximate fixed point of
$\mathsf{GDA}$.
\end{lemma}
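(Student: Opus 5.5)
The argument for the Type~$4$ squares will follow the same template as Types~$2$ and~$3$. First, the bicubic formulas \eqref{eq:bicubic}--\eqref{eq:bicubic-coefficients} give explicit expressions for $f_x$ and $f_y$ on the square. I will use the corner colours and arrows of the Type~$4$ configuration, with $\delta=1/2$ and zero mixed derivatives. Like Types~$2$ and~$3$, a Type~$4$ square meets only the left boundary. Good points are already excluded by the group analysis in \Cref{sec:proof1}, and \Cref{obs:good-points} places any bad $\epsilon$-approximate fixed point within distance $\epsilon$ of the boundary. So it suffices to treat the strip $0\le x\le 0.01$, $0\le y\le 1$ in local coordinates.

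I first compute the two partial derivatives in closed form, analogous to \Cref{l:type2square} and \Cref{l:2}. Any dependence on the absolute corner values should appear only through a difference such as $D=f(1,0)-f(0,1)$. The sign of $D$ is fixed by the value regimes of \Cref{sec:function-values}, since each pair of colour regimes is separated by a gap of at least $10$. I then substitute $x\le 0.01$. Every term carrying a factor $x$, $x^2$ or $x^3$ has coefficient bounded by an absolute constant, so together these terms contribute at most about $0.3$. What remains is essentially the one-dimensional Hermite interpolant along the left edge $x=0$. That interpolant is determined by the two left corners: their values and their $y$-derivatives, each of size $\pm1/2$ or $0$.

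The key step is a case split in $y$, as for Type~$3$. On the part of the left edge where the left corners' arrows force $|f_y|\ge 1/2$, the vertical GDA step has magnitude near $1/2$. Because the square does not meet the top or bottom boundary (or, if it does, the sign of $f_y$ points away from it), the projected vertical displacement exceeds $0.45>\epsilon$. On the complementary range of $y$, I will show instead that $f_x\le -c$ for some constant $c\ge 0.2$. This makes the descent step for $x$ point rightward, into the domain, and at least $0.99$ of room remains, so the horizontal displacement exceeds $\epsilon$. Concretely, I expect polynomial bounds of the form $f_x\le p(y)-0.45$ and $f_y\le y\,q(y)$ (or the mirrored signs), checked on $[0,1/2]$ and $[1/2,1]$. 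The split point may be adjusted to wherever the two bounds cross.

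The main obstacle is the transition region in $y$. There the arrow-driven part of $f_y$ passes through $0$, so horizontal motion has to take over, and I must check that the two bounds overlap with margin larger than the $O(x)$ error terms. If a single split at $y=1/2$ fails, I will use a three-piece split. Near any corner where both Hermite contributions vanish, I will use the value-difference term $(6D+3)y(1-y)$, which is strictly signed because $D\ge0$, to get the needed strict margin. This is exactly how $D$ served in Type~$2$.
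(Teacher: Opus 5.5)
Your plan matches the paper's proof: compute the bicubic partials explicitly, use the value regimes only to see that the corner-difference coefficient $6f(1,1)-6f(0,0)+4.5$ is nonnegative, restrict to $x\le 0.01$, and split at $y=1/2$. For Type~4 the roles are reversed relative to Type~3 (on $y\le 1/2$ one has $f_y\ge -0.5y+0.49\ge 0.24$, an upward step with half a unit of room inside the square; on $y\ge 1/2$ one has $f_x\le 0.06+y^3-1.5y^2\le -0.19$, a rightward step), so neither a three-piece split nor your corner fallback is needed---and that fallback could not supply a margin anyway, since $y(1-y)$ vanishes at the corners.
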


\begin{figure}[!ht]
    \centering
    \includegraphics[width=0.3\linewidth]{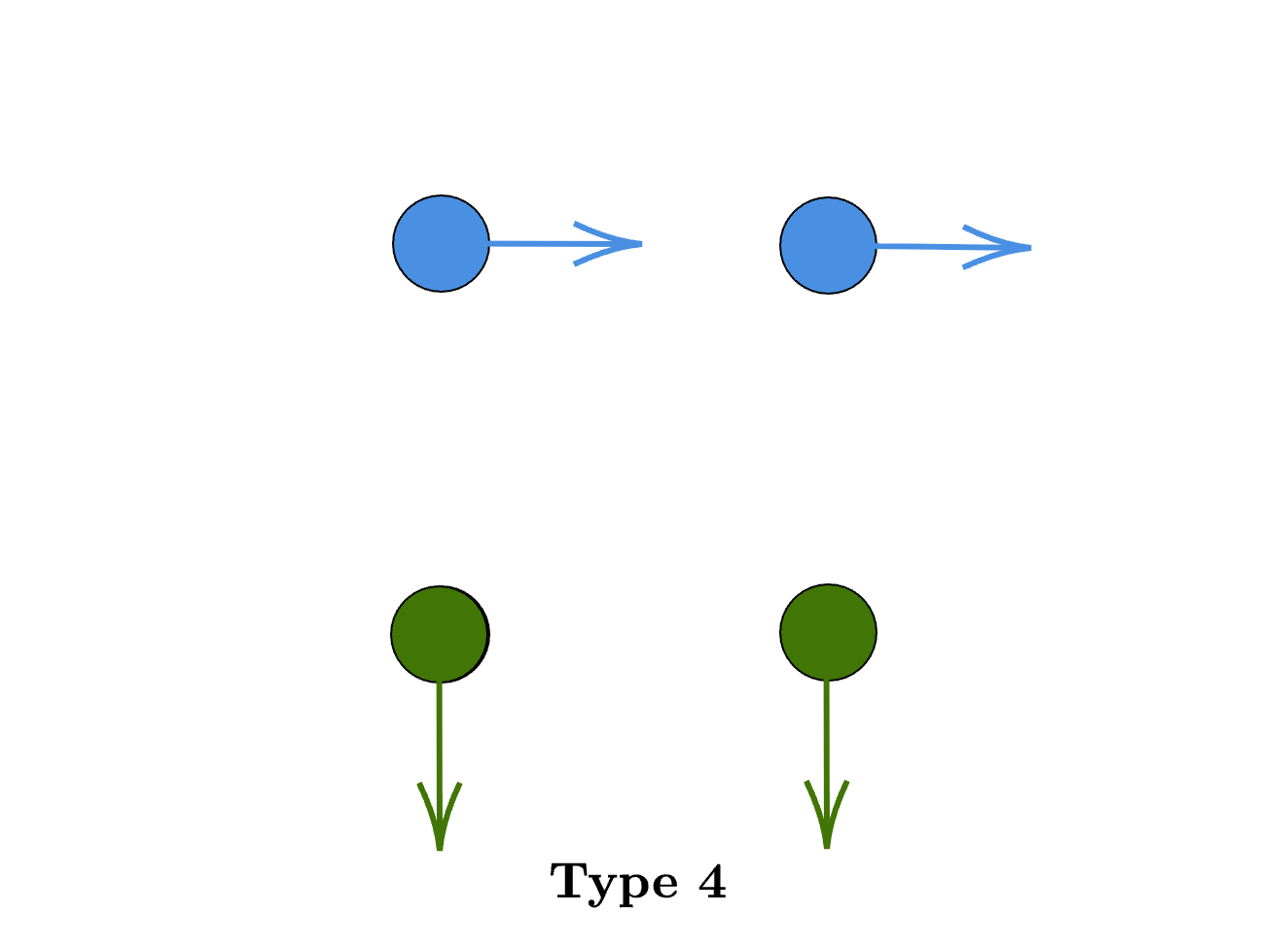}
    \caption{Type 4, Group 2}
    \label{fig:type4}
\end{figure}

Type $4$ squares, shown in \Cref{fig:type4}, meet the left boundary.
It suffices to consider $x\le0.01$.

\begin{lemma}\label{l:type4square}
For a type $4$ square,
\[
\begin{aligned}
f_x(x,y)={}&-18x^2y^3+27x^2y^2-6x^2+18xy^3-27xy^2\\
          &+6x+y^3-1.5y^2,\\
f_y(x,y)={}&-18x^3y^2+18x^3y+27x^2y^2-27x^2y+3xy^2-3xy\\
          &+(6f(1,1)-6f(0,0)+4.5)y(1-y)-0.5y+0.5,
\end{aligned}
\]
where $f(0,0)$ and $f(1,1)$ are the function values at the bottom-left
and top-right corners, respectively.
\end{lemma}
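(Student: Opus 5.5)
The plan is a direct computation: specialize the bicubic formula \eqref{eq:bicubic-coefficients} to the corner data of a type~$4$ square and differentiate \eqref{eq:bicubic}.

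\textbf{Step 1: corner data.} First I read off from \Cref{fig:type4} the colors and arrows at the four corners $(0,0),(1,0),(0,1),(1,1)$ in local coordinates. The colors give the values $f(\cdot,\cdot)$ through the regimes of \Cref{sec:function-values}; since the square has side $1$, horizontally or vertically adjacent values of the same color differ by exactly $\pm1$. The arrows give $(f_x,f_y)\in\{(\pm\tfrac12,0),(0,\pm\tfrac12)\}$, with $\delta=\tfrac12$ and the sign convention that arrows show $-\nabla f$. All mixed derivatives are zero.

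As a consistency check, the claimed formula forces the following along the edges.
\begin{itemize}
\item On the edge $y=0$: $f_x(0,0)=f_x(1,0)=0$ and $f(1,0)-f(0,0)=1$, since $f_x(x,0)=6x(1-x)$.
\item On the edge $y=1$: $f_x(0,1)=f_x(1,1)=-\tfrac12$ and $f(1,1)-f(0,1)=-1$, since $f_x(x,1)=3x^2-3x-\tfrac12$.
\item At $x=0$: $f_y(0,0)=\tfrac12$ and $f_y(0,1)=0$.
\end{itemize}
I would confirm these against the figure. The difference $f(1,1)-f(0,0)$ crosses color regimes and is therefore kept symbolic, which is why it appears in the statement.

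\textbf{Step 2: avoid the full $4\times4$ product.} Rather than multiply out the matrices, I would use the equivalent Hermite form
\[
f(x,y)=\sum_{a,b\in\{0,1\}}\bigl[f(a,b)\,H_a(x)H_b(y)+f_x(a,b)\,K_a(x)H_b(y)+f_y(a,b)\,H_a(x)K_b(y)\bigr].
\]
Here the basis polynomials are
\[
H_0=1-3t^2+2t^3,\qquad H_1=3t^2-2t^3,\qquad K_0=t-2t^2+t^3,\qquad K_1=t^3-t^2.
\]
The mixed-derivative terms vanish. This form is exactly what \eqref{eq:bicubic-coefficients} encodes.

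\textbf{Step 3: differentiate and regroup.} Differentiating the Hermite form termwise gives $f_x$ and $f_y$ as combinations of $H_a',K_a'$ against $H_b,K_b$. I would then substitute the corner data, writing every value difference in terms of the single symbolic quantity $f(1,1)-f(0,0)$ and the known unit differences along the edges. Expanding into monomials $x^iy^j$ and collecting terms should produce the displayed expressions. In particular, the $y(1-y)$ factor multiplying $6(f(1,1)-f(0,0))+4.5$ comes from $H_1'(y)=6y(1-y)$, and the constant part $-0.5y+0.5$ comes from the $f_y$-arrow contributions on the left edge.

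\textbf{Main obstacle.} The only real obstacle is bookkeeping: getting the signs of the arrow-to-derivative conversion right, since the convention is opposite for ascent intuition, and correctly expressing the four corner values through one free difference. I would guard against errors by checking the final polynomials at the corners. They must reproduce the prescribed $f_x,f_y$ there, for example $f_y(0,0)=\tfrac12$ and $f_x(1,1)=-\tfrac12$, and they must agree with the edge Hermite cubics above.
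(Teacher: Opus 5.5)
Your proposal is correct and follows the same route as the paper, which gives no argument beyond specializing the bicubic formula \eqref{eq:bicubic-coefficients} to the type-$4$ corner data. Your Hermite form is just that matrix product written out, and the data you extract reproduce both displayed polynomials exactly, once combined with $f(1,0)=f(0,0)+1$, $f(0,1)=f(1,1)+1$ and the right-edge values $f_y(1,0)=\tfrac12$, $f_y(1,1)=0$, which you did not list but will need. One correction to your attribution in Step~3: $-0.5y+0.5$ comes from the $f_y=\tfrac12$ arrows at \emph{both} bottom corners, which combine via $H_0(x)+H_1(x)=1$ into $\tfrac12K_0'(y)=\tfrac12-\tfrac12y-\tfrac32y(1-y)$, and the leftover $-\tfrac32y(1-y)$ turns the value term $6\bigl(f(0,1)-f(0,0)\bigr)y(1-y)=\bigl(6f(1,1)-6f(0,0)+6\bigr)y(1-y)$ into the stated coefficient $6f(1,1)-6f(0,0)+4.5$.
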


The value regimes in \Cref{sec:function-values} give
$6f(1,1)-6f(0,0)+4.5\ge0$. For $x\le0.01$, \Cref{l:type4square} yields
\[
\begin{aligned}
f_x(x,y)
&=x(1-x)(18y^3-27y^2+6)+y^3-1.5y^2\\
&\le0.06+y^3-1.5y^2,\\
f_y(x,y)&\ge-0.5y+0.49.
\end{aligned}
\]
If $y\le0.5$, then $f_y(x,y)\ge0.24$ and at least half a unit remains
above within the square, giving an upward displacement of at least
$0.24$. If $y\ge0.5$, then $f_x(x,y)\le-0.19$ and at least
$1-x\ge0.99$ remains to the right, giving a rightward displacement
of at least $0.19$. Both bounds exceed $\epsilon$.

\subsection{Type 5 squares}
\begin{lemma}
A type $5$ square contains no $\epsilon$-approximate fixed point of
$\mathsf{GDA}$.
\end{lemma}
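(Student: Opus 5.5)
We will follow the same template as for Types $2$--$4$. A Type $5$ square meets the left boundary but not the bottom one. Hence, as stated at the start of this appendix, any bad $\epsilon$-approximate fixed point in it must have projection active at the left boundary, so it suffices to consider $x\le 0.01$ in local coordinates.

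The first step is to compute the bicubic interpolant explicitly from \eqref{eq:bicubic-coefficients}. We read off the corner colors and arrows of the Type $5$ configuration, take the corner gradients of magnitude $\delta=1/2$, and set $f_{xy}=0$ at the corners. This gives closed forms for $f_x(x,y)$ and $f_y(x,y)$, stated as a lemma analogous to \Cref{l:2} and \Cref{l:type4square}. As in the Type $2$ and Type $4$ cases, any dependence on the actual corner values should enter only through a coefficient of the form $c\,y(1-y)$ (or $c\,x(1-x)$), where $c$ is a difference of corner values plus a constant. The value regimes of \Cref{sec:function-values} (gaps of at least $10$ between colors) fix the sign of $c$. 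We then drop that term in the favorable direction, which leaves a polynomial with explicit numeric coefficients.

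The second step restricts to $0\le x\le 0.01$. We bound every term carrying a factor $x$, $x^2$ or $x^3$ by a small constant; with coefficients of size at most about $30$ this costs at most roughly $0.06$. The resulting one-variable bounds in $y$ are of the form $f_x\le p(y)+0.06$ and $\pm f_y\le q(y)+0.01$ as appropriate.

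The third step is a split on $y$, probably at $y=1/2$, chosen so that in each half one partial derivative is bounded away from zero by a constant larger than $\epsilon$, such as $0.2$. The sign must point in a direction where the domain boundary is far away:
\begin{itemize}
\item a rightward min-player push ($f_x<0$), for which at least $0.99$ of room remains;
\item a vertical max-player push ($f_y$ of the appropriate sign), with at least half a unit of room inside the square, which is valid because the square does not meet the top or bottom boundary.
\end{itemize}
The projected displacement then exceeds $\epsilon$, which is a contradiction. If the Type $5$ arrows make $f_y$ uniformly of one sign, as for Types $0$ and $2$, no split is needed: we just show $|f_y|\ge 0.49$.

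The main obstacle is the first step: getting the interpolant right and finding a decomposition, with terms such as $h(x)=3x^2-2x^3$ and $y(1-y)$, that exposes the signs. The other delicate point is choosing the split point in $y$ so that both halves have margin above $0.01$ after the $x$-error. If $y=1/2$ leaves too little margin, we will locate the root of the $x=0$ polynomial in $y$ and split there, with a small overlap.
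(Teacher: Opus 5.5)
Your template is the paper's argument, and Type $5$ lands in your no-split fallback: the interpolant gives $f_y=3y(1-y)+\tfrac12$ plus terms carrying $x^2$ or $x^3$, so $f_y\ge0.45$ (indeed $>0.49$) for $x\le0.01$, and the max player is pushed \emph{upward} by more than $\epsilon$. Two small corrections. First, the fact this uses is that the square does not meet the \emph{top} boundary, not the bottom one stressed in your opening sentence. Second, the horizontal branch of your split is unavailable here: $f_x$ has an overall factor $x$ and is positive for small $x>0$, so it pushes into the left boundary and contributes displacement at most $x\le0.01$.
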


\begin{figure}[!ht]
    \centering
    \includegraphics[width=0.3\linewidth]{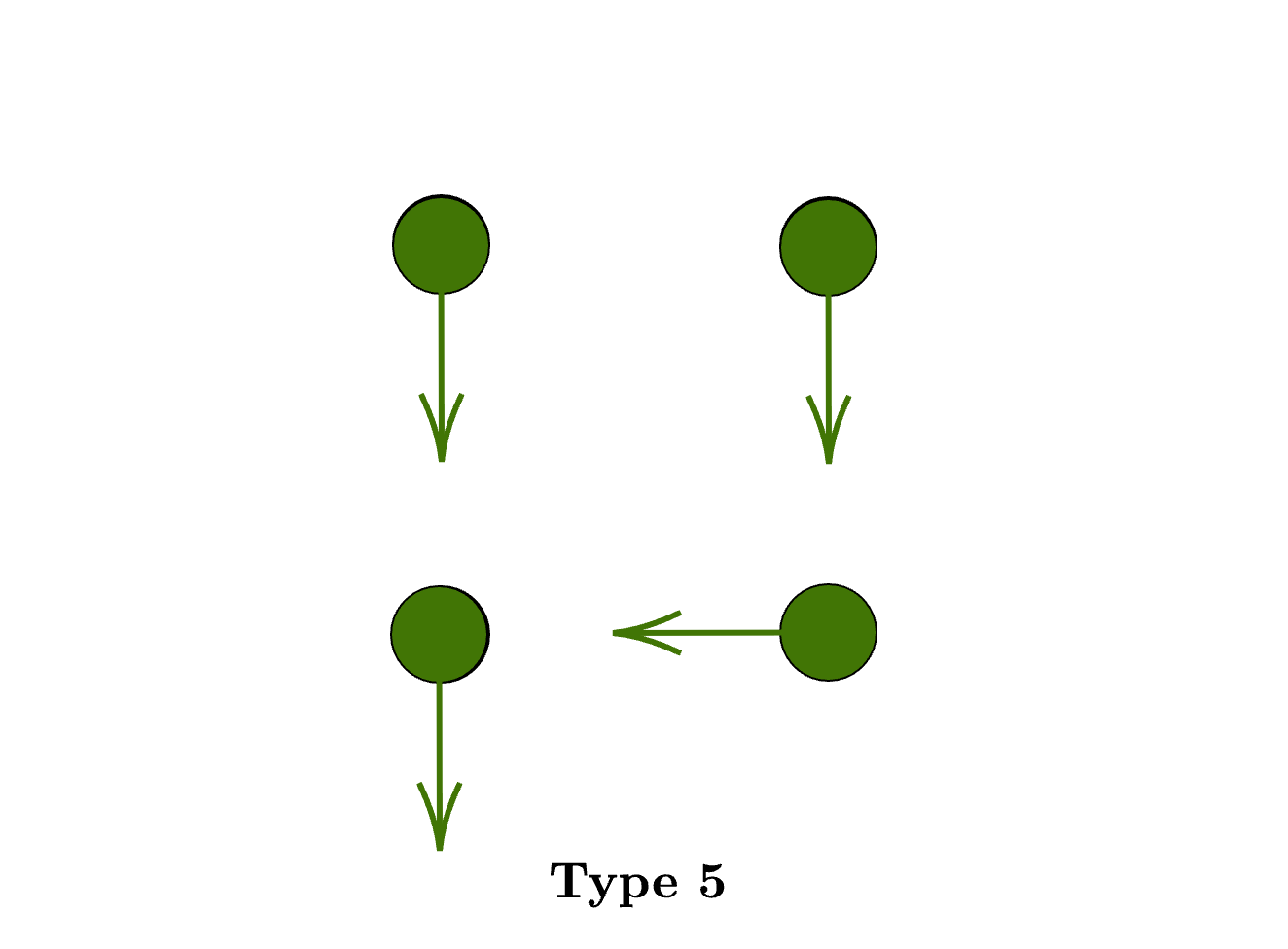}
    \caption{Type 5, Group 3}
    \label{fig:type5}
\end{figure}

Type $5$ squares, shown in \Cref{fig:type5}, meet the left boundary
but not the top boundary. It suffices to consider $x\le0.01$.

\begin{lemma}\label{l:3}
For a type $5$ square,
\[
\begin{aligned}
f_x(x,y)={}&x(6xy^3-10.5xy^2+3xy-4.5x-5y^3+9y^2-3y+5),\\
f_y(x,y)={}&6x^3y^2-7x^3y+x^3-7.5x^2y^2+9x^2y\\
          &-1.5x^2-3y^2+3y+0.5.
\end{aligned}
\]
\end{lemma}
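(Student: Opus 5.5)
The plan is a direct computation from the bicubic interpolation formula \eqref{eq:bicubic}--\eqref{eq:bicubic-coefficients}, in the same way as \Cref{l:1,l:type1square,l:type2square,l:2,l:type4square}. I would first read off the corner data of a type $5$ square from \Cref{fig:type5}, working in local coordinates $(x,y)\in[0,1]^2$. This data consists of the colors of the four corners and the arrows at them.

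\textbf{Corner data.} The colors fix the function values through the regimes of \Cref{sec:function-values}. The only thing that enters the derivatives is the differences $f(1,0)-f(0,0)$, $f(0,1)-f(0,0)$ and $f(1,1)-f(0,0)$. Because adjacent corners differ by one unit in a single coordinate, each of these is an explicit constant fixed by the colors alone, so no position-dependent quantity $D$ appears. This is unlike types $2$ and $4$. The arrows fix $f_x,f_y\in\{0,\pm\tfrac12\}$ at each corner under the convention with $\delta=1/2$, and all $f_{xy}$ vanish.

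As a consistency check against the claimed formula, its boundary restrictions predict the following corner data:
\begin{itemize}
\item $f_x\equiv 0$ on the left edge, and $f_x(1,0)=\tfrac12$, $f_x(1,1)=0$;
\item $f_y(0,0)=f_y(0,1)=f_y(1,1)=\tfrac12$ and $f_y(1,0)=0$;
\item $f(0,1)-f(0,0)=1$, obtained from matching $f_y(0,y)$ with the one-dimensional Hermite form $6\Delta\,y(1-y)+f_y(0,0)(1-4y+3y^2)+f_y(0,1)(3y^2-2y)$.
\end{itemize}
I would confirm that this matches the figure before continuing.

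\textbf{Coefficients and derivatives.} Next I would form the $4\times4$ data matrix and multiply by the two fixed matrices in \eqref{eq:bicubic-coefficients} to get the $a_{ij}$. Then I would differentiate \eqref{eq:bicubic} term by term: $f_x=\sum_{i\ge1} i\,a_{ij}x^{i-1}y^j$ and $f_y=\sum_{j\ge1} j\,a_{ij}x^i y^{j-1}$. It is less error-prone to do this through the tensor Hermite basis $h_{00}(t)=1-3t^2+2t^3$, $h_{01}(t)=3t^2-2t^3$, $h_{10}(t)=t-2t^2+t^3$, $h_{11}(t)=t^3-t^2$. In that basis,
\[
f(x,y)=\sum_{\text{corners }(a,b)}\Bigl[f(a,b)\,h_{0a}(x)h_{0b}(y)+f_x(a,b)\,h_{1a}(x)h_{0b}(y)+f_y(a,b)\,h_{0a}(x)h_{1b}(y)\Bigr],
\]
with no $f_{xy}$ terms. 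Each derivative is then a short sum of products of one-variable polynomials. I would expand these sums and collect monomials to reach the two displayed expressions.

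\textbf{Main obstacle.} The only real difficulty is bookkeeping: getting the sign conventions right (arrows indicate $-\nabla f$) and reading the figure correctly. To guard against errors I would check the final polynomials on all four edges against the one-dimensional Hermite interpolants there. I would also check them at the four corners, for example $f_x(1,0)=6-10.5+\dots$, reducing to $\tfrac12$. Any mismatch would point to a misread corner.
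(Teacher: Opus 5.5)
Your plan is correct and is the same direct bicubic computation the paper relies on: the formula forces the type~5 corner data to be four green corners with $-\nabla f$ pointing down at $(0,0),(0,1),(1,1)$ and left at $(1,0)$, and the Hermite-basis expansion with this data reproduces both displayed polynomials exactly. Two small slips: first, the corner differences are position-independent only because all four corners lie in the green regime, since adjacency alone would not suffice across different colors; second, $6-10.5+\cdots$ is the evaluation at $(1,1)$, which gives $f_x(1,1)=0$, whereas $f_x(1,0)=-4.5+5=\tfrac12$.
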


By \Cref{l:3}, for $x\le0.01$,
\[
f_y(x,y)\ge-7x^3y-7.5x^2y^2-1.5x^2+0.5\ge0.45.
\]
Since the square does not meet the top boundary, the projected
vertical displacement is at least $0.45>\epsilon$.

\subsection{Type 6 squares}
\begin{lemma}
A type $6$ square contains no $\epsilon$-approximate fixed point of
$\mathsf{GDA}$.
\end{lemma}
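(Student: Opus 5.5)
We follow the same template as Types $0$--$5$. First we read off the corner data of a Type $6$ square from its figure: colors (hence values via \Cref{sec:function-values}), arrows (hence $f_x,f_y\in\{0,\pm\tfrac12\}$), and $f_{xy}=0$. Substituting these into \eqref{eq:bicubic-coefficients} gives closed forms for $f_x$ and $f_y$ on the unit square, analogous to \Cref{l:2,l:type4square}. Any dependence on actual corner values should enter only through differences such as $D=f(1,0)-f(0,1)$. Where such a difference multiplies $y(1-y)$ or $x(1-x)$, we bound its sign using the value regimes: the gaps between colors are at least $10$, and within a color the value changes by exactly $\pm1$ per unit step.

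Good points are already excluded by the group analysis of \Cref{sec:proof1}. By \Cref{obs:good-points}, a bad $\epsilon$-approximate fixed point must lie within $\epsilon$ of a boundary where the projection is active. Type $6$ squares are the corner case announced at the start of the appendix, so both the $x\le\epsilon$ strip and a $y\le\epsilon$ (or $y\ge1-\epsilon$) strip must be handled. The first step is to determine which boundary directions can actually be active. We check the sign of $f_y$ on the full square, as was done for Type $0$ via $h(x)=3x^2-2x^3$. If $f_y$ has a fixed sign, projection is possible only at one horizontal boundary, and if that boundary is the one not touched by the square, the vertical displacement of at least $|f_y|$ immediately gives a contradiction. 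We do the same for $f_x$.

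In the remaining strips we split on $y$ (say $y\le\tfrac12$ versus $y\ge\tfrac12$), as in Types $3$ and $4$. On each piece we show that one coordinate has gradient magnitude at least about $0.2$, pointing away from the nearby boundary. Since at least half a unit of the square remains in that direction, the projected displacement in that coordinate exceeds $\epsilon$. To get these bounds we substitute $x\le0.01$, which kills every term carrying a factor of $x$ up to $O(10^{-2})$. The dominant part is then the one-dimensional Hermite profile $\pm(3y^2-2y^3)$ or its derivative, combined with the $\pm\tfrac12$ boundary data.

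The main obstacle is the actual corner of the domain, $x,y\le\epsilon$. There both projections may be active, and the "move away from the boundary" argument must use a coordinate whose improving direction points into the domain. For $x,y$ near $0$ the interpolant reduces to the corner data at $(0,0)$. So we check that the arrow at the domain corner (from the source tile $\langle0,0\rangle$ in \Cref{fig:source1}) makes either $-f_x\ge0.49$ (push right) or $f_y\ge0.49$ (push up). Continuity of the explicit polynomials then keeps this above $0.4$ on $[0,\epsilon]^2$. If instead the corner arrow pointed into the boundary in both coordinates, the corner would be a genuine fixed point. Ruling this out is precisely the design purpose of the special source gadget, so we expect this verification to go through by direct inspection of that arrow.
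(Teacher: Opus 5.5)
Your outline has the same architecture as the paper's argument. The global sign of $f_y$ decides which vertical projection can be active; you then restrict to the strip $x\le 0.01$ and exhibit an inward push larger than $\epsilon$. However, the outline stops exactly where the proof of this lemma begins: nothing specific to the Type $6$ square is ever established. You never derive the interpolant or determine the sign of $f_y$. You leave open whether the square touches the bottom or the top boundary. The step you call decisive, the domain corner, is left as ``we expect this verification to go through by direct inspection'' plus an appeal to continuity. Continuity alone gives no quantitative bound on $[0,\epsilon]^2$ without explicit derivative estimates. Since the entire content of this lemma is a finite computation, this is a genuine gap.

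The missing step is the explicit formula of \Cref{l:4}, in particular the identity $f_y(x,y)=3(1+h(x))\,y(1-y)+\tfrac{1-h(x)}{2}$ with $h(x)=3x^2-2x^3\in[0,1]$.
\begin{itemize}
\item It gives $f_y\ge 0$ on the whole square, so the bottom projection can never be active.
\item An active top projection would move $y$ by at least $1$, because the square does not meet the top boundary.
\item Hence the domain corner, which you single out as the main obstacle, needs no separate treatment. A bad $\epsilon$-approximate fixed point must have $x\le 0.01$, and there $f_y\ge \tfrac{1-h(0.01)}{2}>0.45$. So the vertical displacement exceeds $\epsilon$ for every $y\in[0,1]$.
\end{itemize}
Your ``one coordinate or the other'' split can only succeed through $y$. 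On the strip, $f_x=x(\cdots)$ with a bracket that stays close to $5$, so the min player pushes into the left wall and the horizontal displacement is just $x\le\epsilon$. No split on $y$ is needed. What the proof requires is identifying that the vertical coordinate carries the contradiction uniformly on the strip, and proving the lower bound on $f_y$ that makes this work.
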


\begin{figure}[!ht]
    \centering
    \includegraphics[width=0.3\linewidth]{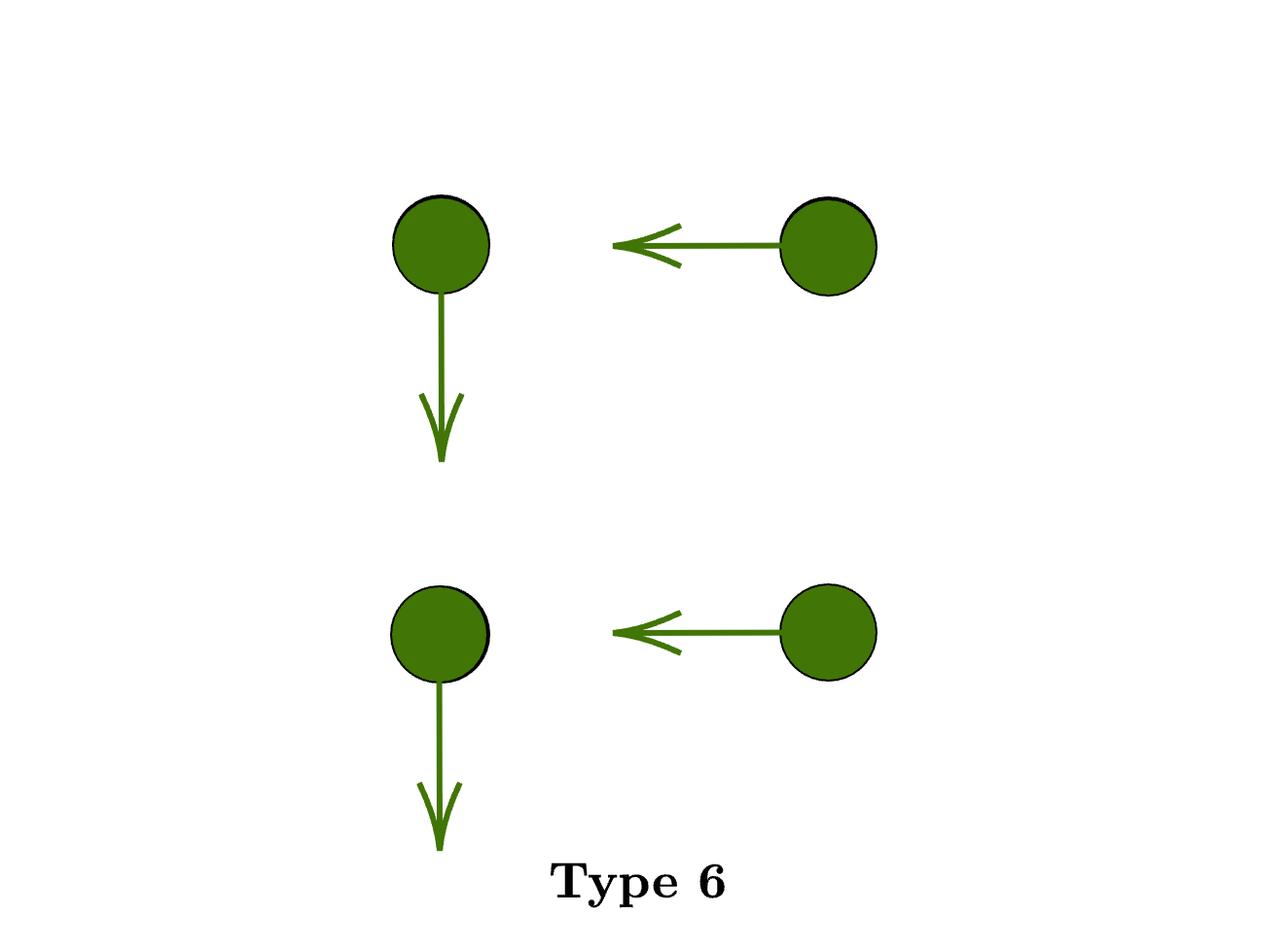}
    \caption{Type 6, Group 2}
    \label{fig:type6}
\end{figure}

Type $6$ squares, shown in \Cref{fig:type6}, meet the left boundary
and may also meet the bottom boundary, but do not meet the top boundary.

\begin{lemma}\label{l:4}
For a type $6$ square,
\[
\begin{aligned}
f_x(x,y)={}&x(6xy^3-9xy^2+3xy-4.5x-6y^3+9y^2-3y+5),\\
f_y(x,y)={}&6x^3y^2-6x^3y+x^3-9x^2y^2+9x^2y\\
          &-1.5x^2-3y^2+3y+0.5.
\end{aligned}
\]
\end{lemma}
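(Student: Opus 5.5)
The plan is to derive the formulas of \Cref{l:4} directly from the bicubic interpolation formula \eqref{eq:bicubic}–\eqref{eq:bicubic-coefficients}, in the same way as for \Cref{l:1,l:type1square,l:type2square,l:2,l:type4square,l:3}. Concretely, I would write the interpolant in Hermite form: with $p_0(t)=1-3t^2+2t^3$, $p_1(t)=3t^2-2t^3$, $q_0(t)=t-2t^2+t^3$ and $q_1(t)=t^3-t^2$, the coefficient formula is equivalent to
\[
f(x,y)=\sum_{a,b\in\{0,1\}}\Bigl(f(a,b)\,p_a(x)p_b(y)+f_x(a,b)\,q_a(x)p_b(y)+f_y(a,b)\,p_a(x)q_b(y)\Bigr),
\]
since all mixed derivatives $f_{xy}$ at the corners are set to $0$. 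Differentiating this expression in $x$ and in $y$ then gives $f_x$ and $f_y$ as explicit cubic polynomials in the local coordinates.

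\textbf{Reading off the corner data.} From \Cref{fig:type6} I would read the colour and arrow at each of the four corners and convert them into $f$, $f_x$ and $f_y$ values using \Cref{sec:function-values} and the rule $\delta=1/2$. The target formulas already pin down the gradient data, which serves as a consistency check on this reading:
\begin{itemize}
\item At $x=0$ the claimed $f_x$ vanishes and $f_y(0,0)=f_y(0,1)=\tfrac12$, so both left corners carry a downward arrow.
\item At $x=1$ we get $f_x(1,y)=\tfrac12$ and $f_y(1,0)=f_y(1,1)=0$, so both right corners carry a leftward arrow.
\end{itemize}

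\textbf{Handling the function values.} The only subtle ingredient is the corner values $f(a,b)$. They depend on the global position of the square, but only through the differences
\[
f(1,b)-f(0,b)\qquad\text{and}\qquad f(a,1)-f(a,0).
\]
Since a type $6$ square has a fixed colour pattern, each colour regime is affine with slopes $\pm1$. Hence these differences are absolute constants; this is why, unlike in \Cref{l:type2square,l:type4square}, no parameter such as $D$ appears. I would compute the two horizontal differences (for example, black-to-black gives a difference of $1$), and similarly the vertical ones.

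\textbf{Assembling the formulas.} Substituting the corner data gives the two partial derivatives:
\begin{itemize}
\item $f_x=\sum_b\bigl(f(1,b)-f(0,b)\bigr)p_1'(x)p_b(y)+\sum_{a,b}f_x(a,b)\,q_a'(x)p_b(y)$;
\item $f_y$ is given by the analogous expression with the roles of $x$ and $y$ exchanged.
\end{itemize}

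\textbf{Verification.} After expanding, I would check the result against \Cref{l:4} in two ways:
\begin{itemize}
\item evaluate it at the four corners and along the edges $x=0,1$ and $y=0,1$, where the Hermite basis collapses;
\item compare the remaining coefficients, for example of $x^2y^3$ and $xy^2$.
\end{itemize}

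\textbf{Main obstacle.} The main difficulty is purely bookkeeping: getting the correct corner colours from the figure, with the correct sign conventions (arrows denote $-\nabla f$), so that the constant value-differences come out right. A sign error there would change coefficients such as $-4.5x$ and $5$ in $f_x$. The edge checks above catch most such errors, because for example $f_x(1,y)\equiv\tfrac12$ forces a specific horizontal value difference.
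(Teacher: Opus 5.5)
Your route is the one the paper implicitly uses: the lemma is just the differentiated bicubic interpolant with the corner data of \Cref{fig:type6}. Your Hermite form of $f$ is correct, but the expression you assemble for $f_x$ is not.

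Differentiating the summand $f_y(a,b)\,p_a(x)q_b(y)$ in $x$ produces the cross term $\sum_{a,b}f_y(a,b)\,p_a'(x)q_b(y)=\sum_b\bigl(f_y(1,b)-f_y(0,b)\bigr)p_1'(x)q_b(y)$, and your formula drops it. In a type $6$ square, $f_y=\tfrac12$ at the left corners and $f_y=0$ at the right corners, and $q_0(y)+q_1(y)=y(1-y)(1-2y)$. So this term equals $-3x(1-x)\,y(1-y)(1-2y)$, which is exactly the $y$-dependent part $6x^2y^3-9x^2y^2+3x^2y-6xy^3+9xy^2-3xy$ of the claimed $f_x$. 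Following your formula you would get $f_x=5x-\tfrac92x^2$, so the lemma would not come out.

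Your edge checks cannot catch this, because the missing term vanishes on all four edges $x\in\{0,1\}$, $y\in\{0,1\}$. Only your interior coefficient comparison (e.g.\ $6$ versus $0$ for $x^2y^3$) would expose the discrepancy. Your ``roles exchanged'' expression for $f_y$ has the same omission, namely $\sum_a\bigl(f_x(a,1)-f_x(a,0)\bigr)q_a(x)p_1'(y)$. That one is harmless here, because $f_x(a,0)=f_x(a,1)$ for $a=0,1$.

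Separately, your explanation of why no parameter appears is not right. Types $2$, $4$ and $7$ also have fixed colour patterns, yet their value differences depend on position because adjacent corners carry different colours. What is actually needed is that all four corners of a type $6$ square are green. With the correct interpolant, the stated formulas force $f(1,b)-f(0,b)=1$ and $f(a,1)-f(a,0)=1$. A vertical difference of $+1$ between adjacent grid points occurs only inside the green regime $x+y$: black and blue give $-1$, and differences between distinct colours are position-dependent and of order $N$. With green corners, down arrows on the left, left arrows on the right, and the cross term restored, the computation gives both formulas of \Cref{l:4}.
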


Writing $h(x)=3x^2-2x^3$, \Cref{l:4} gives
\[
f_y(x,y)=3(1+h(x))y(1-y)+\frac{1-h(x)}2\ge0.
\]
Thus projection at the bottom boundary cannot be active, while
projection at the top boundary would give a displacement of at least
$1$. A bad $\epsilon$-approximate fixed point would therefore require
projection at the left boundary, forcing $x\le0.01$. For these points,
\[
f_y(x,y)\ge-6x^3y-9x^2y^2-1.5x^2+0.5\ge0.45.
\]
Since the square does not meet the top boundary, the projected
vertical displacement is at least $0.45>\epsilon$.

\subsection{Type 7 squares}
\begin{lemma}
A type $7$ square contains no $\epsilon$-approximate fixed point of
$\mathsf{GDA}$.
\end{lemma}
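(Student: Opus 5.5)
The plan follows the template of the preceding Type $0$--$6$ cases. First, read off the Type $7$ configuration from its figure: the colors and arrows at the four corners. Then compute the bicubic interpolant with the coefficient formula \eqref{eq:bicubic-coefficients}, using corner gradients of length $\delta=1/2$ and zero mixed derivatives. This yields an explicit polynomial identity for $f_x$ and $f_y$ in local coordinates $(x,y)\in[0,1]^2$, analogous to \Cref{l:3} and \Cref{l:4}.

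As in the Type $0$ and Type $6$ cases, I would rewrite the relevant partial derivative in the form
\[
\pm 3\bigl(1+h(x)\bigr)y(1-y)\pm\tfrac{1-h(x)}{2}+(\text{value-gap term})\,y(1-y),
\]
with $h(x)=3x^2-2x^3$. The value-gap coefficient, for instance $6D+3$ with $D$ a difference of corner values, should be shown to have a definite sign using the color regimes of \Cref{sec:function-values}; the gap between colors is at least $10$.

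Since Type $7$ meets only the left boundary, the earlier reduction applies. Good points are already excluded by the group analysis of \Cref{sec:proof1}, and a bad $\epsilon$-approximate fixed point must have projection active at the left boundary, hence $x\le 0.01$. Note that $1-h(0.01)\approx 0.9997$. For such $x$, I expect one of two situations.

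\begin{enumerate}
\item The vertical derivative has fixed sign with $|f_y|\ge 0.45$ uniformly. Since the square does not touch the boundary in that vertical direction, the projected vertical displacement exceeds $\epsilon$.
\item The derivatives change roles across $y=1/2$, as in Types $3$ and $4$. Then I split into $y\le 1/2$ and $y\ge 1/2$, bounding $|f_x|$ (toward the right, where at least $0.99$ of room remains) in one half and $|f_y|$ (toward the open vertical side, where at least $1/2$ remains) in the other.
\end{enumerate}

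In either situation the small cubic terms in $x$ contribute at most about $30x^2\le 0.003$ plus lower-order terms, so each bound is roughly $0.5-y^3$-type quantity minus $0.06$, which stays above $0.01$.

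The main obstacle is the symbolic step of obtaining the correct interpolant for Type $7$ and identifying which coordinate to use on which part of the square. If the vertical derivative vanishes somewhere near the left edge, one must check that the horizontal derivative is then bounded away from zero with the right sign, pointing into the domain and not toward $x=0$. I would handle this by checking the corner values first and then bounding the cubic in $y$ at $x=0$ exactly, treating $x\le0.01$ as a perturbation.
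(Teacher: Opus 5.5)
Your framework is the right one, and your situation (1) is exactly the argument the paper uses. As written, however, the proposal does not prove the lemma, because the only substantive step is left undone. You never compute the Type $7$ interpolant, and you leave open which of your two situations occurs, so no inequality is established.

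The missing computation is short. In a Type $7$ square the left corners are green, with $f(0,1)=f(0,0)+1$, $f_x=0$, $f_y=\tfrac12$. The right corners are black, with $f(1,1)=f(1,0)-1$, $f_x=\tfrac12$, $f_y=0$. The $f_x$ data contribute nothing to $f_y$, since $f_x$ agrees at the two corners of each vertical edge. The tensor-product Hermite form (zero mixed derivatives) with $h(x)=3x^2-2x^3$ then gives
\[
\begin{aligned}
f_y(x,y)&=6y(1-y)\bigl(1-2h(x)\bigr)+\tfrac12\bigl(1-h(x)\bigr)\bigl(1-6y(1-y)\bigr)\\
&=3\bigl(1-3h(x)\bigr)y(1-y)+\tfrac12\bigl(1-h(x)\bigr),
\end{aligned}
\]
which is exactly the polynomial of \Cref{l:5}. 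Contrary to your template, no value-gap term appears in $f_y$. The vertical differences along both edges are fixed at $\pm1$ by the color regimes, and the large gap $c-b$ enters only $f_x$. For $x\le0.01$ one has $1-3h(x)>0$, so $f_y\ge\tfrac12\bigl(1-h(0.01)\bigr)>0.49$; the paper gets $f_y\ge0.45$ by dropping nonnegative terms. Since the square does not meet the top boundary, the ascent step then moves $y$ by more than $\epsilon$.

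The hedge between your two situations is not harmless, because your fallback could not work here. The left corners carry vertical arrows, so $f_x(0,y)\equiv0$. Moreover $f_x$ contains the term $6(c-b)x(1-x)$ with $c-b\ge N+10$ (black values exceed green ones), so $f_x\ge0$ for $x\le0.01$. Thus the descent step for $x$ points into the wall $x=0$, projection is active, and the horizontal displacement is just $x\le\epsilon$. Situation (2), and your plan to fall back on $f_x$ wherever $f_y$ might vanish, are therefore unavailable for Type $7$. The lemma holds only because $f_y$ is uniformly bounded away from zero along the left edge, and that must be shown by the explicit computation rather than anticipated.
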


\begin{figure}[!ht]
    \centering
    \includegraphics[width=0.3\linewidth]{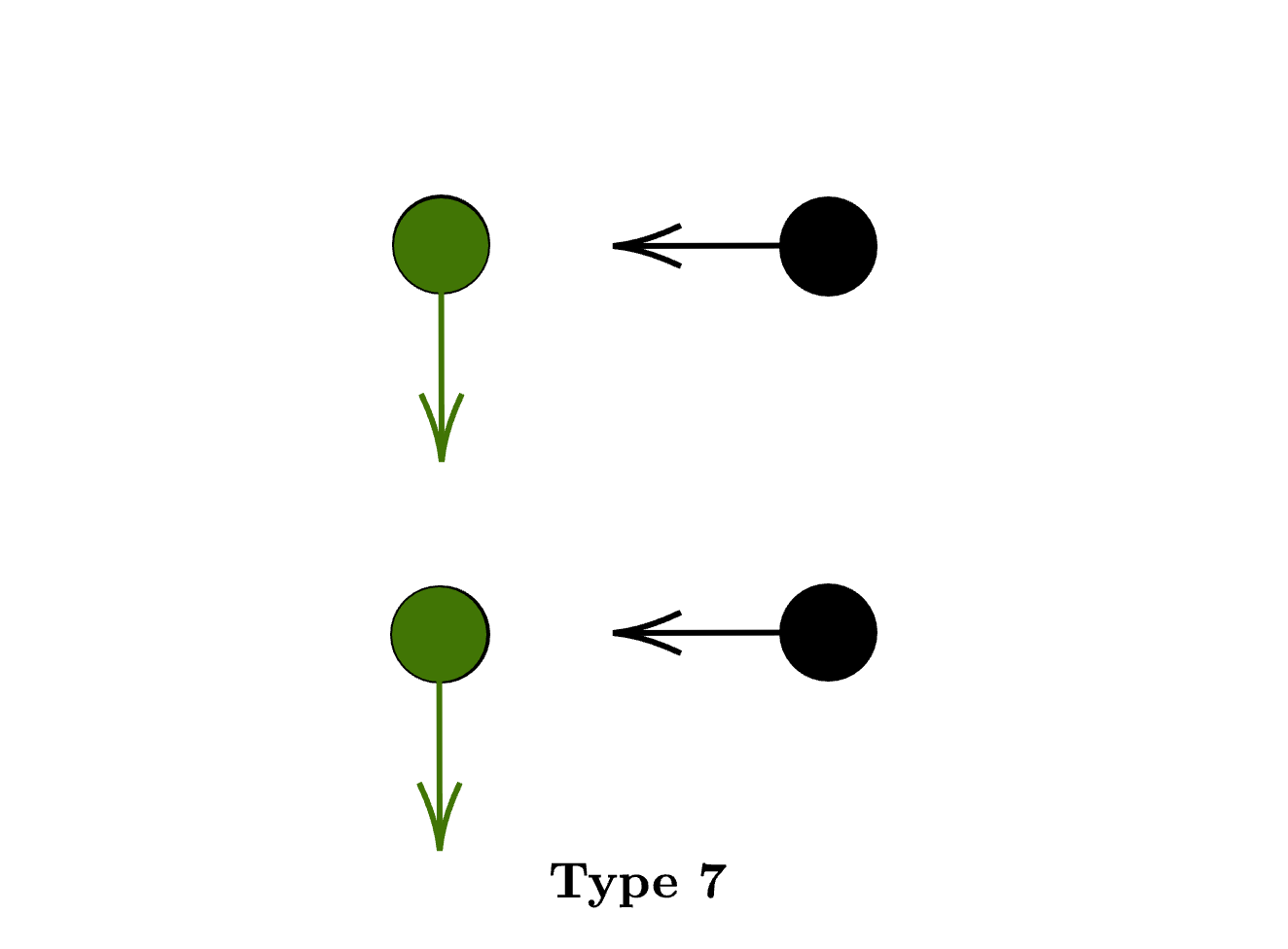}
    \caption{Type 7, Group 2}
    \label{fig:type7}
\end{figure}

Type $7$ squares, shown in \Cref{fig:type7}, meet the left boundary
but not the top boundary. It suffices to consider $x\le0.01$.

\begin{lemma}\label{l:5}
For a type $7$ square, write $b=f(0,0)$ and $c=f(1,0)$ for the values
at the green bottom-left and black bottom-right corners, respectively.
Then $f(0,1)=b+1$, $f(1,1)=c-1$, and
\[
\begin{aligned}
f_x(x,y)={}&x\bigl(-18xy^3+27xy^2+3xy+x(6b-6c+1.5)\\
          &\hspace{2.2em}+18y^3-27y^2-3y-6b+6c-1\bigr),\\
f_y(x,y)={}&-18x^3y^2+18x^3y+x^3+27x^2y^2-27x^2y\\
          &-1.5x^2-3y^2+3y+0.5.
\end{aligned}
\]
\end{lemma}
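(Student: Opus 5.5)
The plan is a direct computation with the bicubic interpolation formula \eqref{eq:bicubic}--\eqref{eq:bicubic-coefficients}. Everything is pinned down by the corner data of a type $7$ square, so the first step is to read that data off \Cref{fig:type7} and the value regimes of \Cref{sec:function-values}.

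\textbf{Corner values.} The left corners are green, so $f=x+y$ there globally. Going up one unit therefore gives $f(0,1)=f(0,0)+1=b+1$. The right corners are black, so $f=x-y+4N+10$ there. Going up one unit gives $f(1,1)=f(1,0)-1=c-1$. This proves the first two identities of the lemma.

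\textbf{Corner derivatives.} The arrows, with multiplier $\delta=1/2$ and all mixed derivatives $f_{xy}=0$, should be as follows. Both left corners carry a downward arrow, which with the paper's convention (an upward arrow gives $f_y=-\delta$) means $f_x=0$ and $f_y=+\tfrac12$. Both right corners carry a leftward arrow, so $f_x=\tfrac12$ and $f_y=0$. Before computing, I will cross-check this reading against the figure and against consistency with the values: $f$ increases upward on the green side and decreases leftward on the black side.

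\textbf{Interpolant.} Rather than multiplying out the $4\times4$ matrices, I will use the equivalent tensor-product Hermite form. Let $h_0(t)=1-3t^2+2t^3$, $h_1(t)=3t^2-2t^3$, $g_0(t)=t-2t^2+t^3$ and $g_1(t)=t^3-t^2$. Then
\[
f(x,y)=\sum_{i,j\in\{0,1\}}\Bigl(f(i,j)\,h_i(x)h_j(y)+f_x(i,j)\,g_i(x)h_j(y)+f_y(i,j)\,h_i(x)g_j(y)\Bigr),
\]
since the $f_{xy}$ terms vanish. This is exactly what \eqref{eq:bicubic-coefficients} encodes. Substituting the corner data gives
\[
f=b\,h_0(x)h_0(y)+(b+1)h_0(x)h_1(y)+c\,h_1(x)h_0(y)+(c-1)h_1(x)h_1(y)+\tfrac12 g_1(x)+\tfrac12 h_0(x)\bigl(g_0(y)+g_1(y)\bigr),
\]
where I used $h_0(y)+h_1(y)=1$. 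Differentiating in $x$ and in $y$ and collecting monomials should reproduce the two displayed expressions.

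\textbf{Sanity checks.} I will verify the result on the edges.
\begin{itemize}
\item On $x=0$: $f_x\equiv0$ and $f_y=-3y^2+3y+\tfrac12$, which matches the left-edge data.
\item On $x=1$: the $b,c$ terms cancel, leaving $f_x\equiv\tfrac12$ and $f_y=6y^2-6y$, which vanishes at both right corners.
\end{itemize}
The factorization of $f_x$ with an overall factor $x$ follows because $h_0'(0)=h_1'(0)=g_1'(0)=0$.

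\textbf{Main obstacle.} This is the correct reading of the gadget's corner colors and arrows from the figure; the rest is mechanical polynomial algebra, which I would check with the edge evaluations above or symbolically.
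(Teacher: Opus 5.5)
Your proposal is correct and carries out the same direct bicubic-interpolation computation the paper relies on: your tensor-product Hermite form is exactly the factorization of \eqref{eq:bicubic-coefficients}, and expanding it does give $f_x=6x(1-x)\bigl(c-b+3y^3-\tfrac92y^2-\tfrac12y\bigr)+\tfrac32x^2-x$ together with the stated $f_y$. Your reading of the corner data is also forced by the lemma itself, since the stated formulas give $f_x=0,\ f_y=\tfrac12$ at the left corners and $f_x=\tfrac12,\ f_y=0$ at the right corners, which are precisely the arrows you assumed.
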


By \Cref{l:5}, for $x\le0.01$,
\[
f_y(x,y)\ge-18x^3y^2-27x^2y-1.5x^2+0.5\ge0.45.
\]
Since the square does not meet the top boundary, the projected
vertical displacement is at least $0.45>\epsilon$.

The eight cases exclude bad $\epsilon$-approximate fixed points
throughout the outer layer and complete the proof of
\Cref{lemma:sinks-boundary}.
{
\begin{remark}
    We remark that the the unit square $[1,2] \times [0,1]$ of Figure 4(a) has
green left corners, black right corners, and four left-pointing arrows. This does not belong to one of the above types but it is easy to check directly that it does not contain an approximate GDA fixed point.
\end{remark}
}
\newpage
}

\end{document}